
\documentclass[12pt,fullpage]{article}

\usepackage{float}
\usepackage{caption}
\usepackage{booktabs}
\usepackage{ltablex}
\usepackage{adjustbox}

\usepackage{microtype}
\usepackage{fullpage}
\usepackage{setspace}
\usepackage{natbib}
\usepackage{hyperref}
\usepackage{framed}
\usepackage{bm}
\usepackage{blkarray}
\newcommand{\matindex}[1]{\mbox{\scriptsize#1}}
\usepackage{arydshln}
\usepackage{mleftright}
\usepackage{kbordermatrix}

\usepackage{array,xspace,hhline,tikz,colortbl,tabularx,booktabs,fixltx2e,amsmath,amssymb,amsfonts,amsthm}
\usepackage{verbatim,ifthen}
\usepackage{pifont}
\usepackage{ifthen}
\usepackage{pgflibraryshapes}

\usepackage{nicefrac}
\usetikzlibrary{arrows}
\usepackage{ltxtable}
\usepackage{longtable}

	\usepackage{varioref}

\definecolor{light-gray}{gray}{0.9}

\newtheorem{definition}{Definition}%




	\newtheorem{lemma}{Lemma}%
	\newtheorem{proposition}{Proposition}%
	\newtheorem{example}{Example}
	\newtheorem{remark}{Remark}




	\newcommand{\midd}{\mathbin{:}}

\newcommand{\sd}[0]{\ensuremath{\mathit{sd}}}

\onehalfspacing

\title{\textbf{Random Matching under Priorities:\\Stability and No Envy Concepts}\thanks{This work has been partly supported by COST Action IC1205 on Computational Social Choice. The authors thank Panos Protopapas, Madhav Raghavan, Jan Christoph Schlegel, and Zhaohong Sun for their feedback and comments. We particularly thank Battal Do\u{g}an for detailed comments and for providing an example that helped establish one of our results (Proposition~\ref{prop:weakfractional-notto-weakexpost}).}}

\author{Haris Aziz\thanks{\textit{e-mail}: \href{mailto:haris.aziz@data61.csiro.au}{haris.aziz@data61.csiro.au}; Data61, CSIRO, and UNSW, Computer Science and Engineering, Building K17, Sydney NSW 2052, Australia. Haris Aziz is supported by a Julius Career Award.} \and Bettina Klaus\thanks{\emph{Corresponding author}, \textit{e-mail}: \href{mailto:Bettina.Klaus@unil.ch}{Bettina.Klaus@unil.ch}; Faculty of Business and Economics, University of Lausanne, Internef, CH-1015 Lausanne, Switzerland. Bettina Klaus gratefully acknowledges financial support from the Swiss National Science Foundation (SNFS). Furthermore, Bettina Klaus  gratefully acknowledge the hospitality of Stanford University where part of this paper was written.}}


\date{}
\begin{document}

\sloppy

\maketitle

\begin{abstract}
\noindent We consider stability concepts for random matchings where agents have preferences over objects and objects have priorities for the agents.
When matchings are deterministic, the standard stability concept also captures the fairness property of no (justified) envy. When matchings can be random, there are a number of natural stability / fairness concepts that coincide with stability / no envy whenever matchings are deterministic.
We formalize known stability concepts for random matchings for a general setting that allows weak preferences and weak priorities, unacceptability, and an unequal number of agents and objects. We then present a clear taxonomy of the stability concepts and identify logical relations between them.
Furthermore, we provide no envy / claims interpretations for some of the stability concepts that are based on a consumption process interpretation of random matchings. Finally, we present a transformation from the most general setting to the most restricted setting, and show how almost all our stability concepts are preserved by that transformation.\smallskip

\noindent \textit{JEL} Classification Numbers: C63, C70, C71, and C78.\smallskip

\noindent \emph{Keywords}: Matching Theory; Stability Concepts; Fairness; Random Matching.\bigskip
\end{abstract}

\section{Introduction}

We consider a model of matching agents to objects in which agents have preferences over objects and objects have priorities for the agents. This general model has many applications, e.g., for school choice \citep{AbSo03b}: in Boston \citep*{APR05b} and New York \citep*{APR05a}, centralized matching schemes are employed to assign students to schools on the basis of students' preferences over schools and students' priorities to be admitted to any given school.  A school's priority for a student might include issues such as geographical proximity and whether the student has a sibling at the school already, among others. Two surveys on school choice can be found in \citet{Abd13} and \citet{Pat11}.\vspace{-0.2cm}

\paragraph{Stability and no envy:} For the most basic model we discuss in Section~\ref{section:model}, the fundamental stability concern is the following: no agent $i$ should prefer an object $o$ matched to another agent $j$ who has lower priority for the object than $i$. In school choice, this notion of stability can be interpreted as the elimination of justified envy \citep{AbdulkadirogluSonmezAER2003,BS99}: a student can justifiably envy the match of another student to a school if he likes that school better than his own match and he has a higher priority (with a lower priority, envy might be present as well but is not justifiable). For the most general model we discuss in Section~\ref{subsection:weakdifferentnumbers}, (weak) stability is equivalent to individual rationality, non-wastefulness, and no justified envy. To simplify language, we from now on will refer to no justified envy simply as no envy. While the important role of stability in matching problems has long been recognized,\footnote{See, e.g., the advanced information document for the Sveriges Riksbank Prize in Economic Sciences in Memory of Alfred Nobel 2012 awarded to A.E.\ Roth and L.S.\ Shapley ``for the theory of stable allocations and the practice of market design'' \href{http://www.nobelprize.org/nobel_prizes/economic-sciences/laureates/2012/advanced-economicsciences2012.pdf}{URL: Advanced Information ``Nobel Prize 2012''}.} no envy, which is a relaxation of stability, has only recently gained independent interest. In particular, no envy has been considered in constrained matching models \citep*{EhlersHalafirYenmezYildirimJET2014,KamadaKojimaJET2016} and in senior level labor markets \citep*{BlumRothRothblumJET1997} and shown to have similar structural properties as stability \citep{WuRothWP2017}. The well-known deferred-acceptance algorithm \citep{GS62} computes a deterministic matching that is (weakly) stable and hence envy free.\vspace{-0.2cm}

\paragraph{Random and fractional matchings:} Most articles on school choice and similar models have considered deterministic matchings.
Instead, we consider random matchings that specify the probability of each agent being matched to the various objects.
Random matchings are useful to consider for several reasons.
Firstly, randomization allows for a much richer space of possible outcomes and may be essential to achieve fairness properties such as anonymity\footnote{An \emph{anonymous} mechanism does not depend on the names of the agents.} and (ex-ante) equal-treatment-of-equals.\footnote{A mechanism satisfies \emph{(ex-ante) equal-treatment-of-equals} if two equal agents (with the same preferences) receive the same (ex-ante) allocation.} It thus allows for a richer set of mechanisms with the possibility of better properties: as pointed out by \citet{KeUn15a}, a broader view of fairness has largely been ignored in prior work.
Secondly, the framework of random matchings also helps to reason about fractional matchings that capture time sharing arrangements \citep*{RothetalMOR93,TeSe98a,DoYi16a}. For example, an agent may allocate his time among several of his matches rather than exclusively being matched to a single object. Mathematically, we then simply consider the probability of an agent getting an object as the match of a corresponding fraction of time to the object.
Thirdly, randomization has proved to be useful to circumvent impossibility results in social choice~\citep{BoMo01a,Bran13a,DNS06a,Gibb77a}.

Whereas particular stability concepts for random / fractional matchings have been introduced and studied in various papers, the picture of how exactly they relate to each other and how their formulations change for various models (allowing for indifferences, unacceptability, and a different number of agents / objects) has, to the best of our knowledge, not been studied until now. This gap in the literature is especially important to address with the renewed interest in recent years in random matching mechanisms.\vspace{-0.2cm}

\paragraph{Overview of the article:}  We consider some existing stability concepts (\textit{ex-post and fractional stability}, \citeauthor{RothetalMOR93}, \citeyear{RothetalMOR93}, and \citeauthor{TeSe98a}, \citeyear{TeSe98a}; \textit{ex-ante / strong stability}, \citeauthor{RothetalMOR93}, \citeyear{RothetalMOR93}, and \citeauthor{KeUn15a}, \citeyear{KeUn15a}; and \textit{claimwise stability}, \citeauthor{Afac15a}, \citeyear{Afac15a}) and also propose a new one, \textit{robust ex-post stability},  that is nested between ex-ante stability and ex-post stability. Many of the concepts have been defined and then subsequently studied only for restricted settings that use one, some, or all of the following restrictions: (1) preferences are strict, (2) priorities are strict, (3) there is an equal number of agents and objects, (4) all objects and agents are acceptable to each other. We generalize all the stability concepts mentioned above to the general random matching setting that allows for indifferences in preferences and priorities, and allows for unacceptability as well as for an unequal number of agents and objects. The general setting includes as a special case the \emph{hospital-resident setting} in which hospitals have multiple positions but residents are indifferent among all such positions at the same hospital; another example is the previously mentioned \emph{school choice setting}. The general model and our insights into the corresponding stability notions will provide a crucial stepping stone for further work on axiomatic, algorithmic, and market design aspects of random stable matching.

In particular, we present a taxonomy of the stability concepts for random matching of objects when objects have priorities for agents. Our study helps clarify the relations between the different stability / fairness concepts mentioned above. This taxonomy also points the market designer to consider a scale of criteria of different ``stability-strengths'' while designing mechanisms, which additionally could also satisfy other properties (e.g., efficiency or strategic robustness).
Furthermore, we provide no envy / claims interpretations for some of the stability concepts that are based on a consumption process interpretation of random matching. Finally, we present a transformation from the most general setting (without any restrictions) to the most restricted setting (with restrictions (1) -- (4)), and show how almost all our stability concepts are preserved by that transformation.\vspace{-0.2cm}

\paragraph{The article proceeds as follows.} In Section~\ref{section:model} we introduce the base model in which: (1) preferences are strict, (2) priorities are strict, (3) there is an equal number of agents and objects, (4) all objects and agents are acceptable to each other. For this model, stability and no envy coincide. We introduce the random stability concepts ex-ante stability, robust ex-post stability, ex-post stability, fractional stability, and claimwise stability and provide  consumption process  interpretations for fractional and claimwise stability that are based on specific envy / claim notions. We discuss the convexity of the stability concepts for random matchings and present a complete taxonomy of the stability concepts for our base model (see Figure~\ref{fig:part-relations}).

Then, we extend the base model in two ways. First, in Section~\ref{subsection:weakprefpri}, we drop model assumptions (1) and (2) and allow preferences and priorities to be weak. The switch from strict preferences / priorities to weak ones requires various adjustments in definitions and the consumption process interpretation, but once these adjustments are made, results change very little (see Figure~\ref{fig:weakpart-relations}). Second, we drop model assumptions (3) and (4) and [allow for an unequal number of agents and objects] and [that agents / objects find some objects / agents unacceptable]. With this change, we add the well-known criteria of non-wastefulness and individual rationality: for the general model that is considered now, (weak) stability is equivalent with no envy, non-wastefulness, and individual rationality. We then formalize all stability concepts with the appropriate additional requirements of non-wastefulness and/or individual rationality when necessary to preserve the hierarchy we established in the base model. We use a transformation from the most general setting to the most restricted setting for random matchings to show how almost all our stability concepts are preserved by that transformation\footnote{With one exception: due to some lack of symmetry in the definition of claimwise (weak) stability, full equivalence under the transformation fails (Proposition~\ref{Prop:weakclaimwisestableiffclaimwisestable} and Example~\ref{examle:weakclaimiseone-sided}).} and to establish a complete taxonomy of stability concepts for the general model (see Figure~\ref{fig:weakassociated-random}).

\section{The base model}\label{section:model}

Let $N$ be a set of $n$ \textbf{\textit{agents}} and $O$ be a set of $n$ \textbf{\textit{objects}}. Each agent $i\in N$ has \textbf{\textit{preferences}} $\succ_i$ over $O$ and each object $o\in O$ has \textbf{\textit{priorities}} $\succ_o$ for $N$ (although using the same notation, the reason we use the term priorities instead of the term preferences is that objects are not considered as economic agents in our model). Agents' preferences are strict orders over $O$ and objects' priorities are strict orders over $N$.\medskip

We'll explain later on how the model and results extend when preferences and priorities can be weak (Section~\ref{subsection:weakprefpri}) and when allowing for unacceptability as well as different numbers of agents and objects (Section~\ref{subsection:weakdifferentnumbers}).\medskip

A \textbf{\textit{random matching}} $p$ is a bistochastic $n\times n$ matrix $[p(i,o)]_{i\in N,o\in O}$, i.e.,
\begin{equation}\label{LE1}\mbox{for each pair }(i,o)\in N\times O,\ p(i,o)\geq 0,\end{equation}
\begin{equation}\label{LE2}\mbox{for each }i\in N,\ \sum_{o\in O}p(i,o)=1,\mbox{ and }\end{equation}
\begin{equation}\label{LE3}\mbox{for each }o\in O,\ \sum_{i\in N}p(i,o)=1.\end{equation}
Random matchings are often also referred to as \textit{fractional matchings} \citep{RothetalMOR93,TeSe98a}.
For each pair $(i,o)\in N\times O$, the value $p(i,o)$ represents the probability of object $o$ being matched to agent $i$ and agent $i$'s \textbf{\textit{match}} is the probability vector $p(i)=(p(i,o))_{o\in O}$.
A random matching $p$ is \textbf{\textit{deterministic}} if for each pair $(i,o)\in N\times O$, $ p(i,o) \in \{0,1\}$. Alternatively, a deterministic matching is an integer solution to linear inequalities (\ref{LE1}), (\ref{LE2}), and (\ref{LE3}).\medskip

By \cite{Birkhoff1946} and \cite{Neumann1953}, each random matching can be represented as a convex combination of deterministic matchings: a \textbf{\emph{decomposition}} of a random matching $p$ into deterministic matchings $P_j$ ($j\in \{1,\ldots,k\}$) equals a sum $p=\sum_{j=1}^k \lambda_jP_j$ such that for each $j\in \{1,\ldots,k\}$, $\lambda_j\in (0,1]$ and $\sum_{j=1}^k\lambda_j=1$.

\subsection{Stability concepts}\label{section:stability}

\begin{definition}[\textbf{No envy / stability for deterministic matchings}]
\normalfont A deterministic matching $p$ has \emph{\textbf{no envy}} or is \textbf{\emph{stable}} if there exists no agent $i$ who is matched to object $o'$ but prefers object $o$ while object $o$ is matched to some agent $j$ with lower priority than $i$, i.e., there exist no $i,j\in N$ and no $o,o'\in O$ such that $p(i,o')=1$, $p(j,o)=1$, $o\succ_i o'$, and $i\succ_{o}j$.\label{def:stability}
\end{definition}

\emph{Stability} was first introduced for two-sided matching markets by \citet{GS62}. The terminology of \emph{no justified envy} is usually used in the context of the so-called school choice model \citep{BS99,AbdulkadirogluSonmezAER2003}. Note that we use the shorter expression no envy for the somewhat more precise no justified envy \citep[see also][]{WuRothWP2017}.\medskip

A deterministic matching $p$ is stable if and only if it satisfies the following inequalities \citep{RothetalMOR93}:\footnote{\citet{RothetalMOR93} consider a more general model that lies between the models we discuss in Section~\ref{subsection:weakprefpri} and Section~\ref{subsection:weakdifferentnumbers}. We here use the restriction of their original inequalities to our base model.} for each pair $(i,o)\in N\times O$,
\begin{equation}\label{LE4}
p(i,o)+\sum_{o':o'\succ_i o}p(i,o')+ \sum_{j:j\succ_{o}i}p(j,o)\geq 1.
\end{equation}

The well-known deferred-acceptance algorithm \citep{GS62} computes a deterministic matching that is stable.\medskip

We now define five stability concepts for random matchings that all coincide with deterministic stability when the random matching is deterministic.\medskip

The first stability concept for random matchings we consider was discussed by \citet{RothetalMOR93} under the name of \textit{strong stability} for the marriage market matching model. Recently, for a school choice model, \citet{KeUn15a} obtained the same stability concept by extending no envy from matched whole objects to matched probability shares of objects; the intuition here is that a higher priority agent $i$ envies a lower priority agent $j$ for any probability share of object $o$ that agent $j$ has if he would like to get a higher probability of it himself. We will discuss and prove later in this section that \citet{AhFl03a} introduced a stability concept for a more general model of so-called hypergraphic preference systems that coincides with ex-ante stability.

\begin{definition}[\textbf{No ex-ante envy / ex-ante stability}]
\normalfont	A random matching $p$ has \emph{\textbf{no ex-ante envy}} or is \emph{\textbf{ex-ante stable}} if there exists no agent $i$ who is matched with positive probability to object $o'$ but prefers object $o$ to object $o'$ while object $o$ is matched with positive probability to some agent $j$ with lower priority than $i$, i.e., there exist no $i,j\in N$ and no $o,o'\in O$ such that $p(i,o')>0$, $p(j,o)>0$, $o\succ_i o'$, and $i\succ_oj$.\label{def:ex-antestability}
\end{definition}

Although normatively appealing, the notion of ex-ante stability is demanding. It follows from \citet[][Corollary~21]{RothetalMOR93} that each agent can receive probability shares of, at most, two objects and vice versa, each object is assigned with positive probability to at most two agents. In other words, an ex-ante stable random matching is almost deterministic. \citet{SchlegelLotteries16} generalizes this result to the more general set-up with quotas and priority ties as follows: ex-ante stable random matchings have small support, meaning that only few agent-object pairs have a positive probability of being matched. The number of pairs in the support depends on how many indifferences in the priorities the random matching exploits. In the extreme case where no object is matched with positive probability to two equal priority agents, the probability distribution is almost deterministic. Otherwise, the size of the support is completely determined by the size of the lowest priority classes at which agents are matched to the respective objects. This result can be interpreted as an impossibility result: with ex-ante stability one cannot go much beyond randomly breaking ties and implementing a (deterministically) stable matching with respect to the broken ties.\medskip
	
The second stability concept for random matchings we consider is \emph{ex-post stability}.

\begin{definition}[\textbf{Ex-post stability}]
\normalfont	A random matching $p$ is \emph{\textbf{ex-post stable}} if it can be decomposed into deterministic stable matchings.\label{def:expoststability}
\end{definition}

For a one-to-one marriage market setup, \cite{DoYi16a} show that for each ex-post stable random matching, there is a utility profile consistent with the ordinal preferences such that no group of agents consisting of equal numbers of men and women can deviate to a random matching among themselves and make each member better off in an expected utility sense. For real world applications, ex-post stability has the desirable feature that a deterministic stable matching can be drawn from the existing probability distribution and be implemented. Various authors \citep{Vand89a,Rothblum1992,RothetalMOR93} proved  that ex-post stability is in fact characterized by inequalities (\ref{LE1}), (\ref{LE2}), (\ref{LE3}), and (\ref{LE4}): the extreme points of the polytope defined by these linear inequalities are exactly the (incidence vectors of the) deterministically stable matchings. A random matching is hence ex-post stable if it is a (not necessarily integer) solution to the linear inequalities. However, the fact that an ex-post stable matching is also a solution to a system of inequalities and vice versa is not a trivial result and one can use the inequalities (\ref{LE1}), (\ref{LE2}), (\ref{LE3}), and (\ref{LE4}) to define a separate stability concept. This leads to our third stability concept for random matchings, \emph{fractional stability}.

\begin{definition}[\textbf{Fractional stability and violations of fractional stability}]
\normalfont	A random matching $p$ is \textit{\textbf{fractionally stable}} if for each pair $(i,o)\in N\times O$,
\begin{equation}\tag{\ref{LE4}}
p(i,o)+\sum_{o':o'\succ_i o}p(i,o')+ \sum_{j:j\succ_{o}i}p(j,o)\geq 1,\footnote{If inequalities (\ref{LE4}) hold, \citet{RothetalMOR93} have also shown that for each pair $(i,o)\in N\times O$, $p(i,o)>0$ implies $p(i,o)+\sum_{o':o'\succ_i o}p(i,o')+ \sum_{j:j\succ_{o}i}p(j,o)= 1$.}
\end{equation}or more compactly,
\begin{equation}\label{LE5}
\sum_{o':o'\succ_i o}p(i,o') \geq  \sum_{j:j\prec_o i}p(j,o).
\end{equation}
A \textit{\textbf{violation of fractional stability} occurs if there exists a} pair $(i,o)\in N\times O$ such that
\begin{equation}\label{LE6}
\sum_{j:j\prec_o i}p(j,o)>\sum_{o
:o'\succ_i o}p(i,o').
\end{equation}\label{def:fractionalstability}
\end{definition}

We next explain fractional stability as a no envy notion. To this end, we first need to explain what we mean by the term \textit{consumption process}.

\paragraph{Consumption process:} an agent $i$'s match $p(i)$ can be obtained by the following (stepwise) \textit{consumption process}. We imagine that each object is represented by one unit of a homogeneously divisible pie that agents can consume and each agent wants exactly one unit of pie in total. The probability shares agents receive at $p$ are the fractions of the pies that the agents receive when they eat from one pie at a time, at equal speed, and in decreasing preference order, i.e., we imagine the match of each agent is the result of a consumption process at which they first consume the best object with positive probability share at $p$, then the second best object with positive probability share at $p$, etc. For example, consider an agent $i$ with $\succ_i: x\ y\ z$ and $p(i,x)=\frac{1}{8}$, $p(i,y)=\frac{1}{2}$, and $p(i,z)=\frac{3}{8}$. Then, in the consumption process, agent $i$ consumes first $\frac{1}{8}$ of object pie $x$, second $\frac{1}{2}$ of object pie $y$, and third $\frac{3}{8}$ of object pie $z$.\bigskip

Inequality (\ref{LE6}) implies $\sum_{o':o'\succsim_i o}p(i,o')<1$, i.e., agent $i$ receives some fraction of an object in his strict lower contour set at $o$ (if not, this would imply that $\sum_{o':o'\succ_i o}p(i,o')+p(i,o)=1$ and hence, $\sum_{j:j\prec_o i}p(j,o)+p(i,o)>1$; a contradiction to feasibility). Thus, agent $i$ would want to consume more of object $o$. Inequality (\ref{LE6}) also implies $\sum_{j:j\succsim_o i}p(j,o)<1$, i.e., object $o$ receives some fraction of an agent in its strict lower contour set at $i$. Thus, object $o$ would want to consume more of agent $i$.  Moreover, strict inequality (\ref{LE6}) encodes the following envy notion: using consumption process language, as long as agent $i$ consumes objects that are better than $o$ he does not envy the set of lower priority agents to jointly consume fractions of $o$, however, once the lower priority agents have consumed as much of $o$ as agent $i$'s strict upper contour set at $o$, agent $i$ starts having envy towards them for any additional amounts of $o$ (unless agent $i$ can fill his remaining probability quota with object $o$).\medskip

\begin{remark}[\textbf{A symmetric reformulation of fractional stability and its violations}]\label{remark:symmetricFractional}
\normalfont In the definition of fractional stability by inequalities (\ref{LE5}) and of a violation of fractional stability by inequality (\ref{LE6}) we have taken the viewpoint of an agent who considers the consumptions of lower priority agents for an object. The symmetric formulations when taking the viewpoint of an object that ``considers'' the matches of lower preferred objects to an agent are as follows. A random matching $p$ is fractionally stable if for each pair $(i,o)\in N\times O$,
\begin{equation}\tag{\ref{LE5}'}
\sum_{j:j\succ_o i}p(j,o) \geq  \sum_{o':o'\prec_i o}p(i,o').
\end{equation}
We can write a violation of fractional stability as, there exists a pair $(i,o)\in N\times O$ such that
\begin{equation}\tag{\ref{LE6}'}
\sum_{o':o'\prec_i o}p(i,o')>\sum_{j:j\succ_o i}p(j,o).
\end{equation}\hfill~$\diamond$
\end{remark}

\citet{AhFl03a} introduced a stability concept that they also called \textit{fractional stability} for a more general model of so-called hypergraphic preference systems. \citet{BiFl2016} extended the Aharoni-Fleiner notion of fractional stability to an even more general model of NTU coalition formation games. We note here that fractional stability as defined by \citet{AhFl03a} and \citet{BiFl2016} is \emph{not} equivalent to fractional stability considered in this paper. In fact, it is equivalent to ex-ante stability.

\begin{definition}[\textbf{Aharoni-Fleiner fractional stability}]\normalfont
A random matching $p$ is \textbf{Aharoni-Fleiner fractionally stable} if for each pair $(i,o)\in N\times O$,  $$\sum_{o':o' \succsim_i o}p(i,o')=1\mbox{ or } \sum_{j:j\succsim_o i}p(j,o)=1.$$\label{def:AFfractionalstability}
	\end{definition}

\begin{proposition}
A random matching is Aharoni-Fleiner fractionally stable if and only if it has no ex-ante envy.
\end{proposition}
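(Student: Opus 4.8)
The plan is to prove both implications by contraposition, relying only on unpacking the two definitions together with the feasibility constraints~(\ref{LE2}) and~(\ref{LE3}) that force every row and every column of $p$ to sum to one.

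For the direction ``no ex-ante envy $\Rightarrow$ Aharoni-Fleiner fractional stability'', I would suppose that $p$ violates Aharoni-Fleiner fractional stability, so there is a pair $(i,o)\in N\times O$ with $\sum_{o':o'\succsim_i o}p(i,o')<1$ and $\sum_{j:j\succsim_o i}p(j,o)<1$. Since $\succ_i$ is a strict total order, $\{o':o'\succsim_i o\}$ and $\{o':o'\prec_i o\}$ partition $O$, so the first strict inequality combined with $\sum_{o'\in O}p(i,o')=1$ yields an object $o'$ with $o'\prec_i o$ and $p(i,o')>0$; symmetrically, the second inequality combined with $\sum_{j\in N}p(j,o)=1$ yields an agent $j$ with $j\prec_o i$ and $p(j,o)>0$. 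The quadruple $(i,j,o,o')$ then satisfies $p(i,o')>0$, $p(j,o)>0$, $o\succ_i o'$, and $i\succ_o j$, i.e., it exhibits ex-ante envy, which is the desired contradiction.

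For the converse, I would suppose $p$ has ex-ante envy, witnessed by $i,j\in N$ and $o,o'\in O$ with $p(i,o')>0$, $p(j,o)>0$, $o\succ_i o'$, and $i\succ_o j$. I then test the single pair $(i,o)$: since $o\succ_i o'$ (and $\succ_i$ is irreflexive, so $o'\neq o$), the object $o'$ lies outside $\{o'':o''\succsim_i o\}$, hence $\sum_{o'':o''\succsim_i o}p(i,o'')\le 1-p(i,o')<1$; symmetrically, since $i\succ_o j$ forces $j\neq i$ and $j\notin\{j':j'\succsim_o i\}$, we get $\sum_{j':j'\succsim_o i}p(j',o)\le 1-p(j,o)<1$. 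Thus neither of the two equalities in the definition of Aharoni-Fleiner fractional stability holds at $(i,o)$, a contradiction.

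I do not expect a genuine obstacle here: the proof is a bookkeeping exercise in which the feasibility constraints do all the work. The only points needing a moment's care are using irreflexivity of $\succ_i$ and $\succ_o$ to ensure $o'\notin\{o'':o''\succsim_i o\}$ and $j\notin\{j':j'\succsim_o i\}$, and, in the first direction, correctly reading off a strictly-worse object and a strictly-lower-priority agent from the fact that a proper partial sum of a row (resp.\ column) falls strictly below one.
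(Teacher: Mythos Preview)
Your proposal is correct and follows essentially the same approach as the paper: both directions are argued by contraposition, passing between a failure of the Aharoni--Fleiner condition at a pair $(i,o)$ and a witnessing quadruple $(i,j,o,o')$ for ex-ante envy via the feasibility constraints. The only difference is that you spell out the use of~(\ref{LE2}),~(\ref{LE3}), and irreflexivity more explicitly than the paper does.
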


\begin{proof}[\textbf{Proof}]Suppose random matching $p$ has ex-ante envy. Then, there exist $i,j\in N$ and $o,o'\in O$ such that $p(i,o')>0$, $p(j,o)>0$, $o\succ_i o'$, and $i\succ_o j$. Thus, there exists a pair $(i,o)\in N\times O$ such that $\sum_{o'':o'' \succsim_i o}p(i,o'')<1$ and $\sum_{k:k\succsim_o i}p(k,o)<1$. Hence, $p$ is not Aharoni-Fleiner fractionally stable.

Suppose random matching $p$ is not Aharoni-Fleiner fractionally stable. Then, there exists a pair $(i,o)\in N\times O$ such that $\sum_{o'':o''\succsim_i o}p(i,o'')<1$ and $\sum_{k:k\succsim_o i}p(k,o)<1$. Thus, there exist $i,j\in N$ and $o,o'\in O$ such that $p(i,o')>0$, $p(j,o)>0$, $o\succ_i o'$, and $i\succ_o j$. Hence, $p$ has ex-ante envy.
\end{proof}

Our fourth stability concept for random matchings is based on a new stability concept suggested by \cite{Afac15a} for a model with weak priorities. Here, we focus exclusively on the strict priority part of his stability concept even though we use the same name (Afacan has some additional conditions addressing equal priority agents that capture aspects of ``equal treatment of equals'' that are not related to stability). According to \cite{Afac15a}, an agent $i\in N$ has a \emph{claim} against an agent $j\in N$, if there exists an object $o\in O$ such that $i\succ_o j$ and
\begin{equation}\label{LE7}
p(j,o)>\sum_{o':o'\succ_i o}p(i,o').
\end{equation}
A random matching is \emph{claimwise stable} if it does not admit any claim.

\begin{definition}[\textbf{Claimwise stability}]
\normalfont	A random matching $p$ is \textit{\textbf{claimwise stable}} if for each pair $(i,o)\in N\times O$  and each $j\in N$ such that $i\succ_oj$,
\begin{equation}\label{LE8}
\sum_{o':o'\succ_i o}p(i,o') \geq  p(j,o).
\end{equation}\label{def:claimwisestable}
\end{definition}

Claimwise stability is a no envy notion based on having a claim. Inequality (\ref{LE7}) implies $\sum_{o':o'\succsim_i o}p(i,o')<1$, i.e.,  agent $i$ receives some fraction of an object in his strict lower contour set at $o$ (if not, this would imply that $\sum_{o':o'\succ_i o}p(i,o')+p(i,o)=1$ and hence, $p(j,o)+p(i,o)>1$; a contradiction to feasibility).
Thus, agent $i$ would want to consume more of object $o$.
Moreover, strict inequality (\ref{LE7}) encodes the following envy notion: using consumption process language, as long as agent $i$ consumes objects that are better than $o$ he does not envy lower priority agent $j$ to consume fractions of $o$, however, once agent $j$ has consumed as much of $o$ as agent $i$'s strict upper contour set at $o$, agent $i$ starts having envy towards him for any additional amounts of $o$ (unless agent $i$ can fill his remaining probability quota with object $o$).\medskip

Our fifth stability concept for random matchings, \emph{robust ex-post stability}, is a natural strengthening of ex-post stability.\footnote{\citet{KeUn15a} pointed out that ``Although ex post stability is a meaningful interpretation of fairness for deterministic outcomes, for lottery mechanisms such as those used for school choice, its suitability as the right fairness notion is less clear.'' They then proceed to analyze the stronger stability concept of ex-ante stability, which is a very strong stability requirement. We show that robust ex-post stability is weaker than ex-ante stability and stronger than ex-post stability and hence it is a good compromise between these competing stability concepts. Robust ex-post stability strengthens ex-post stability in a similar way as robust ex-post efficiency \citep{AMXY15a} strengthens ex-post efficiency.}

\begin{definition}[\textbf{Robust ex-post stability}]
\normalfont A random matching $p$ is \emph{\textbf{robust ex-post stable}} if all its decompositions are into deterministic and stable matchings.\label{def:robustexpoststability}
\end{definition}
	
It follows easily that if we restrict attention to deterministic matchings, then all the stability concepts for random matchings coincide with stability / no envy (Definition~\ref{def:stability}). For completeness, we provide a short proof of Proposition~\ref{proposition1} at the end of Section~\ref{section:taxonomystability}.

\begin{proposition}\label{proposition1}
For deterministic matchings, all the stability concepts for random matchings coincide with stability / no envy for deterministic matchings.
\end{proposition}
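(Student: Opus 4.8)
The plan is to fix an arbitrary deterministic matching $p$ and show that each of the five random stability concepts, when specialized to $p$, is equivalent to Definition~\ref{def:stability}. Since $p$ is deterministic, every entry $p(i,o)$ lies in $\{0,1\}$, and by \eqnref{LE2} and \eqnref{LE3} each agent and each object is matched to exactly one partner; I will write $o_i$ for the unique object with $p(i,o_i)=1$. The whole argument is a short case-by-case unwinding of definitions, so I would organize it as five brief items.

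First I would handle ex-ante stability (Definition~\ref{def:ex-antestability}): for deterministic $p$, ``$p(i,o')>0$'' is the same as ``$p(i,o')=1$'', so the forbidden configuration $p(i,o')>0$, $p(j,o)>0$, $o\succ_i o'$, $i\succ_o j$ becomes exactly the forbidden configuration of Definition~\ref{def:stability}; hence the two notions coincide. Next, ex-post stability (Definition~\ref{def:expoststability}): a deterministic matching is, trivially, its own (length-one) decomposition with $\lambda_1=1$, and by Birkhoff--von~Neumann its extreme-point decompositions are into deterministic matchings; so $p$ is ex-post stable iff $p$ itself is stable. The same observation gives robust ex-post stability (Definition~\ref{def:robustexpoststability}): the only decomposition of a deterministic $p$ is the trivial one, so ``all decompositions are into stable matchings'' reduces to ``$p$ is stable.''

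For fractional stability (Definition~\ref{def:fractionalstability}) I would argue that inequalities \eqnref{LE4} for a deterministic $p$ are equivalent to Definition~\ref{def:stability}. This is already implicitly in the text (a deterministic matching is stable iff it satisfies \eqnref{LE4}, citing \citealp{RothetalMOR93}), so I would just recall the direct check: if $p$ is stable, then for each $(i,o)$ either $o\succsim_i o_i$, in which case $p(i,o)+\sum_{o'\succ_i o}p(i,o')=1$, or $o\prec_i o_i$, in which case stability forces the agent $j$ matched to $o$ to satisfy $i\prec_o j$, so $\sum_{j\succ_o i}p(j,o)=1$; either way \eqnref{LE4} holds. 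Conversely, a violation of Definition~\ref{def:stability} gives $i,j,o,o'$ with $o\succ_i o_i=o'$ and $i\succ_o j=$ (the agent matched to $o$), whence the left side of \eqnref{LE4} for pair $(i,o)$ equals $0+0+0<1$. Claimwise stability (Definition~\ref{def:claimwisestable}) I would dispatch similarly: a claim in the deterministic case means $p(j,o)=1$ (so $o$ is matched to $j$), $i\succ_o j$, and $\sum_{o'\succ_i o}p(i,o')=0$ (so $o\succ_i o_i$), which is precisely a justified-envy configuration, and conversely.

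There is no real obstacle here — every step is a one-line substitution of $\{0,1\}$-values into the displayed inequalities. The only place requiring a sentence rather than a phrase is the fractional-stability equivalence, where one must note that for a deterministic matching the three sums in \eqnref{LE4} are each $0$ or $1$ and exactly analyze which term ``carries'' the inequality; I would lean on the already-cited fact that \eqnref{LE4} characterizes deterministic stability so that this item is essentially immediate. I would close by remarking that, combined, these five equivalences prove the proposition.
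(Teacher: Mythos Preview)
Your proof is correct, but it takes a different route from the paper. The paper places the proof of Proposition~\ref{proposition1} \emph{after} the taxonomy (Propositions~\ref{prop:exante-to-rexpost}--\ref{prop:claimwise-notto-expost}) and exploits that chain: starting from a stable deterministic matching $p$, one notes that deterministic no-envy is literally ex-ante stability, then invokes Propositions~\ref{prop:exante-to-rexpost}, \ref{prop:rexpost-to-expost}, \ref{prop:expost-to-fractional}, and \ref{prop:fractional-to-claimwise} to carry stability down the hierarchy to claimwise stability; the loop is then closed by a single short contradiction argument showing that a deterministic claimwise stable matching must be stable. Thus the paper proves one big cycle of implications rather than five separate equivalences.

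Your approach, by contrast, verifies each of the five equivalences independently by unwinding the definitions with $\{0,1\}$-values (and the extreme-point property of permutation matrices for the two ex-post notions). This is more elementary and entirely self-contained---it does not depend on any of the taxonomy propositions---at the price of a little redundancy. The paper's route is shorter once the taxonomy is in hand, and has the conceptual payoff of showing that a single ``weakest implies strongest'' step (claimwise $\Rightarrow$ stable in the deterministic case) suffices to collapse the whole hierarchy. One small wording point: your ex-post argument leans on the fact that a deterministic matching, being an extreme point of the Birkhoff polytope, admits only the trivial decomposition; you state this but the phrase ``its extreme-point decompositions are into deterministic matchings'' is a bit garbled---just say that any convex representation of an extreme point must be degenerate.
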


Next, we say that a stability concept $*$ is \textit{convex} if the convex combination of $*$-stable matchings is $*$-stable as well ($*$-stability stands for any of our stability concepts for random matchings). Since the stability constraints for fractional and claimwise stability are linear, there are simple (linear) arguments why both stability concepts are convex. We will later show that ex-ante stability and robust ex-post stability are not convex.

\begin{lemma}\label{lemma:frac-convex}
Fractional stability is convex.
\end{lemma}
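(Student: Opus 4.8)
The plan is to use the characterization of fractional stability by the linear inequalities~(\ref{LE5}) directly. Fix two fractionally stable random matchings $p$ and $q$, a scalar $\lambda\in[0,1]$, and set $r=\lambda p+(1-\lambda)q$. First I would check that $r$ is a random matching: constraints~(\ref{LE1})--(\ref{LE3}) are all linear equalities/inequalities, so they are preserved under convex combinations (nonnegativity is obvious, and the row and column sums of $r$ are $\lambda\cdot 1+(1-\lambda)\cdot 1=1$). This is the routine feasibility part.

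The substantive point is that $r$ satisfies~(\ref{LE5}) for every pair $(i,o)\in N\times O$. Fix such a pair. Since $p$ and $q$ are fractionally stable, we have
\begin{equation*}
\sum_{o':o'\succ_i o}p(i,o') \geq \sum_{j:j\prec_o i}p(j,o)
\qquad\text{and}\qquad
\sum_{o':o'\succ_i o}q(i,o') \geq \sum_{j:j\prec_o i}q(j,o).
\end{equation*}
Multiplying the first inequality by $\lambda\geq 0$ and the second by $1-\lambda\geq 0$ and adding, the left-hand sides combine (by linearity of the finite sum over $\{o':o'\succ_i o\}$) into $\sum_{o':o'\succ_i o}r(i,o')$, and likewise the right-hand sides combine into $\sum_{j:j\prec_o i}r(j,o)$. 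Hence $\sum_{o':o'\succ_i o}r(i,o')\geq\sum_{j:j\prec_o i}r(j,o)$, which is exactly inequality~(\ref{LE5}) for $r$ at $(i,o)$. Since $(i,o)$ was arbitrary, $r$ is fractionally stable, and by induction the convex combination of any finite number of fractionally stable random matchings is fractionally stable.

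There is essentially no obstacle here: the only thing one needs is that the index sets $\{o':o'\succ_i o\}$ and $\{j:j\prec_o i\}$ appearing on the two sides of~(\ref{LE5}) depend only on the fixed pair $(i,o)$ and on the (fixed) preferences and priorities, not on the matching, so the sums are genuinely linear functionals of the matching and the inequality~(\ref{LE5}) defines a convex (indeed polyhedral) set. The same remark applies verbatim to claimwise stability via~(\ref{LE8}), which is why the paper groups the two together; I would mention this parallel but the statement to be proved only concerns fractional stability.
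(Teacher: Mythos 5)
Your proposal is correct and follows essentially the same route as the paper: fix a pair $(i,o)$, take the two instances of inequality~(\ref{LE5}) for $p$ and $q$, scale by $\lambda$ and $1-\lambda$, and add, using that the index sets in the sums do not depend on the matching. The extra remarks (feasibility of the convex combination and the extension to finitely many matchings) are harmless additions to the paper's argument.
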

\begin{proof}[\textbf{Proof}]
Let $p$ and $q$ be fractionally stable random matchings. Then, by Inequality (\ref{LE5}), for each pair $(i,o)\in N\times O$,
$$\sum_{o':o'\succ_i o}p(i,o') \geq  \sum_{j:j\prec_o i}p(j,o)$$ and
$$\sum_{o':o'\succ_i o}q(i,o') \geq  \sum_{j:j\prec_o i}q(j,o).$$				Then it follows that for each $\lambda\in [0,1]$,
$$\lambda\big[\sum_{o':o'\succ_i o}p(i,o')\big]+(1-\lambda)\big[\sum_{o':o'\succ_i o}q(i,o')\big] \geq  \lambda\big[\sum_{j:j\prec_o i}p(j,o)\big] + (1-\lambda)\big[\sum_{j:j\prec_o i}q(j,o)\big]$$
and
$$\sum_{o':o'\succ_i o}\big[\lambda p + (1-\lambda) q\big](i,o') \geq \sum_{j:j\prec_o i}\big[\lambda p+(1-\lambda)q\big](j,o).$$
\end{proof}

The proof for the convexity of claimwise stability is similar and we omit it.

\begin{lemma}\label{lemma:claimwise-convex}
Claimwise stability is convex.
\end{lemma}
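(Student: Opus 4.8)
The plan is to mimic exactly the linear argument used for Lemma~\ref{lemma:frac-convex}, since claimwise stability is also defined by a system of linear inequalities in the entries of the random matching. First I would take two claimwise stable random matchings $p$ and $q$, fix an arbitrary $\lambda\in[0,1]$, and set $r=\lambda p+(1-\lambda)q$. Note that $r$ is again a random matching, because the bistochasticity constraints \eqref{LE1}, \eqref{LE2}, \eqref{LE3} are preserved under convex combinations (nonnegativity is closed under convex combinations, and each row/column sum of $r$ is $\lambda\cdot 1+(1-\lambda)\cdot 1=1$). This is routine and I would state it in one line.

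Next I would invoke the defining inequality \eqref{LE8} for $p$ and for $q$: for every pair $(i,o)\in N\times O$ and every $j\in N$ with $i\succ_o j$,
\[
\sum_{o':o'\succ_i o}p(i,o')\ \geq\ p(j,o)
\qquad\text{and}\qquad
\sum_{o':o'\succ_i o}q(i,o')\ \geq\ q(j,o).
\]
Multiplying the first by $\lambda\geq 0$ and the second by $1-\lambda\geq 0$ and adding preserves the inequality, giving
\[
\lambda\!\!\sum_{o':o'\succ_i o}\!\!p(i,o')+(1-\lambda)\!\!\sum_{o':o'\succ_i o}\!\!q(i,o')\ \geq\ \lambda\,p(j,o)+(1-\lambda)\,q(j,o).
\]
Then I would collapse both sides using linearity of $r=\lambda p+(1-\lambda)q$: the left side is $\sum_{o':o'\succ_i o}r(i,o')$ and the right side is $r(j,o)$. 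Hence $r$ satisfies \eqref{LE8} for every admissible $(i,o,j)$, so $r$ is claimwise stable, which is precisely the claim.

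There is essentially no obstacle here; the only things to be careful about are (i) checking that the convex combination is still a valid random matching, which as noted is immediate, and (ii) keeping the summation index sets $\{o':o'\succ_i o\}$ and the priority condition $i\succ_o j$ fixed throughout — these depend only on the fixed preference and priority orders, not on $p$ or $q$, so the same index set works in all three inequalities, and this is what makes the termwise addition legitimate. Since the paper already announced ``The proof for the convexity of claimwise stability is similar and we omit it,'' the expected writeup is just a brief remark rather than a full derivation; I would present it in two or three sentences paralleling the proof of Lemma~\ref{lemma:frac-convex}.
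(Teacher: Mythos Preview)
Your proposal is correct and is exactly the approach the paper intends: the paper explicitly states that ``the proof for the convexity of claimwise stability is similar and we omit it,'' referring to the linear argument of Lemma~\ref{lemma:frac-convex}, which is precisely what you reproduce. Your care with the fixed index sets and the preservation of bistochasticity is appropriate, and nothing more is needed.
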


\begin{remark}[\textbf{Core and stability concepts based on stochastic dominance and vNM preferences / priorities}]\label{remark:core}\normalfont A deterministic matching is in the \textit{\textbf{core}} if no coalition of agents and objects can improve by rematching among themselves, i.e., a deterministic matching $p$ is in the core if there exists no set $N'\cup O' \subseteq N\cup O$ and no deterministic matching $p' \neq p$ such that (i) for each $i' \in N'$, $\sum_{o\in O'}p'(i',o)=1$, (ii) for each $o' \in O'$, $\sum_{i\in N'}p'(i,o')=1$, and (iii) for all $i'\in N'$, $i\in N$, $o' \in O'$, and $o\in O$, [if $p(i',o)=1$ and $p'(i',o')=1$, then $o' \succsim_{i'}o$] and [if $p(i,o')=1$ and $p'(i',o')=1$, then $i'\succsim_{o'}i$]. It is well known that for the base model (and its extension with strict preferences / priorities), the core equals the set of stable deterministic matchings.

For random matchings, one can extend preferences / priorities over objects / agents to random matches via von Neumann-Morgenstern (vNM) utilities or the (incomplete) first order stochastic dominance extension. \citet{Manj13a} studies various extensions of stability and the core from deterministic to random matchings using vNM utilities, stochastic dominance requiring comparability, and stochastic dominance without requiring comparability. \citet{Manj13a} points out that for strict preferences / priorities an ex-post stable random matching is a ``weak stochastic dominance core matching'' \citep[][Proposition~3]{Manj13a}. The same observation also follows from Theorem~2 of \citet{DoYi16a}.

We prove in Appendix~\ref{appendixstochstab} two new results that clarify the relation of strong and weak dominance stability as defined in \citet{Manj13a} with some of our stability properties: (1) a random matching is a ``strong stochastic dominance stable matching'' if and only if it is ex-ante stable and (2) a random matching that is claimwise stable is a ``weak stochastic dominance stable matching''.

Our approach is complementary to that of \citet{Manj13a} in that we focus on existing stability concepts with some focus on their underlying linear programming origin and possible fairness interpretations while he introduces new stability and core concepts based on how preferences / priorities are extended from deterministic to random matchings.\hfill~$\diamond$
\end{remark}

\subsection{Relations between stability concepts}\label{section:taxonomystability}
We now provide a complete taxonomy of the stability concepts for random matchings we have introduced (see Figure~\ref{fig:part-relations}).

\begin{figure}[htb]
\begin{framed}
\begin{center}
\textbf{Section~\ref{section:taxonomystability} results:} For random matchings we have\\\bigskip
\scalebox{0.9}{
\begin{tikzpicture}
\tikzstyle{pfeil}=[->,>=angle 60, shorten >=1pt,draw]
\tikzstyle{onlytext}=[]
    \node[onlytext] (exante) at (2,4) {ex-ante stability (Def.~\ref{def:ex-antestability})};
	\node[onlytext] (rexpost) at (2,2) {robust ex-post stability (Def.~\ref{def:robustexpoststability})};
	\node[onlytext] (expost) at (2,0) {ex-post stability (Def.~\ref{def:expoststability})};
	\node[onlytext] (fractional) at (2,-2) {fractional  stability (Def.~\ref{def:fractionalstability})};
	\node[onlytext] (claimwise) at (2,-4) {claimwise stability (Def.~\ref{def:claimwisestable})};

    \node[onlytext] (prop2) at (0.5,3) {Prop.~\ref{prop:exante-to-rexpost}};
    \node[onlytext] (prop4) at (0.5,1) {Prop.~\ref{prop:rexpost-to-expost}};
    \node[onlytext] (prop6) at (0.5,-1) {Prop.~\ref{prop:expost-to-fractional}};
    \node[onlytext] (prop8) at (0.5,-3) {Prop.~\ref{prop:fractional-to-claimwise}};

    \node[onlytext] (prop2) at (3.5,3) {Prop.~\ref{prop:rexpost-notto-exante}};
    \node[onlytext] (prop4) at (3.5,1) {Prop.~\ref{prop:expost-notto-rexpost}};
    \node[onlytext] (prop6) at (3.5,-1) {Prop.~\ref{prop:fractional-to-expost}};
    \node[onlytext] (prop8) at (3.5,-3) {Prop.~\ref{prop:claimwise-notto-expost}};

	\draw[pfeil, thick, blue, bend right] (exante) to (rexpost);
	\draw[pfeil, thick, red, bend right] (rexpost) to (exante);
	\draw[pfeil, thick, blue, bend right] (rexpost) to (expost);
	\draw[pfeil, thick, red, bend right] (expost) to (rexpost);
	\draw[pfeil, thick, blue, bend right] (fractional) to (expost);
	\draw[pfeil, thick, blue, bend right] (expost) to (fractional);
	\draw[pfeil, thick, blue, bend right] (fractional) to (claimwise);
	\draw[pfeil, thick, red, bend right] (claimwise) to (fractional);
	\draw[thick, red] (2.2,-3.2) to (2.6,-2.8);
	
	\draw[thick, red] (2.2,0.8) to (2.6,1.2);
	
	\draw[thick, red] (2.2,2.8) to (2.6,3.2);

\end{tikzpicture}}
\end{center}	
\caption{Relations between stability concepts for random matchings.}
\label{fig:part-relations}
\end{framed}
\end{figure}

\begin{proposition}\label{prop:exante-to-rexpost}
Ex-ante stability implies robust ex-post stability.			
\end{proposition}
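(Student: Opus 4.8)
The plan is to argue directly from the definitions, with no machinery beyond the observation that positive weights in a decomposition transfer positivity to $p$. Let $p$ be ex-ante stable and let $p=\sum_{j=1}^k\lambda_jP_j$ be an arbitrary decomposition into deterministic matchings, so each $\lambda_j\in(0,1]$, $\sum_{j=1}^k\lambda_j=1$, and each $P_j$ satisfies $P_j(i,o)\in\{0,1\}$. I want to show every $P_j$ is stable; since the $P_j$ are deterministic by definition of a decomposition, this gives that every decomposition of $p$ is into deterministic and stable matchings, i.e.\ $p$ is robust ex-post stable.

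First I would fix an index $j$ and suppose, for contradiction, that $P_j$ is not stable. By Definition~\ref{def:stability}, there exist agents $a,b\in N$ and objects $o,o'\in O$ with $P_j(a,o')=1$, $P_j(b,o)=1$, $o\succ_a o'$, and $a\succ_o b$. The key step is then the elementary inequality coming from the decomposition: since $p(a,o')=\sum_{\ell=1}^k\lambda_\ell P_\ell(a,o')\geq \lambda_j P_j(a,o')=\lambda_j>0$ and likewise $p(b,o)\geq\lambda_j P_j(b,o)=\lambda_j>0$, the pair $(a,b)$ together with $(o,o')$ witnesses ex-ante envy at $p$: we have $p(a,o')>0$, $p(b,o)>0$, $o\succ_a o'$, and $a\succ_o b$. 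This contradicts the assumption that $p$ has no ex-ante envy (Definition~\ref{def:ex-antestability}).

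Hence no $P_j$ can fail to be stable, so every $P_j$ is a deterministic stable matching, and since the decomposition was arbitrary, $p$ is robust ex-post stable. I do not anticipate any genuine obstacle here; the only thing to be careful about is the bookkeeping that a decomposition is by definition already into deterministic matchings, so the entire content of the statement reduces to verifying stability of each constituent matching, which the positivity-transfer argument above handles in one line.
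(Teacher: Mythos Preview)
Your proof is correct and follows essentially the same approach as the paper: both argue that an unstable deterministic matching in any decomposition yields a blocking pair whose positive weight transfers to positive probabilities in $p$, contradicting ex-ante stability. The only cosmetic difference is that the paper phrases it as a contrapositive (not robust ex-post stable implies not ex-ante stable), whereas you frame it as a direct proof with an inner contradiction.
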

\begin{proof}[\textbf{Proof}]
Consider a random matching $p$ that is not robust ex-post stable. This means that $p$ can be decomposed into deterministic matchings such that one of them is not stable. Let $q$ be such an unstable deterministic matching. Since $q$ is unstable, there exist agents $i,j\in N$ and objects $o,o'\in O$ such that $q(i,o')=1$, $q(j,o)=1$, $o\succ_i o'$, and $i\succ_o j$. Since $q$ is part of a decomposition of $p$ (with positive weight), it follows that then $p(i,o')>0$, $p(j,o)>0$, $o\succ_i o'$, and  $i\succ_o j$. Hence, $p$ is not ex-ante stable.		
\end{proof}

The following example shows that even if a random matching is ex-ante stable (and hence robust ex-post stable), the decomposition into stable deterministic matchings need not be unique.

\begin{example}\label{example:ex-ante-differentdecomp}\normalfont Let $N=\{1,2,3,4\}$ and $O=\{w,x,y,z\}$. Consider the following preferences and priorities:
\begin{center}
\begin{tabular}{lcccc}
$\succ_1$:&$w$&$x$&$y$&$z$\\
$\succ_2$:&$x$&$w$&$z$&$y$\\
$\succ_3$:&$y$&$z$&$w$&$x$\\
$\succ_4$:&$z$&$y$&$x$&$w$
\end{tabular}
\quad\quad
\begin{tabular}{lcccc}
$\succ_w$:&$2$&$1$&$4$&$3$\\
$\succ_x$:&$1$&$2$&$3$&$4$\\
$\succ_y$:&$4$&$3$&$2$&$1$\\
$\succ_z$:&$3$&$4$&$1$&$2$
\end{tabular}	
\end{center}
There are four deterministic stable matchings:
\begin{center}
\begin{tabular}{c}
$p^1=\begin{pmatrix}
1&0&0&0\\
0&1&0&0\\
0&0&1&0\\
0&0&0&1
\end{pmatrix}$
\end{tabular}
\quad\quad	
\begin{tabular}{c}
$p^2=\begin{pmatrix}
1&0&0&0\\
0&1&0&0\\
0&0&0&1\\
0&0&1&0
\end{pmatrix}$
\end{tabular}
\end{center}

\begin{center}
\begin{tabular}{c}
$p^3=\begin{pmatrix}
0&1&0&0\\
1&0&0&0\\
0&0&1&0\\
0&0&0&1
\end{pmatrix}$
\end{tabular}
\quad\quad	
\begin{tabular}{c}
$p^4=\begin{pmatrix}
0&1&0&0\\
1&0&0&0\\
0&0&0&1\\
0&0&1&0
\end{pmatrix}$
\end{tabular}
\end{center}		
It is easy to check that the following random matching is ex-ante and hence also robust ex-post stable:
\begin{center}
\begin{tabular}{c}
$q=\begin{pmatrix}
\nicefrac{1}{2}&\nicefrac{1}{2}&0&0\\
\nicefrac{1}{2}&\nicefrac{1}{2}&0&0\\
0&0&\nicefrac{1}{2}&\nicefrac{1}{2}\\
0&0&\nicefrac{1}{2}&\nicefrac{1}{2}
\end{pmatrix}.$
\end{tabular}										
\end{center}	
There exist exactly two decompositions of $q$ into (stable) deterministic matchings: $$q=\frac{1}{2}p^{1}+\frac{1}{2}p^{4}=\frac{1}{2}p^{2}+\frac{1}{2}p^{3}.$$
Hence, the decomposition of $q$ into stable deterministic matchings is not unique.\hfill~$\diamond$
\end{example}

\begin{proposition}\label{prop:rexpost-notto-exante}
Robust ex-post stability does not imply ex-ante stability.
\end{proposition}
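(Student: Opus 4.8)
The plan is to exhibit an explicit counterexample: a random matching that is robust ex-post stable but admits ex-ante envy. The natural strategy is to look for an instance in which there is essentially only one deterministic stable matching, or more generally where the \emph{set} of deterministic stable matchings is such that every decomposition of the target random matching $p$ uses only stable matchings, yet $p$ itself spreads probability across objects in a way that creates an ex-ante blocking pair. Since ex-ante stability forces an almost-deterministic structure (by the cited Corollary~21 of \citealp{RothetalMOR93}), any genuinely ``random'' matching that is still forced to decompose into stable matchings will do the job.

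Concretely, I would first pick a small instance ($n=3$ or $n=4$ should suffice) with strict preferences and priorities, and compute the full set of deterministic stable matchings. The key is to arrange the priorities so that there are (at least) two deterministic stable matchings $P_1, P_2$ whose convex combination $p = \lambda P_1 + (1-\lambda) P_2$ has the property that $P_1$ and $P_2$ are the \emph{only} two deterministic matchings consistent with the support of $p$ — i.e., every decomposition of $p$ into deterministic matchings is forced to be $\lambda P_1 + (1-\lambda)P_2$ (up to the obvious relabeling), so $p$ is trivially robust ex-post stable. This is easiest to guarantee if the supports of $P_1$ and $P_2$ differ on a single $2\times 2$ block (two agents, two objects), because then the only way to write that doubly-stochastic $2\times2$ block as a combination of permutation matrices is the decomposition into those two permutations. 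Then I would verify by inspection that $p$ has ex-ante envy: there exist $i,j$ and $o,o'$ with $p(i,o')>0$, $p(j,o)>0$, $o\succ_i o'$, and $i\succ_o j$ — this should be arranged by making agent $i$ strictly prefer $o$ (which $j$ gets positive probability on) to $o'$ (which $i$ gets positive probability on) while $i$ has higher $o$-priority than $j$; in one of the two stable matchings the ``wrong'' assignment occurs, which is fine for deterministic stability only because the blocking pair is not a \emph{deterministic} block in that matching, but the positive probabilities make it an ex-ante block in $p$.

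In fact the cleanest route is to recycle the instance from Example~\ref{example:ex-ante-differentdecomp} but modify the target random matching, or to give a fresh three-agent example. For a three-agent version: take $N=\{1,2,3\}$, $O=\{x,y,z\}$, with preferences and priorities chosen so that there are exactly two deterministic stable matchings, say $P$ (the identity) and $P'$ (which swaps two agents on two objects), the random matching $p=\tfrac12 P + \tfrac12 P'$ has support equal to $\mathrm{supp}(P)\cup\mathrm{supp}(P')$, this support forces the decomposition, and one of $P,P'$ contains an assignment that becomes an ex-ante block once probabilities are positive. I would present the instance, list the (two) deterministic stable matchings with a one- or two-line stability check, display $p$, argue the decomposition is unique hence $p$ is robust ex-post stable, and then point to the specific quadruple $(i,j,o,o')$ witnessing ex-ante envy.

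The main obstacle is the bookkeeping: I need the instance to simultaneously have (a) a small, completely enumerated set of deterministic stable matchings, (b) a target $p$ whose support pins down its decomposition, and (c) an ex-ante blocking pair in $p$. Condition (b) is the delicate one — I must make sure no \emph{third} deterministic matching (stable or not) fits inside the support of $p$, otherwise a decomposition involving an unstable matching might exist and $p$ would fail to be robust ex-post stable. Keeping the differing part of the two stable matchings confined to a single $2\times 2$ permutation block handles this cleanly, since a $2\times2$ doubly stochastic matrix with both off-diagonal (or both diagonal) entries positive has a unique decomposition into the two relevant permutation matrices, but I would double-check the full $n\times n$ support carefully. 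Once the instance is fixed, everything else is a routine verification.
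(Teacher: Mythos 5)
Your overall strategy is the paper's: exhibit a random matching whose support forces a decomposition into two deterministic stable matchings while the mixture itself admits an ex-ante blocking pair. But as written there is a genuine gap, and your preferred concrete construction cannot be repaired. First, no instance is actually exhibited — the entire content of the proposition is the existence of an instance satisfying your conditions (a)--(c) simultaneously, and the plan stops exactly where the work begins. Second, and more importantly, the ``single $2\times 2$ block'' structure you single out as the cleanest route is impossible. Suppose the two stable matchings $P_1,P_2$ agree except that $P_1$ assigns $i\mapsto o$, $j\mapsto o'$ and $P_2$ swaps this to $i\mapsto o'$, $j\mapsto o$. Any agent $k\notin\{i,j\}$ has a deterministic match in the mixture, so if $k$ ex-ante envied $i$ or $j$ (or vice versa, $i$ or $j$ envied some $k$), the same pair would already be a deterministic blocking pair of $P_1$ or $P_2$, contradicting their stability. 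The only remaining candidate is envy inside the block, say $i$ envies $j$ for $o$ with $o\succ_i o'$ and $i\succ_o j$; but then $(i,o)$ blocks $P_2$, again contradicting stability. So a mixture of two stable matchings differing only on a $2\times 2$ swap is automatically ex-ante stable, and your delicate condition (c) can never be met this way. Your fallback of recycling the instance of Example~\ref{example:ex-ante-differentdecomp} fares no better: that instance was built so that mixtures of its stable matchings are ex-ante stable, which is precisely what the example illustrates.

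The paper's proof shows what is actually needed: take the cyclic instance of \citet[Example~2]{RothetalMOR93}, let $p^A$ and $p^O$ be the agent-optimal and object-optimal stable matchings (which differ along a full $3$-cycle, not a $2\times2$ swap), and set $q=\tfrac12 p^A+\tfrac12 p^O$. The zero pattern of $q$ (e.g.\ $q(1,y)=q(2,z)=q(3,x)=0$) forces every decomposition to use only $p^A$ and $p^O$, giving robust ex-post stability, while the ex-ante blocking pair arises \emph{across} the two matchings: agent $1$'s share of $z$ comes from $p^O$ and agent $2$'s share of $y$ comes from $p^A$, so $(1,y)$ blocks ex ante even though it blocks neither deterministic matching. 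Your argument would become correct if you replaced the $2\times2$-block idea with such a cyclic (three-agent, three-object) construction and then carried out the two verifications you defer.
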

\begin{proof}[\textbf{Proof}]
Let $N=\{1,2,3\}$ and $O=\{x,y,z\}$. Consider the following preferences and priorities; they are the same as in \citet[][Example~2]{RothetalMOR93} but we use them to prove a different statement:
\begin{center}
\begin{tabular}{lccc}
$\succ_1$:&$x$&$y$&$z$\\
$\succ_2$:&$y$&$z$&$x$\\
$\succ_3$:&$z$&$x$&$y$
\end{tabular}
\quad\quad
\begin{tabular}{lccc}
$\succ_x$:&$2$&$3$&$1$\\
$\succ_y$:&$3$&$1$&$2$\\
$\succ_z$:&$1$&$2$&$3$
\end{tabular}	
\end{center}				
Then, consider $p^A$, which is the deterministic agent optimal stable matching,\footnote{The deterministic agent optimal stable matching can be computed by using the agent proposing deferred-acceptance algorithm \citep{GS62}.}	
\begin{center}
\begin{tabular}{c}
$p^A=\begin{pmatrix}
1&0&0\\
0&1&0\\
0&0&1
\end{pmatrix}$
\end{tabular}
\quad\quad	
\begin{tabular}{lccc}
$\succ_1$:&$\bm{x}$&$y$&$z$\\
$\succ_2$:&$\bm{y}$&$z$&$x$\\
$\succ_3$:&$\bm{z}$&$x$&$y$
\end{tabular}
\end{center}	
and consider $p^O$, which is the deterministic object optimal stable matching,\footnote{The deterministic object optimal stable matching can be computed by using the object proposing deferred-acceptance algorithm \citep{GS62}.}		
\begin{center}
\begin{tabular}{c}
$p^O=\begin{pmatrix}
0&0&1\\
1&0&0\\
0&1&0
\end{pmatrix}$
\end{tabular}
\quad\quad	
\begin{tabular}{lccc}
$\succ_1$:&$x$&$y$&$\bm{z}$\\
$\succ_2$:&$y$&$z$&$\bm{x}$\\
$\succ_3$:&$z$&$x$&$\bm{y}$.
\end{tabular}
\end{center}					
Let $q=\frac{1}{2}p^A+\frac{1}{2}p^O$. Thus,
\begin{center}
\begin{tabular}{c}
$q=\begin{pmatrix}
\nicefrac{1}{2}&0&\nicefrac{1}{2}\\
\nicefrac{1}{2}&\nicefrac{1}{2}&0\\
0&\nicefrac{1}{2}&\nicefrac{1}{2}
\end{pmatrix}.$
\end{tabular}										
\end{center}								
We first show that $q$'s only decomposition into deterministic matchings is the one with respect to $p^A$ and $p^O$: if the decomposition involves a deterministic matching in which agent $1$ gets object $x$, then the only deterministic matching consistent with $q$ is $p^A$ (because $q(2,z)=0$); if the decomposition involves a deterministic matching in which agent $1$ gets object $z$, then the only deterministic matching consistent with $q$ is  $p^O$ (because $q(3,x)=0$); since $q(1,y)=0$, no deterministic matching consistent with $q$ allows for agent 1 to get object $y$. Hence, we have proven that a convex decomposition of $q$ can only involve deterministic matchings $p^A$ and $p^O$.
Since both $p^A$ and $p^O$ are stable, it follows that $q$ is robust ex-post stable.\medskip
			
Second, we show that $q$ is not ex-ante stable. Note that for agents $1,2\in N$ and objects $z,y\in O$ we have that $q(1,z)>0$, $q(2,y)>0$,  $1\succ_y 2$, and $y\succ_1 z$, i.e., agent 1 ex-ante envies agent 2 for his probability share of object $y$. Hence, $q$ is not ex-ante stable.

Thus, $q$ is robust ex-post stable but not ex-ante stable.
\end{proof}
	
\begin{proposition}\label{prop:rexpost-to-expost}
Robust ex-post stability implies ex-post stability.
\end{proposition}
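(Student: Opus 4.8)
The plan is to reduce this to the Birkhoff--von Neumann theorem, which is already invoked in Section~\ref{section:model}. Recall that, by \citet{Birkhoff1946} and \citet{Neumann1953}, \emph{every} random matching $p$ admits at least one decomposition into deterministic matchings, i.e., there exist deterministic matchings $P_1,\ldots,P_k$ and weights $\lambda_1,\ldots,\lambda_k\in(0,1]$ with $\sum_{j=1}^k\lambda_j=1$ and $p=\sum_{j=1}^k\lambda_j P_j$. So the set of decompositions of $p$ is always nonempty.

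Now suppose $p$ is robust ex-post stable. By Definition~\ref{def:robustexpoststability}, \emph{all} decompositions of $p$ are into deterministic and stable matchings. Fix any decomposition $p=\sum_{j=1}^k\lambda_j P_j$ guaranteed to exist by the previous paragraph. Then each $P_j$ is a deterministic stable matching. Hence $p$ can be decomposed into deterministic stable matchings, which is exactly the definition of ex-post stability (Definition~\ref{def:expoststability}). Therefore robust ex-post stability implies ex-post stability.

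There is essentially no obstacle here: the only thing to be careful about is the (vacuous-looking) quantifier in the definition of robust ex-post stability — ``all decompositions are into stable matchings'' — which would be uninformative if $p$ had no decomposition at all. The point is precisely that Birkhoff--von Neumann rules this out, so the universal statement over decompositions has nonempty scope and can be instantiated to produce the witnessing decomposition required for ex-post stability. (One could also remark, as a sanity check, that Example~\ref{example:ex-ante-differentdecomp} shows the converse implication will fail, since a robust ex-post stable matching may have several decompositions, all into stable matchings, whereas ex-post stability only asks for one — but that is the content of the next result rather than of this proof.)
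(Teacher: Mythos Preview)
Your proof is correct and follows the same approach as the paper's: the paper simply states that if all decompositions are into stable matchings then in particular one such decomposition exists, and you make explicit the (necessary) appeal to Birkhoff--von~Neumann guaranteeing that the set of decompositions is nonempty. One small correction to your closing parenthetical: Example~\ref{example:ex-ante-differentdecomp} does \emph{not} bear on the failure of the converse --- a robust ex-post stable matching having several decompositions, all stable, is perfectly consistent with the converse holding; what is needed (and what Proposition~\ref{prop:expost-notto-rexpost} actually provides) is an ex-post stable matching admitting some decomposition containing an \emph{unstable} deterministic matching.
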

\begin{proof}[\textbf{Proof}]
By definition, if all decompositions of the random matching involve deterministic stable matchings, then there exists at least one decomposition that involves only deterministic stable matchings.
\end{proof}
			
\begin{proposition}\label{prop:expost-notto-rexpost}
Ex-post stability does not imply robust ex-post stability.
\end{proposition}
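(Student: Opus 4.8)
The plan is to refute the implication by a single counterexample: a random matching that is ex-post stable (some decomposition uses only stable deterministic matchings) but not robust ex-post stable (some other decomposition uses a deterministic matching that is not stable, cf. \defref{def:robustexpoststability}). A natural candidate is the profile already appearing in the proof of \propref{prop:rexpost-notto-exante} — $N=\{1,2,3\}$, $O=\{x,y,z\}$, with $\succ_1\colon x\,y\,z$, $\succ_2\colon y\,z\,x$, $\succ_3\colon z\,x\,y$ and $\succ_x\colon 2\,3\,1$, $\succ_y\colon 3\,1\,2$, $\succ_z\colon 1\,2\,3$ — together with the ``uniform'' random matching $q$ defined by $q(i,o)=\nicefrac{1}{3}$ for every pair $(i,o)\in N\times O$. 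The crucial combinatorial fact is that $q$ has (at least) two decompositions into deterministic matchings: $q=\nicefrac{1}{3}(Q^1+Q^2+Q^3)$, where $Q^1,Q^2,Q^3$ are the three ``cyclic'' permutation matchings (the identity matching $1\!\to\!x,2\!\to\!y,3\!\to\!z$ and its two rotations), and also $q=\nicefrac{1}{3}(R^1+R^2+R^3)$, where $R^1,R^2,R^3$ are the three ``transposition'' matchings; each of these two triples of permutation matrices tiles the $3\times 3$ grid, so both averages equal $q$.

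First I would verify that $Q^1,Q^2,Q^3$ are all stable: the identity matching assigns every agent its top-ranked object and is trivially stable, and for each of the two rotations one checks directly, going through every agent and every object it strictly prefers to its current match, that no blocking (agent, object) pair exists (in each case the would-be blocking object ranks its current holder above the envious agent). Since $q$ is then a convex combination of deterministic stable matchings, $q$ is ex-post stable (\defref{def:expoststability}). Second I would verify that at least one of the transposition matchings is unstable — e.g. the matching $1\!\to\!x,\,2\!\to\!z,\,3\!\to\!y$: agent $3$ receives his worst object $y$ and prefers $x$, while object $x$, matched to agent $1$, ranks agent $3$ above agent $1$ (since $\succ_x\colon 2\,3\,1$), so $(3,x)$ blocks. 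Consequently the decomposition $q=\nicefrac{1}{3}(R^1+R^2+R^3)$ contains a deterministic matching that is not stable, so $q$ is not robust ex-post stable.

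I do not expect a genuine obstacle here; the work is entirely finite bookkeeping. The only things to get right are (i) that both displayed triples of permutation matrices really do sum, with weight $\nicefrac{1}{3}$ each, to the all-$\nicefrac{1}{3}$ matrix, and (ii) the six stability/instability checks on the six permutation matchings of a $3\times3$ instance. If one wanted an example structurally parallel to \exref{example:ex-ante-differentdecomp} instead, a $4\times 4$ two-block random matching whose support admits two decompositions — one into stable matchings and one containing an unstable matching — would also work, but the $3\times3$ uniform example is the most economical and reuses the profile already in the paper.
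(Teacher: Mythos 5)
Your proposal is correct and is essentially the paper's own proof: it uses the same $3\times 3$ instance, the same uniform matching $q$, the decomposition into the three cyclic (stable) matchings $p^A$, $p$, $p^O$ to get ex-post stability, and the decomposition into the three transposition matchings (at least one of which is unstable) to refute robust ex-post stability. The only cosmetic difference is that the paper notes all three transpositions are unstable, whereas you verify just one, which suffices.
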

\begin{proof}[\textbf{Proof}]Our example and proof is the same as in \citet[][Example~2]{RothetalMOR93}. Let $N=\{1,2,3\}$ and $O=\{x,y,z\}$. Consider the following preferences and priorities: 	
\begin{center}
\begin{tabular}{lccc}
$\succ_1$:&$x$&$y$&$z$\\
$\succ_2$:&$y$&$z$&$x$\\
$\succ_3$:&$z$&$x$&$y$
\end{tabular}
\quad\quad
\begin{tabular}{lccc}
$\succ_x$:&$2$&$3$&$1$\\
$\succ_y$:&$3$&$1$&$2$\\
$\succ_z$:&$1$&$2$&$3$
\end{tabular}	
\end{center}				
Then, consider $p^A$, which is the deterministic agent optimal stable matching,	
\begin{center}
\begin{tabular}{c}
$p^A=\begin{pmatrix}
1&0&0\\
0&1&0\\
0&0&1
\end{pmatrix}$
\end{tabular}
\quad\quad	
\begin{tabular}{lccc}
$\succ_1$:&$\bm{x}$&$y$&$z$\\
$\succ_2$:&$\bm{y}$&$z$&$x$\\
$\succ_3$:&$\bm{z}$&$x$&$y$
\end{tabular}
\end{center}	
and consider $p^O$, which is the deterministic object optimal stable matching,	
\begin{center}
\begin{tabular}{c}
$p^O=\begin{pmatrix}
0&0&1\\
1&0&0\\
0&1&0
\end{pmatrix}$
\end{tabular}
\quad\quad	
\begin{tabular}{lccc}
$\succ_1$:&$x$&$y$&$\bm{z}$\\
$\succ_2$:&$y$&$z$&$\bm{x}$\\
$\succ_3$:&$z$&$x$&$\bm{y}$.
\end{tabular}
\end{center}
The only other deterministic stable matching is
	\begin{center}
\begin{tabular}{c}
$p=\begin{pmatrix}
0&1&0\\
0&0&1\\
1&0&0
\end{pmatrix}$
\end{tabular}
\quad\quad	
\begin{tabular}{lccc}
$\succ_1$:&$x$&$\bm{y}$&$z$\\
$\succ_2$:&$y$&$\bm{z}$&$x$\\
$\succ_3$:&$z$&$\bm{x}$&$y$.
\end{tabular}
\end{center}	
Let $q$ be the uniform random matching. Thus,
\begin{center}
\begin{tabular}{c}
$q=\begin{pmatrix}
\nicefrac{1}{3}&\nicefrac{1}{3}&\nicefrac{1}{3}\\
\nicefrac{1}{3}&\nicefrac{1}{3}&\nicefrac{1}{3}\\
\nicefrac{1}{3}&\nicefrac{1}{3}&\nicefrac{1}{3}
\end{pmatrix}.$
\end{tabular}										
\end{center}								
Note that
\[q=
\frac{1}{3}
\begin{pmatrix}
1&0&0\\
0&1&0\\
0&0&1
\end{pmatrix}
+\frac{1}{3}
\begin{pmatrix}
0&0&1\\
1&0&0\\
0&1&0
\end{pmatrix}
+\frac{1}{3}
\begin{pmatrix}
0&1&0\\
0&0&1\\
1&0&0
\end{pmatrix}			=\frac{1}{3}p^A+\frac{1}{3}p^O+\frac{1}{3}p.\]
Since $q$ can be decomposed into deterministic stable matchings, it is ex-post stable.\medskip
			
We now show that the uniform random matching $q$ is not robust ex-post stable. Note that
\[q=
\frac{1}{3}
\begin{pmatrix}
0&1&0\\
1&0&0\\
0&0&1
\end{pmatrix}
+\frac{1}{3}
\begin{pmatrix}
0&0&1\\
0&1&0\\
1&0&0
\end{pmatrix}
+\frac{1}{3}
\begin{pmatrix}
1&0&0\\
0&0&1\\
0&1&0
\end{pmatrix}\]
where all the deterministic matchings in the decomposition are unstable. Hence, $q$ is not robust ex-post stable.

Thus, $q$ is ex-post stable but not robust ex-post stable.
\end{proof}															

Next, as already mentioned when introducing fractional stability, fractional stability is equivalent to ex-post stability \citep[see][]{RothetalMOR93,TeSe98a}. This equivalence is based on the insight by \cite{Vand89a} that both stability concepts are convex with deterministic stable matchings as extreme points (we show in Proposition~\ref{prop:weakfractional-notto-weakexpost} that once preferences and priorities can be weak, this statement isn't correct anymore for the convex set of fractionally weakly stable random matchings). We add the proofs for completeness.
			
\begin{proposition}\label{prop:expost-to-fractional}
Ex-post stability implies fractional stability.
\end{proposition}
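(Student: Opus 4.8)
The plan is to exploit the fact, recorded earlier in the excerpt, that a deterministic matching is stable if and only if it satisfies the system of linear inequalities (\ref{LE4}), combined with the elementary observation that the left-hand side of (\ref{LE4}) is linear in the entries of the matching. Concretely, I would start from an ex-post stable random matching $p$ and fix, by definition, a decomposition $p=\sum_{m=1}^{k}\lambda_m P_m$ into deterministic \emph{stable} matchings $P_m$, with each $\lambda_m\in(0,1]$ and $\sum_{m=1}^{k}\lambda_m=1$.

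Next I would fix an arbitrary pair $(i,o)\in N\times O$ and apply inequality (\ref{LE4}) to each $P_m$: since $P_m$ is stable, $P_m(i,o)+\sum_{o':o'\succ_i o}P_m(i,o')+\sum_{j:j\succ_o i}P_m(j,o)\geq 1$. Multiplying by $\lambda_m\geq 0$ and summing over $m$, the left-hand side collapses, using $p=\sum_m\lambda_m P_m$ entrywise and interchanging the two finite summations, to $p(i,o)+\sum_{o':o'\succ_i o}p(i,o')+\sum_{j:j\succ_o i}p(j,o)$, while the right-hand side becomes $\sum_m\lambda_m=1$. Hence $p$ satisfies (\ref{LE4}) at $(i,o)$; since $(i,o)$ was arbitrary, $p$ is fractionally stable.

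There is essentially no serious obstacle: the only points requiring care are the notational collision between the decomposition index and the agent index $j$ in the priority sum of (\ref{LE4}) — which I would sidestep by naming the decomposition index $m$ (or $\ell$) — and the harmless interchange of the two finite sums. The substantive ingredient, the equivalence between deterministic stability and the inequalities (\ref{LE4}), is quoted from \citet{RothetalMOR93} earlier in the text, so nothing new needs to be proved here.
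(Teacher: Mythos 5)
Your proposal is correct and follows essentially the same route as the paper: both fix a decomposition of the ex-post stable matching into deterministic stable matchings, note that each of these satisfies inequalities (\ref{LE4}), and conclude by linearity of those inequalities under convex combinations. The only cosmetic difference is that you carry out the convex-combination computation inline for an arbitrary finite decomposition, whereas the paper packages that step as the convexity of fractional stability (Lemma~\ref{lemma:frac-convex}).
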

\begin{proof}[\textbf{Proof}]
If a random matching is ex-post stable then by definition it can be written as a convex combination of deterministic stable matchings. All of these deterministic stable matchings are fractionally stable. Since the set of fractionally stable matchings is convex (Lemma~\ref{lemma:frac-convex}), a convex combination of deterministic stable matchings is fractionally stable.
\end{proof}
					
\begin{proposition}\label{prop:fractional-to-expost}
Fractional stability implies ex-post stability.
\end{proposition}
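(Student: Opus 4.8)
The plan is to reduce the statement to the classical fact that the polytope cut out by the stability inequalities has only integral vertices. First I would observe that, for a random matching $p$, being fractionally stable (Definition~\ref{def:fractionalstability}) is by definition the same as $p$ lying in the polyhedron $P$ defined by inequalities (\ref{LE1}), (\ref{LE2}), (\ref{LE3}), and (\ref{LE4}) --- since every random matching already satisfies (\ref{LE1})--(\ref{LE3}), $P$ is exactly the set of fractionally stable random matchings. Next I would note that $P$ is bounded (it is contained in the Birkhoff--von Neumann polytope of bistochastic $n\times n$ matrices) and, being the solution set of finitely many linear inequalities, is a polytope; hence by Minkowski's theorem every point of $P$ --- in particular our $p$ --- is a convex combination of the extreme points of $P$.

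It then remains to identify the extreme points of $P$ with the deterministic stable matchings; given that, $p=\sum_j\lambda_jP_j$ with each $P_j$ deterministic and stable, i.e., $p$ is ex-post stable (Definition~\ref{def:expoststability}). I would invoke the result of \citet{Vand89a}, \citet{Rothblum1992}, and \citet{RothetalMOR93} here rather than reprove it, but sketch the two inclusions for completeness. Every deterministic stable matching is a $0/1$ solution of (\ref{LE1})--(\ref{LE4}) (by the characterization recalled just before (\ref{LE4})), so it lies in $P$; being already a vertex of the Birkhoff polytope and $P$ a subpolytope of it, it is a vertex of $P$. Conversely, $P$ has no fractional vertex: if $p\in P$ had a pair with $0<p(i,o)<1$, the ``fractional'' pairs, read as edges of the bipartite graph on $N\cup O$, would contain a cycle (forced by the row- and column-sum constraints (\ref{LE2}), (\ref{LE3})); alternating a perturbation $\pm\varepsilon$ along that cycle gives matchings $p\pm\varepsilon D\in P$, contradicting extremality.

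The main obstacle is the final claim in that sketch --- that the cycle perturbation stays in $P$. Preserving (\ref{LE1})--(\ref{LE3}) is routine bookkeeping, but inequality (\ref{LE4}) couples a row of $p$ (the strict upper contour set of agent $i$ at $o$) with a column of $p$ (the agents with higher priority than $i$ at $o$), so one must choose the perturbation direction $D$ carefully and use the complementary-slackness observation that $p(i,o)>0$ forces (\ref{LE4}) to hold with equality at $(i,o)$ (the footnoted fact of \citet{RothetalMOR93}) to conclude that those tight constraints are not violated. Since this is precisely the content of the cited papers, I would keep the main text to the short Minkowski-theorem argument and cite them for the vertex characterization, mirroring the brevity of the proof of Proposition~\ref{prop:expost-to-fractional}.
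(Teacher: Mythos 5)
Your proposal is correct and follows essentially the same route as the paper: both rest on the cited result of \citet{Vand89a}, \citet{Rothblum1992}, and \citet{RothetalMOR93} that the extreme points of the polytope defined by (\ref{LE1})--(\ref{LE4}) are exactly the deterministic stable matchings, and then conclude that any point of this polytope --- i.e., any fractionally stable random matching --- is a convex combination of such matchings and hence ex-post stable. Your additional sketch of the integrality argument is not relied upon (you rightly defer to the citations for the delicate cycle-perturbation step), so nothing further is needed.
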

\begin{proof}[\textbf{Proof}]
As already mentioned when introducing fractional stability, for strict priorities, the extreme points of the polytope defined by the linear inequalities (\ref{LE1}), (\ref{LE2}), (\ref{LE3}), and (\ref{LE4}) are exactly the (incidence vectors of the) deterministically stable matchings \citep{Vand89a,Rothblum1992,RothetalMOR93}. Since, by definition, fractionally stable random matchings are solutions to the linear inequalities, a fractionally stable random matching can be decomposed into deterministic stable matchings, which implies that a fractionally stable random matching is ex-post stable.
\end{proof}

\begin{proposition}\label{prop:fractional-to-claimwise}
Fractional stability implies claimwise stability.
\end{proposition}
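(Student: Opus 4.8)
The plan is to derive the claimwise stability inequality (\ref{LE8}) directly from the fractional stability inequality (\ref{LE5}) by dropping all but one term on the right-hand side. Fix an arbitrary pair $(i,o)\in N\times O$ and an agent $j\in N$ with $i\succ_o j$. First I would observe that $i\succ_o j$ means exactly $j\prec_o i$, so the agent $j$ is one of the agents indexed in the sum $\sum_{k:k\prec_o i}p(k,o)$ appearing on the right-hand side of (\ref{LE5}).

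Next, since $p$ is a random matching, all entries $p(k,o)$ are nonnegative by (\ref{LE1}), so discarding every summand except the one corresponding to $j$ can only decrease the sum; that is,
\[
\sum_{k:k\prec_o i}p(k,o)\ \geq\ p(j,o).
\]
Finally, I would chain this with fractional stability of $p$, which gives $\sum_{o':o'\succ_i o}p(i,o')\geq \sum_{k:k\prec_o i}p(k,o)$, to conclude $\sum_{o':o'\succ_i o}p(i,o')\geq p(j,o)$, which is precisely (\ref{LE8}). Since $(i,o)$ and $j$ were arbitrary (subject to $i\succ_o j$), $p$ is claimwise stable.

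There is essentially no obstacle here: the only ingredient beyond the definitions is the nonnegativity of probabilities, which lets us pass from the "aggregate" claim in fractional stability (where agent $i$ compares his upper contour consumption against the \emph{combined} share of \emph{all} lower priority agents at $o$) to the "individual" claim in claimwise stability (where $i$ compares it against \emph{one} lower priority agent's share of $o$). If desired, one can instead phrase the argument contrapositively via the violation condition (\ref{LE6}): a claim of $i$ against $j$ at $o$, i.e.\ $p(j,o)>\sum_{o':o'\succ_i o}p(i,o')$, immediately yields $\sum_{k:k\prec_o i}p(k,o)\geq p(j,o)>\sum_{o':o'\succ_i o}p(i,o')$, a violation of fractional stability — but the direct argument above is cleaner.
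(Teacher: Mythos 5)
Your argument is correct and is essentially the paper's proof: the paper argues contrapositively (a claim of $i$ against $j$ at $o$ gives $\sum_{k:k\prec_o i}p(k,o)\geq p(j,o)>\sum_{o':o'\succ_i o}p(i,o')$, violating (\ref{LE5})), which is exactly the variant you sketch at the end, and your direct version is just the same one-term-versus-sum observation using nonnegativity. No gap.
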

\begin{proof}[\textbf{Proof}]
Consider a random matching $p$ that is not claimwise stable. Then, for some pair $(i,o)\in N\times O$ and some $j\in N$ such that $i\succ_oj$, strict inequality (\ref{LE7}) applies: $$p(j,o)>\sum_{o':o'\succ_i o}p(i,o'),$$ i.e., agent $i$ has a claim against agent $j$ with respect to object $o$.
But this implies that $$\sum_{k:k\prec_o i}p(k,o)> \sum_{o':o'\succ_i o}p(i,o').$$ Hence, $p$ is not fractionally stable.
Thus, fractional stability implies claimwise stability.
\end{proof}

\begin{proposition}\label{prop:claimwise-notto-expost}
Claimwise stability does not imply ex-post / fractional stability.
\end{proposition}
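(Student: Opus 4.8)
The plan is to exhibit a counterexample. Because fractional stability and ex-post stability coincide in the base model (Propositions~\ref{prop:expost-to-fractional} and~\ref{prop:fractional-to-expost}), it is enough to produce a single random matching that is claimwise stable yet violates one of the fractional-stability inequalities~(\ref{LE6}); such a matching is then automatically neither fractionally nor ex-post stable.

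The design principle I would use exploits the one place where the two notions differ: claimwise stability~(\ref{LE8}) compares agent $i$'s upper-contour mass at $o$ against \emph{each individual} lower-priority agent's probability $p(j,o)$, whereas fractional stability~(\ref{LE5})/(\ref{LE6}) compares it against the \emph{sum} $\sum_{j\prec_o i}p(j,o)$. So I want an object $o^{*}$ whose priority order ranks at least three agents, $i\succ_{o^{*}}j_{1}\succ_{o^{*}}j_{2}$, arranged so that neither $p(j_{1},o^{*})$ nor $p(j_{2},o^{*})$ on its own exceeds $i$'s upper-contour mass at $o^{*}$ (no individual claim), while $p(j_{1},o^{*})+p(j_{2},o^{*})$ does (a violation of~(\ref{LE6})); this forces $o^{*}$ to sit strictly below $i$'s most preferred object. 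A concrete instance that works: take $N=\{1,2,3\}$, $O=\{x,y,z\}$, let $q$ be the uniform random matching (all entries $\nicefrac{1}{3}$), preferences $\succ_{1}\colon y,x,z$, $\succ_{2}\colon z,y,x$, $\succ_{3}\colon x,y,z$, and priorities $\succ_{x}\colon 1,2,3$, $\succ_{y}\colon 2,3,1$, $\succ_{z}\colon 1,3,2$.

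First I would check that $q$ is not fractionally stable: at the pair $(1,x)$ we have $q(2,x)+q(3,x)=\nicefrac{2}{3}>\nicefrac{1}{3}=q(1,y)=\sum_{o'\succ_{1}x}q(1,o')$, which is exactly~(\ref{LE6}); hence $q$ is not ex-post stable either. (Directly: in any deterministic stable matching agent $1$ cannot be assigned $z$, since --- having top priority at $x$ --- it would otherwise have justified envy toward whoever gets $x$; so every ex-post stable random matching assigns probability $0$ to $(1,z)$, whereas $q(1,z)=\nicefrac{1}{3}$.) Then I would verify that $q$ is claimwise stable by running through, for each object, the three pairs $i\succ_{o}j$ and confirming $\sum_{o'\succ_{i}o}q(i,o')\geq\nicefrac{1}{3}$. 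The reason this goes through is built into the choice of orders: whenever an object $o$ is agent $i$'s top choice --- so that $i$'s upper-contour mass at $o$ is $0$ --- that agent is placed \emph{last} in $\succ_{o}$, so no constraint of the dangerous form ``$0\geq\nicefrac{1}{3}$'' ever appears; in all remaining cases $i$'s better objects carry mass at least $\nicefrac{1}{3}$.

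The hard part is finding such a configuration in the first place, i.e.\ making all nine claimwise inequalities hold simultaneously. There is real tension here: the agents who receive only small shares of $o^{*}$ necessarily hold large shares of the remaining objects, and claimwise stability at those objects then forces whoever outranks them there to carry a large upper-contour mass --- which is why the three priority orders above cannot all be aligned with a single permutation and must instead have a ``cyclic'' flavour. Once a configuration with this property is found, the rest is a routine finite verification.
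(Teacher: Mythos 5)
Your proposal is correct and follows essentially the same route as the paper: exhibit a uniform random matching on a $3\times 3$ instance with cyclically arranged preferences and priorities (each agent's top object ranks him last), verify a single violation of inequality~(\ref{LE6}) to rule out fractional and hence ex-post stability, and check the claimwise inequalities case by case. Your specific instance differs from the paper's but works, and I verified that all claimwise constraints hold while the pair $(1,x)$ violates fractional stability, exactly as you claim.
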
	
\begin{proof}[\textbf{Proof}]
Let $N=\{1,2,3\}$ and $O=\{x,y,z\}$. Consider the following preferences and priorities: 	
\begin{center}
\begin{tabular}{lccc}
$\succ_1$:&$x$&$z$&$y$\\
$\succ_2$:&$y$&$x$&$z$\\
$\succ_3$:&$z$&$x$&$y$
\end{tabular}
\quad\quad
\begin{tabular}{lccc}
$\succ_x$:&$2$&$3$&$1$\\
$\succ_y$:&$1$&$3$&$2$\\
$\succ_z$:&$2$&$1$&$3$
\end{tabular}	
\end{center}				
Then, consider $p^A$, which is the deterministic agent optimal stable matching,	
\begin{center}
\begin{tabular}{c}
$p^A=\begin{pmatrix}
1&0&0\\
0&1&0\\
0&0&1
\end{pmatrix}$
\end{tabular}
\quad\quad	
\begin{tabular}{lccc}
$\succ_1$:&$\bm{x}$&$z$&$y$\\
$\succ_2$:&$\bm{y}$&$x$&$z$\\
$\succ_3$:&$\bm{z}$&$x$&$y$
\end{tabular}
\end{center}	
and consider $p^O$, which is the deterministic object optimal stable matching,	
\begin{center}
\begin{tabular}{c}
$p^O=\begin{pmatrix}
0&0&1\\
1&0&0\\
0&1&0
\end{pmatrix}$
\end{tabular}
\quad\quad	
\begin{tabular}{lccc}
$\succ_1$:&$x$&$\bm{z}$&$y$\\
$\succ_2$:&$y$&$\bm{x}$&$z$\\
$\succ_3$:&$z$&$x$&$\bm{y}$.
\end{tabular}
\end{center}
Let $q$ be the uniform random matching. Thus,
\begin{center}
\begin{tabular}{c}
$q=\begin{pmatrix}
\nicefrac{1}{3}&\nicefrac{1}{3}&\nicefrac{1}{3}\\
\nicefrac{1}{3}&\nicefrac{1}{3}&\nicefrac{1}{3}\\
\nicefrac{1}{3}&\nicefrac{1}{3}&\nicefrac{1}{3}
\end{pmatrix}.$
\end{tabular}										
\end{center}	
First, since $p^O$ is the deterministic object optimal stable matching, agent 1 does not get $y$ in any deterministic stable matching. Hence, random matching $q$ is not ex-post stable. Alternatively, we can check that fractional stability is violated and inequality (\ref{LE6}) holds for agent 2 and object $x$:
$$\frac{2}{3}=\sum_{j:j\prec_x 2}q(j,x)>\sum_{o':o'\succ_2 x}q(2,o') =\frac{1}{3} .$$
Second, we show that random matching $q$ is claimwise stable by checking if there are claims of an agent $i$ against an agent $j$, i.e., are there $(i,o)\in N\times O$ and $j\in N$ such that $i\succ_oj$ and $q(j,o)>\sum_{o':o'\succ_i o}q(i,o')$? We show that there are no claims.
\begin{itemize}
\item For an agent $i\in N$, a claim for a higher probability for his most preferred object against any of the other agents is not justified because all other agents have higher priority for that object.
\item For an agent $i\in N$, a claim for a higher probability for his second preferred object against any of the other agents is not justified because he gets an object in the strict upper contour set of his second preferred object with probability $\nicefrac{1}{3}$ whereas any other agent also gets that object with probability $\nicefrac{1}{3}$ (a probability that is not higher).
\item No agent $i\in N$ would claim a higher probability for his least preferred object (because he gets an object in the strict upper contour set of his least preferred object with probability $\nicefrac{2}{3}$ whereas any other agent only gets that object with probability $\nicefrac{1}{3}$).
\end{itemize}				
\end{proof}

We conclude with a proof of Proposition~\ref{proposition1}. We show that for deterministic matchings, all the stability concepts for random matchings coincide with stability for deterministic matchings.

\begin{proof}[\textbf{Proof of Proposition~\ref{proposition1}}]
Let deterministic matching $p$ be stable and note that for deterministic matchings no envy implies ex-ante stability \citep[as also noted by][]{KeUn15a}. By Proposition~\ref{prop:exante-to-rexpost}, $p$ is robust ex-post stable; by Proposition~\ref{prop:rexpost-to-expost}, $p$ is ex-post stable; by Proposition~\ref{prop:expost-to-fractional}, $p$ is fractionally stable; and by Proposition~\ref{prop:fractional-to-claimwise}, $p$ is claimwise stable. We are done if we can show that any deterministic claimwise stable matching is stable. Assume, by contradiction, that there exists a deterministic matching $q$ that is claimwise stable but not stable. Then, there exist $i,j\in N$ and $o,o'\in O$ such that $q(i,o')=1$, $q(j,o)=1$, $o\succ_i o'$, and $i\succ_{o}j$. But then, $1=q(j,o)>\sum_{o'':o''\succ_i o}p(i,o'')=0$ and agent $i$ has a claim against agent $j$. \citet{Afac15a} also proved that any deterministic claimwise stable matching is stable.
\end{proof}

We next extend our base model and corresponding results in two steps. First adding weak preferences and weak priorities (Section~\ref{subsection:weakprefpri}) and second allowing for unacceptability and a different number of agents and objects (Section~\ref{subsection:weakdifferentnumbers}) allows us to separately show the required adjustments in the stability concepts and the associated proof techniques required when (stepwise) extending the base model.

\subsection{Weak preferences and weak priorities}\label{subsection:weakprefpri}

Recall that with strict preferences and strict priorities, a deterministic matching $p$ is stable if for each pair $(i,o)\in N\times O$,
\begin{equation}\tag{\ref{LE4}}
p(i,o)+\sum_{o':o'\succ_i o}p(i,o')+ \sum_{j:j\succ_{o}i}p(j,o)\geq 1.
\end{equation}

If preferences or priorities can be weak, i.e., agents' preferences are weak orders over $O$ and objects' priorities are weak orders over $N$, then various deterministic stability notions with varying degrees of strength are possible \citep[see][]{IrvingDAM1994}. A deterministic matching $p$ is \emph{weakly stable} if there is no \emph{strict blocking} agent-object pair such that, by being matched, each would be strictly better off than at their current matches at $p$.  In the case of \emph{strong stability}, there is no \emph{weak blocking} agent-object pair such that, by being matched, one of them is strictly better off, whilst the other must be no worse off than at their current matches at $p$.  While for weak preferences and weak priorities, weakly stable deterministic matchings always exist, it is well known that the set of strongly stable deterministic matchings may be empty. Furthermore, we consider the absence of strict blocking pairs as the most natural no envy / stability notion and therefore focus on weak stability.

\begin{definition}[\textbf{No envy / weak stability for deterministic matchings}]
\normalfont A deterministic matching $p$ has \textbf{\emph{no envy}} or is \textbf{\emph{weakly stable}} if there exists no agent $i$ who is matched to object $o'$ but prefers object $o$ while object $o$ is matched to some agent $j$ with lower priority than $i$, i.e., there exist no $i,j\in N$ and no $o,o'\in O$ such that $p(i,o')=1$, $p(j,o)=1$, $o\succ_i o'$, and $i\succ_{o}j$.\label{def:weakstability}
\end{definition}

A deterministic matching $p$ is weakly stable if it satisfies the following inequalities: for each pair $(i,o)\in N\times O$,
\begin{equation}\label{LE4W}
p(i,o)+\sum_{o':o'\succsim_i o;o'\neq o}p(i,o')+ \sum_{j:j\succsim_{o}i;j\neq i}p(j,o)\geq 1.
\end{equation}

If one breaks all preference and priority ties, then the well-known deferred-acceptance algorithm \citep{GS62} computes a deterministic matching that is weakly stable.\medskip

The definitions of ex-ante, ex-post, and robust ex-post stability essentially remain the same as before.

\begin{definition}[\textbf{No ex-ante envy / ex-ante weak stability}]
\normalfont	A random matching $p$ has \emph{\textbf{no ex-ante envy}} or is \emph{\textbf{ex-ante weakly stable}} if there exists no agent $i$ who is matched with positive probability to object $o'$ but prefers a higher probability for object $o$ while object $o$ is matched with positive probability to some agent $j$ with lower priority than $i$, i.e., there exists no $i,j\in N$ and no $o,o'\in O$ such that $p(i,o')>0$, $p(j,o)>0$, $o\succ_i o'$, and $i\succ_oj$.\label{def:weakex-antestability}
\end{definition}

Note that the definition of Aharoni-Fleiner fractional stability (Definition~\ref{def:AFfractionalstability}) remains the same and its equivalence to no ex-ante envy follows as before.

\begin{definition}[\textbf{Ex-post weak stability}]
\normalfont	A random matching $p$ is \emph{\textbf{ex-post weakly stable}} if it can be decomposed into deterministic weakly stable matchings.\label{def:weakexpoststability}
\end{definition}

\begin{definition}[\textbf{Robust ex-post weak stability}]
\label{def:weakrobustexpoststability}
\normalfont A random matching $p$ is \emph{\textbf{robust ex-post weakly stable}} if all its decompositions are into deterministic weakly stable matchings.
\end{definition}

Next, the definition of deterministic weak stability leads to the following associated stability concept (by relaxing the ``integer solution requirement'' for inequalities (\ref{LE4W})).

\begin{definition}[\textbf{Fractional weak stability and violations of fractional weak stability}]
\normalfont	A random matching $p$ is \textit{\textbf{fractionally weakly stable}} if for each pair $(i,o)\in N\times O$,
\begin{equation}\tag{\ref{LE4W}}
p(i,o)+\sum_{o':o'\succsim_i o;o'\neq o}p(i,o')+ \sum_{j:j\succsim_{o}i;j\neq i}p(j,o)\geq 1,
\end{equation}or more compactly,
\begin{equation}\label{LEB}
\sum_{o':o'\succsim_i o;o'\neq o}p(i,o') \geq  \sum_{j:j\prec_o i}p(j,o).
\end{equation}
A \textit{\textbf{violation of fractional weak stability occurs if there exists a} pair $(i,o)\in N\times O$ such that}
\begin{equation}\label{LEC}
\sum_{j:j\prec_o i}p(j,o)>\sum_{o':o'\succsim_i o;o'\neq o}p(i,o').
\end{equation}\label{def:weakfractionalstability}
\end{definition}

Inequality (\ref{LEC}) implies $\sum_{o':o'\succsim_i o}p(i,o')<1$, i.e., agent $i$ receives some fraction of an object in his strict lower contour set at $o$ (if not, this would imply that $\sum_{o':o'\succsim_i o;o'\neq o}p(i,o')+p(i,o)=1$ and hence, $\sum_{j:j\prec_o i}p(j,o)+p(i,o)>1$; a contradiction to feasibility). Thus, agent $i$ would want to consume more of object $o$. Inequality (\ref{LEC}) also implies $\sum_{j:j\succsim_o i}p(j,o)<1$, i.e., object $o$ receives some fraction of an agent in its strict lower contour set at $i$. Thus, object $o$ would want to consume more of agent $i$. Moreover, strict inequality (\ref{LEC}) encodes the following envy notion: using consumption process language, as long as agent $i$ consumes objects that are different from and not worse than $o$ he does not envy the set of lower priority agents to jointly consume fractions of $o$, however, once the set of lower priority agents have consumed as much of $o$ as agent $i$'s weak upper contour set at $o$ (not including $o$), agent $i$ starts having envy towards them for any additional amounts of $o$ (unless agent $i$ can fill his remaining probability quota with object $o$).

\begin{remark}[\textbf{A symmetric reformulation of fractional weak stability and its violations}]\label{remark:symmetricFractionalWeak}
\normalfont In the definition of fractional weak stability by inequalities (\ref{LEB}) and of a violation of fractional weak stability by inequality (\ref{LEC}) we have taken the viewpoint of an agent who considers the consumptions of lower priority agents for an object. The symmetric formulations when taking the viewpoint of an object that ``considers'' the matches of lower preferred objects to an agent are as follows. A random matching $p$ is fractionally weakly stable if for each pair $(i,o)\in N\times O$,
\begin{equation}\tag{\ref{LEB}'}
\sum_{j:j\succsim_o i;j\neq i}p(j,o) \geq  \sum_{o':o'\prec_i o}p(i,o').
\end{equation}
We can write a violation of fractional weak stability as, there exists a pair $(i,o)\in N\times O$ such that
\begin{equation}\tag{\ref{LEC}'}
\sum_{o':o'\prec_i o}p(i,o')>\sum_{j:j\succsim_o i;j\neq i}p(j,o).
\end{equation}\hfill~$\diamond$
\end{remark}

The following lemma follows from the definition of fractional weak stability via linear inequalities.

\begin{lemma}\label{lemma:weakfrac-convex}
Fractional weak stability is convex.
\end{lemma}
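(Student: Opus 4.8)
The plan is to mimic exactly the proof of Lemma~\ref{lemma:frac-convex}, since fractional weak stability is, by Definition~\ref{def:weakfractionalstability}, defined by the system of linear inequalities~(\ref{LEB}) (equivalently~(\ref{LE4W})), one inequality per pair $(i,o)\in N\times O$. A convex combination of points satisfying a common system of linear ``$\geq$'' inequalities again satisfies that system, so the result is immediate; the only thing to check is that the set of fractional weakly stable matchings is also closed under convex combinations within the set of random matchings, but this too is standard since the random matchings (bistochastic matrices, constraints (\ref{LE1})--(\ref{LE3})) form a convex set.

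Concretely, I would let $p$ and $q$ be two fractionally weakly stable random matchings and fix $\lambda\in[0,1]$. First note $\lambda p+(1-\lambda)q$ is again a random matching, since (\ref{LE1})--(\ref{LE3}) are preserved under convex combinations. Then, for each pair $(i,o)\in N\times O$, apply inequality~(\ref{LEB}) to $p$ and to $q$ separately, multiply the first by $\lambda\geq 0$ and the second by $(1-\lambda)\geq 0$, and add:
\[
\lambda\!\!\sum_{o':o'\succsim_i o;o'\neq o}\!\! p(i,o') + (1-\lambda)\!\!\sum_{o':o'\succsim_i o;o'\neq o}\!\! q(i,o') \;\geq\; \lambda\!\!\sum_{j:j\prec_o i}\!\! p(j,o) + (1-\lambda)\!\!\sum_{j:j\prec_o i}\!\! q(j,o).
\]
By linearity of the finite sums, the left-hand side equals $\sum_{o':o'\succsim_i o;o'\neq o}[\lambda p+(1-\lambda)q](i,o')$ and the right-hand side equals $\sum_{j:j\prec_o i}[\lambda p+(1-\lambda)q](j,o)$, so $\lambda p+(1-\lambda)q$ satisfies~(\ref{LEB}) for every pair $(i,o)$ and is therefore fractionally weakly stable.

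There is really no obstacle here: the argument is purely the observation that a solution set of weak linear inequalities is a convex polytope. The one subtlety worth a sentence — to justify the phrase ``the convex combination of $*$-stable \emph{matchings} is $*$-stable'' — is that one must stay inside the class of random matchings, which is handled by the preservation of (\ref{LE1})--(\ref{LE3}); this is exactly parallel to the strict case and is why, as the text notes, the lemma ``follows from the definition of fractional weak stability via linear inequalities.'' If anything, the only care needed is bookkeeping with the ``$o'\neq o$'' and ``$j\neq i$'' side conditions in the weak-order sums, but since those index sets are the same for $p$, $q$, and the combination (they depend only on $\succ_i$ and $\succ_o$, not on the matching), the linearity step goes through verbatim.
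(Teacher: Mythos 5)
Your argument is correct and is exactly the one the paper intends: the paper omits the proof of this lemma, noting only that it ``follows from the definition of fractional weak stability via linear inequalities,'' which is precisely your verbatim adaptation of the proof of Lemma~\ref{lemma:frac-convex} (linearity of the inequalities~(\ref{LEB}) plus preservation of the bistochasticity constraints under convex combination). Nothing is missing.
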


When preferences can be weak, then the notion of a claim can be adjusted as follows: using consumption process language, as long as agent $i$ consumes objects that are different from and not worse than $o$ he does not envy lower priority agent $j$ to consume fractions of $o$, however, once agent $j$ has consumed as much of $o$ as agent $i$'s weak upper contour set at $o$ (not including $o$), agent $i$ starts having envy towards him for any additional amounts of $o$ (unless agent $i$ can fill his remaining probability quota with object $o$). An agent $i\in N$ has a \emph{claim} against an agent $j\in N$, if there exists an object $o\in O$ such that $i\succ_o j$ and
\begin{equation}\label{LE7New}
p(j,o)>\sum_{o
:o'\succsim_i o;o'\neq o}p(i,o').
\end{equation}
Inequality (\ref{LE7New}) implies $\sum_{o':o'\succsim_i o}p(i,o')<1$, i.e., agent $i$ receives some fraction of an object in his strict lower contour set at $o$ (if not, this would imply that $\sum_{o':o'\succsim_i o;o'\neq o}p(i,o')+p(i,o)=1$ and hence, $p(j,o)+p(i,o)>1$; a contradiction to feasibility). Thus, agent $i$ would want to consume more of object $o$.\medskip

A random matching is \emph{claimwise weakly stable} if it does not admit any claim.

\begin{definition}[\textbf{Claimwise weak stability}]
\normalfont	A random matching $p$ is \textit{\textbf{claimwise weakly stable}} if for each pair $(i,o)\in N\times O$  and each $j\in N$ such that $i\succ_oj$,
\begin{equation}\label{LE8New}
\sum_{o':o'\succsim_i o;o'\neq o}p(i,o')\geq  p(j,o).
\end{equation}\label{def:weaklyclaimwisestable}
\end{definition}

The following lemma follows from the definition of claimwise weak stability via linear inequalities.

\begin{lemma}\label{lemma:weakclaimwise-convex}
Claimwise weak stability is convex.
\end{lemma}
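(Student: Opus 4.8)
\textbf{Proof proposal for Lemma~\ref{lemma:weakclaimwise-convex} (convexity of claimwise weak stability).}

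The plan is to mimic exactly the argument used for fractional stability in Lemma~\ref{lemma:frac-convex} (and stated to carry over for claimwise stability in Lemma~\ref{lemma:claimwise-convex}), since claimwise weak stability is defined purely by the linear inequalities (\ref{LE8New}). First I would take two claimwise weakly stable random matchings $p$ and $q$ and fix $\lambda\in[0,1]$; set $r=\lambda p+(1-\lambda)q$. Note that $r$ is again a random matching, because the bistochasticity conditions (\ref{LE1}), (\ref{LE2}), (\ref{LE3}) are themselves linear and preserved under convex combination — this is worth a one-line remark so the reader knows $r$ is a legitimate object to test for stability.

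Next I would apply the defining inequality (\ref{LE8New}) to both $p$ and $q$: for every pair $(i,o)\in N\times O$ and every $j\in N$ with $i\succ_o j$,
$$\sum_{o':o'\succsim_i o;o'\neq o}p(i,o')\geq p(j,o)\quad\text{and}\quad\sum_{o':o'\succsim_i o;o'\neq o}q(i,o')\geq q(j,o).$$
Multiplying the first by $\lambda\geq 0$, the second by $(1-\lambda)\geq 0$, and adding, then using linearity of the sum over $o'$ to pull the coefficients inside, gives
$$\sum_{o':o'\succsim_i o;o'\neq o}r(i,o')=\lambda\sum_{o':o'\succsim_i o;o'\neq o}p(i,o')+(1-\lambda)\sum_{o':o'\succsim_i o;o'\neq o}q(i,o')\geq \lambda p(j,o)+(1-\lambda)q(j,o)=r(j,o),$$
which is precisely inequality (\ref{LE8New}) for $r$. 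Since the pair $(i,o)$ and the agent $j$ were arbitrary (subject to $i\succ_o j$), $r$ is claimwise weakly stable.

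There is no real obstacle here: the only thing to be slightly careful about is that the index set $\{o':o'\succsim_i o;\,o'\neq o\}$ in the weak-preference version is the same set for $p$, $q$, and $r$ — it depends only on agent $i$'s preference $\succ_i$ and the object $o$, not on the matching — so the termwise combination is legitimate. One could even remark that, as with fractional stability, convexity is immediate because claimwise weak stability is exactly the intersection of the bistochastic polytope with the half-spaces (\ref{LE8New}), and an intersection of convex sets is convex; I would likely just include the short direct computation above for self-containedness and, as the paper does after Lemma~\ref{lemma:frac-convex}, note that the argument is entirely parallel to the fractional case.
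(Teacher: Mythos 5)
Your proof is correct and is essentially the argument the paper has in mind: the paper omits the proof precisely because claimwise weak stability is defined by the linear inequalities (\ref{LE8New}), so convexity follows by the same termwise linear-combination computation used for Lemma~\ref{lemma:frac-convex}, which is exactly what you carry out. Your observation that the index set $\{o':o'\succsim_i o;\,o'\neq o\}$ depends only on $\succsim_i$ and $o$, not on the matching, is the right (and only) point of care.
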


It follows easily that if we restrict attention to deterministic matchings, then all the weak stability concepts for random matchings coincide with standard weak stability / no envy (Definition~\ref{def:weakstability}). The proof of Proposition~\ref{proposition1weak} follows the same arguments as the proof of our previous Proposition~\ref{proposition1} and we therefore omit it.
	
\begin{proposition}\label{proposition1weak}
For deterministic matchings, all the weak stability concepts for random matchings with weak preferences and weak priorities coincide with weak stability / no envy for deterministic matchings.
\end{proposition}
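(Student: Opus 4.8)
The plan is to mimic exactly the proof of Proposition~\ref{proposition1}, replacing ``stable'' by ``weakly stable'' and the corresponding inequalities~(\ref{LE4}) by their weak analogues~(\ref{LE4W}) throughout. Concretely, let $p$ be a deterministic weakly stable matching. First I would observe that for a deterministic matching, ``no envy'' in the sense of Definition~\ref{def:weakstability} is literally the same condition as ``no ex-ante envy'' in the sense of Definition~\ref{def:weakex-antestability}, because when $p(i,o')\in\{0,1\}$ the statements $p(i,o')>0$ and $p(i,o')=1$ are equivalent; hence $p$ is ex-ante weakly stable. Then I would chain the (weak) implications that will have been established in Section~\ref{subsection:weakprefpri} — ex-ante weak stability $\Rightarrow$ robust ex-post weak stability $\Rightarrow$ ex-post weak stability $\Rightarrow$ fractional weak stability $\Rightarrow$ claimwise weak stability — to conclude that $p$ satisfies all five random weak-stability concepts.

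For the converse direction it suffices, exactly as before, to show that a deterministic claimwise weakly stable matching is weakly stable, since claimwise weak stability is the weakest concept in the chain. I would argue by contraposition: suppose a deterministic matching $q$ is not weakly stable, so there exist $i,j\in N$ and $o,o'\in O$ with $q(i,o')=1$, $q(j,o)=1$, $o\succ_i o'$, and $i\succ_o j$. Since $o\succ_i o'$ and $q$ puts all of agent $i$'s probability on $o'$, we get $\sum_{o'':o''\succsim_i o;\,o''\neq o}q(i,o'')=0$, while $q(j,o)=1$; thus inequality~(\ref{LE7New}) holds for the pair $(i,o)$ and agent $j$, i.e., $i$ has a claim against $j$, so $q$ is not claimwise weakly stable. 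Combining the two directions gives the equivalence of all six notions on deterministic matchings.

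The only point requiring a little care — and the likely ``main obstacle,'' though it is minor — is the handling of indifferences in the strict-upper-contour sums: one must note that the term $\sum_{o'':o''\succsim_i o;\,o''\neq o}q(i,o'')$ is indeed $0$ because $o'\succ_i o$ is false (we have $o\succ_i o'$, hence $o'\not\succsim_i o$), so the single nonzero coordinate $q(i,o')=1$ does not contribute to that sum; and symmetrically that $i\succ_o j$ with $i\neq j$ places $j$ strictly below $i$ in $\succ_o$, which is exactly the hypothesis in the definition of a claim. Given these observations the argument is verbatim the one already written for the strict case, so I would simply state that the proof follows the same arguments as the proof of Proposition~\ref{proposition1} (as the authors indeed announce), and omit the routine details.
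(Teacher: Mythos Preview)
Your proposal is correct and is exactly the approach the paper intends: the authors do not give a separate proof but state that it ``follows the same arguments as the proof of our previous Proposition~\ref{proposition1},'' which is precisely what you have spelled out, including the only new wrinkle (that $o\succ_i o'$ forces $o'\not\succsim_i o$, so the weak upper contour sum in~(\ref{LE7New}) vanishes).
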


Our taxonomy of the stability concepts for random matchings  with weak preferences and weak priorities now looks as follows (see Figure~\ref{fig:weakpart-relations}).\medskip

\begin{figure}[H]
\begin{framed}
\begin{center}
\textbf{Section~\ref{subsection:weakprefpri} results:} For random matchings we have\\\bigskip
\scalebox{1}{
\begin{tikzpicture}
\tikzstyle{pfeil}=[->,>=angle 60, shorten >=1pt,draw]
\tikzstyle{onlytext}=[]
    \node[onlytext] (exante) at (2,4) {ex-ante weak stability (Def.~\ref{def:weakex-antestability})};
	\node[onlytext] (rexpost) at (2,2) {robust ex-post weak stability (Def.~\ref{def:weakrobustexpoststability})};
	\node[onlytext] (fractional) at (2,-2) {fractional weak stability (Def.~\ref{def:weakfractionalstability})};
	\node[onlytext] (expost) at (2,0) {ex-post weak stability (Def.~\ref{def:weakexpoststability})};
	\node[onlytext] (claimwise) at (2,-4) {claimwise weak stability (Def.~\ref{def:weaklyclaimwisestable})};

    \node[onlytext] (prop11) at (0.5,3) {Prop.~\ref{prop:weakexante-to-rexpost}};
    \node[onlytext] (prop13) at (0.5,1) {Prop.~\ref{prop:weakrexpost-to-expost}};
    \node[onlytext] (prop15) at (0.5,-1) {Prop.~\ref{prop:weakexpost-to-fractional}};
    \node[onlytext] (prop16) at (0.5,-3) {Prop.~\ref{prop:weakfractional-to-claimwise}};

    \node[onlytext] (prop2) at (3.5,3) {Prop.~\ref{prop:weakrexpost-notto-exante}};
    \node[onlytext] (prop4) at (3.5,1) {Prop.~\ref{prop:weakexpost-notto-rexpost}};
	\node[onlytext] (open) at (3.5,-1) {Prop.~\ref{prop:weakfractional-notto-weakexpost}};
    \node[onlytext] (prop8) at (3.5,-3) {Prop.~\ref{prop:weakclaimwise-notto-expost}};
	
	\draw[pfeil, thick, blue, bend right] (rexpost) to (expost);
	\draw[pfeil, thick, blue, bend right] (fractional) to (claimwise);
	\draw[pfeil, thick, blue, bend right] (exante) to (rexpost);
	\draw[pfeil, thick, blue, bend right] (expost) to (fractional);

    \draw[pfeil, thick, red, bend right] (fractional) to (expost);
    
    \draw[pfeil, thick, red, bend right] (expost) to (rexpost);
    
	\draw[pfeil, thick, red, bend right] (claimwise) to (fractional);
	\draw[pfeil, thick, red, bend right] (rexpost) to (exante);

	\draw[thick, red] (2.2,-3.2) to (2.6,-2.8);
	
	\draw[thick, red] (2.2,-1.2) to (2.6,-0.8);
	
	\draw[thick, red] (2.2,0.8) to (2.6,1.2);
	
	\draw[thick, red] (2.2,2.8) to (2.6,3.2);
	
\end{tikzpicture}}
\end{center}	
\caption{Relations between stability concepts for random matchings with weak preferences and weak priorities.}
\label{fig:weakpart-relations}
\end{framed}
\end{figure}

The arguments in the proofs of Propositions~\ref{prop:exante-to-rexpost} and \ref{prop:rexpost-notto-exante} remain valid to prove that ex-ante weak stability implies robust ex-post weak stability but not vice versa.

\begin{proposition}\label{prop:weakexante-to-rexpost}
Ex-ante weak stability implies robust ex-post weak stability.			
\end{proposition}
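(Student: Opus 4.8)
The plan is to prove the contrapositive, reusing essentially verbatim the argument from the proof of Proposition~\ref{prop:exante-to-rexpost}. The key observation is that the ``strict blocking pair'' condition appearing in the definition of weak stability for deterministic matchings (Definition~\ref{def:weakstability}) is literally the same as the one in the definition of ex-ante weak stability (Definition~\ref{def:weakex-antestability}): in both cases one asks for agents $i,j\in N$ and objects $o,o'\in O$ with $o\succ_i o'$ and $i\succ_o j$, the only difference being that the relevant matching values are $\{0,1\}$-valued in the deterministic case and merely positive in the ex-ante case.

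Concretely, suppose a random matching $p$ is \emph{not} robust ex-post weakly stable. By Definition~\ref{def:weakrobustexpoststability}, $p$ admits a decomposition into deterministic matchings one of which, call it $q$ and let $\lambda>0$ denote its weight, is not weakly stable. Then by Definition~\ref{def:weakstability} there exist agents $i,j\in N$ and objects $o,o'\in O$ with $q(i,o')=1$, $q(j,o)=1$, $o\succ_i o'$, and $i\succ_o j$. Since $q$ enters the decomposition of $p$ with positive weight and all remaining terms are nonnegative, $p(i,o')\geq\lambda>0$ and $p(j,o)\geq\lambda>0$, while $o\succ_i o'$ and $i\succ_o j$ still hold. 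Hence, by Definition~\ref{def:weakex-antestability}, $p$ is not ex-ante weakly stable. Taking the contrapositive yields the claim.

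There is essentially no obstacle here: the statement and proof are the weak-preferences / weak-priorities analogue of Proposition~\ref{prop:exante-to-rexpost}, and since the relevant definitions differ from the strict case only in permitting indifferences elsewhere, and indifferences play no role in the strict blocking condition used above, the argument transfers word for word. The only point worth double-checking is that no separate treatment of equal-priority or equal-preference agents is required, which is indeed the case, as both notions are defined purely in terms of strict blocking pairs.
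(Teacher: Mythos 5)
Your proof is correct and is exactly the paper's intended argument: the paper simply notes that the contrapositive argument of Proposition~\ref{prop:exante-to-rexpost} carries over verbatim, since weak stability (Definition~\ref{def:weakstability}) and ex-ante weak stability (Definition~\ref{def:weakex-antestability}) are both defined via the same strict blocking condition. Your additional remark that indifferences play no role in this condition is precisely why the transfer works.
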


\begin{proposition}\label{prop:weakrexpost-notto-exante}
Robust ex-post weak stability does not imply ex-ante weak stability.
\end{proposition}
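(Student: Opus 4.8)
The plan is to reuse the construction from the proof of Proposition~\ref{prop:rexpost-notto-exante} essentially verbatim, checking that nothing in that argument relied on preferences or priorities being strict. Since strict orders are a special case of weak orders, the instance given there --- with $N=\{1,2,3\}$, $O=\{x,y,z\}$, the cyclic preferences and priorities, and the random matching $q=\frac{1}{2}p^A+\frac{1}{2}p^O$ --- is a legitimate instance of the weak-preferences/weak-priorities model. So the first step is simply to point to that example and observe that all orders there happen to be strict.

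Second, I would verify robust ex-post weak stability of $q$. The argument that $q$'s only decomposition into deterministic matchings is $\frac{1}{2}p^A+\frac{1}{2}p^O$ is purely combinatorial (it uses only the zero entries of $q$, namely $q(1,y)=q(2,z)=q(3,x)=0$) and carries over unchanged. Then I need that $p^A$ and $p^O$ are deterministic \emph{weakly} stable; but since they were shown to be (strictly) stable and, for strict orders, weak stability coincides with stability (or more directly, a strictly stable matching has no strict blocking pair, which is exactly weak stability), this is immediate. Hence every decomposition of $q$ is into deterministic weakly stable matchings, so $q$ is robust ex-post weakly stable.

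Third, I would check that $q$ is not ex-ante weakly stable, again by transcribing the earlier argument: for agents $1,2$ and objects $z,y$ we have $q(1,z)>0$, $q(2,y)>0$, $y\succ_1 z$, and $1\succ_y 2$, which is exactly a violation of Definition~\ref{def:weakex-antestability}. Therefore $q$ is robust ex-post weakly stable but not ex-ante weakly stable.

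I do not anticipate a genuine obstacle here: the whole point is that the separating example from the strict setting already lives inside the weak setting, and both the uniqueness-of-decomposition argument and the ex-ante-envy witness are insensitive to the move from strict to weak orders. The only thing to be careful about is to state explicitly that weak stability of a deterministic matching is implied by (strict) stability when all orders are strict, so that the decomposition into $p^A,p^O$ certifies robust ex-post \emph{weak} stability and not merely robust ex-post stability.
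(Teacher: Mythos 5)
Your proposal is correct and matches the paper's approach: the paper simply states that the arguments of Proposition~\ref{prop:rexpost-notto-exante} remain valid in the weak setting, which is exactly your observation that the strict-order example embeds into the model with weak preferences and priorities, and that its stable matchings $p^A,p^O$ are weakly stable. The extra care you take in spelling out the uniqueness-of-decomposition step and the ex-ante envy witness is just an explicit version of what the paper leaves implicit.
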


The arguments in the proofs of Propositions~\ref{prop:rexpost-to-expost} and \ref{prop:expost-notto-rexpost} remain valid to prove that robust ex-post weak stability implies ex-post weak stability but not vice versa.

\begin{proposition}\label{prop:weakrexpost-to-expost}
Robust ex-post weak stability implies ex-post weak stability.
\end{proposition}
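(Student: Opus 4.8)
The plan is to mirror almost verbatim the one-line argument used for Proposition~\ref{prop:rexpost-to-expost}, adapting ``stable'' to ``weakly stable''. The key observation is that the two definitions differ only in the class of deterministic matchings they refer to: robust ex-post weak stability asks that \emph{every} decomposition consist of deterministic weakly stable matchings, whereas ex-post weak stability asks only that \emph{some} decomposition do so.

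First I would recall that, by the Birkhoff--von Neumann theorem (\citealp{Birkhoff1946,Neumann1953}), as stated in Section~\ref{section:model}, every random matching $p$ admits at least one decomposition $p=\sum_{j=1}^k\lambda_j P_j$ into deterministic matchings with $\lambda_j\in(0,1]$ and $\sum_j\lambda_j=1$. Fix such a decomposition. Then, if $p$ is robust ex-post weakly stable, by Definition~\ref{def:weakrobustexpoststability} this particular decomposition is into deterministic weakly stable matchings; in particular there exists a decomposition of $p$ into deterministic weakly stable matchings, which is exactly the requirement of Definition~\ref{def:weakexpoststability}. Hence $p$ is ex-post weakly stable.

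There is essentially no obstacle here: the only thing one must be slightly careful about is that the implication ``all decompositions are weakly stable $\Rightarrow$ some decomposition is weakly stable'' is not vacuous, i.e.\ that at least one decomposition exists at all — and this is precisely what Birkhoff--von Neumann guarantees, so it can simply be invoked. The proof can therefore be written in one or two sentences, exactly as in Proposition~\ref{prop:rexpost-to-expost}, and the excerpt even anticipates this by saying ``the arguments in the proofs of Propositions~\ref{prop:rexpost-to-expost} and \ref{prop:expost-notto-rexpost} remain valid.''
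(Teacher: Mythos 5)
Your proposal is correct and matches the paper's argument: the paper simply notes that the one-line proof of Proposition~\ref{prop:rexpost-to-expost} ("if all decompositions are into deterministic (weakly) stable matchings, then at least one such decomposition exists") carries over verbatim, with the existence of a decomposition guaranteed by the Birkhoff--von Neumann result already cited in Section~\ref{section:model}. Your explicit mention of that existence step is a minor (and harmless) elaboration of the same reasoning.
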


\begin{proposition}\label{prop:weakexpost-notto-rexpost}
Ex-post weak stability does not imply robust ex-post weak stability.
\end{proposition}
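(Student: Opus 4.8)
The plan is to recycle verbatim the example used in the proof of Proposition~\ref{prop:expost-notto-rexpost}, observing that it already lives in the world of weak preferences and weak priorities, since strict orders are a special case of weak orders. So I would take $N=\{1,2,3\}$, $O=\{x,y,z\}$ with the strict preferences $\succ_1\colon x\,y\,z$, $\succ_2\colon y\,z\,x$, $\succ_3\colon z\,x\,y$ and the strict priorities $\succ_x\colon 2\,3\,1$, $\succ_y\colon 3\,1\,2$, $\succ_z\colon 1\,2\,3$, together with the uniform random matching $q$ with $q(i,o)=\nicefrac13$ for all $(i,o)\in N\times O$.

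The key observation I would make explicit is that when preferences and priorities are strict, Definition~\ref{def:weakstability} of weak stability for deterministic matchings coincides with Definition~\ref{def:stability} of stability: the conditions ``$o'\succsim_i o$, $o'\neq o$'' and ``$j\succsim_o i$, $j\neq i$'' appearing in inequalities (\ref{LE4W}) reduce to the strict conditions ``$o'\succ_i o$'' and ``$j\succ_o i$'' of inequalities (\ref{LE4}), so a deterministic matching is weakly stable here iff it admits no (strict) blocking pair iff it is stable. Consequently every deterministic matching in the base-model example that was called stable is also weakly stable, and every one that was called unstable admits a strict blocking pair and is therefore not weakly stable.

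Given this, the two decompositions from the proof of Proposition~\ref{prop:expost-notto-rexpost} do the job unchanged. The decomposition $q=\frac13 p^A+\frac13 p^O+\frac13 p$ exhibits $q$ as a convex combination of three deterministic stable — hence weakly stable — matchings, so $q$ is ex-post weakly stable. On the other hand, the second decomposition
\[
q=\frac13\begin{pmatrix}0&1&0\\1&0&0\\0&0&1\end{pmatrix}
 +\frac13\begin{pmatrix}0&0&1\\0&1&0\\1&0&0\end{pmatrix}
 +\frac13\begin{pmatrix}1&0&0\\0&0&1\\0&1&0\end{pmatrix}
\]
is a decomposition of $q$ into three deterministic matchings, each of which admits a strict blocking pair and hence is not weakly stable; therefore not all decompositions of $q$ are into deterministic weakly stable matchings, and $q$ is not robust ex-post weakly stable. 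This gives an $q$ that is ex-post weakly stable but not robust ex-post weakly stable, completing the proof.

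There is essentially no obstacle here: the only point needing care is the reduction ``strict $\Rightarrow$ (weak stability $=$ stability)'' for deterministic matchings, after which the computations are exactly those already carried out for Proposition~\ref{prop:expost-notto-rexpost}. I would keep the write-up short, citing that proposition's example and decompositions rather than reproducing the verification that each listed deterministic matching is or is not (weakly) stable.
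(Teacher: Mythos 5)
Your proposal is correct and matches the paper's own treatment: the paper simply notes that the arguments (and example) from Proposition~\ref{prop:expost-notto-rexpost} remain valid under weak preferences and priorities, which is exactly the reduction you spell out, since with strict orders weak stability of a deterministic matching coincides with stability. Your additional explicit check that inequalities (\ref{LE4W}) reduce to (\ref{LE4}) in the strict case is a harmless elaboration of the same argument.
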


The arguments in the proof of Proposition~\ref{prop:expost-to-fractional} remain valid to prove that ex-post weak stability implies fractional weak stability.

\begin{proposition}\label{prop:weakexpost-to-fractional}
Ex-post weak stability implies fractional weak stability.
\end{proposition}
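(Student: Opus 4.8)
The plan is to mimic the proof of Proposition~\ref{prop:expost-to-fractional} verbatim, with ``stable'' replaced by ``weakly stable'' throughout, since the argument there only used the convexity of the target notion together with the fact that every deterministic constituent of the decomposition satisfies the target notion. Concretely, suppose $p$ is ex-post weakly stable. By Definition~\ref{def:weakexpoststability}, $p=\sum_{j=1}^k \lambda_j P_j$ where each $P_j$ is a deterministic weakly stable matching and $(\lambda_j)_{j=1}^k$ is a convex combination.

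First I would observe that each deterministic weakly stable matching $P_j$ is fractionally weakly stable: this is immediate because, by the remark following Definition~\ref{def:weakstability}, a deterministic matching is weakly stable if and only if it satisfies inequalities~(\ref{LE4W}) for every pair $(i,o)\in N\times O$, and fractional weak stability (Definition~\ref{def:weakfractionalstability}) is by definition exactly the requirement that (\ref{LE4W}) hold for every such pair --- so on deterministic matchings the two notions literally coincide (this is also contained in Proposition~\ref{proposition1weak}).

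Next I would invoke convexity: by Lemma~\ref{lemma:weakfrac-convex}, fractional weak stability is preserved under convex combinations, and this extends from two matchings to any finite convex combination by a trivial induction on $k$. Hence $p=\sum_{j=1}^k\lambda_j P_j$, being a convex combination of the fractionally weakly stable matchings $P_1,\dots,P_k$, is itself fractionally weakly stable. That completes the proof.

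I do not anticipate a genuine obstacle here: unlike the strict case, where the reverse implication (Proposition~\ref{prop:fractional-to-expost}) relied on the nontrivial polyhedral fact that the extreme points of the fractional-stability polytope are exactly the deterministic stable matchings, the forward direction needs only the elementary facts that deterministic weak stability equals the linear system~(\ref{LE4W}) and that a linear system is convex. The only point to state carefully is the finite-induction step upgrading Lemma~\ref{lemma:weakfrac-convex} from pairs to arbitrary convex combinations, but this is entirely routine. (It is worth noting --- and the excerpt flags this via Proposition~\ref{prop:weakfractional-notto-weakexpost} --- that the converse fails once priorities are weak, so this one-directional statement is the best one can hope for.)
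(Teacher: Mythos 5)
Your proposal is correct and is essentially the paper's own argument: the paper proves this proposition by noting that the proof of Proposition~\ref{prop:expost-to-fractional} (decompose into deterministic weakly stable matchings, observe each satisfies inequalities~(\ref{LE4W}) and hence is fractionally weakly stable, then apply convexity via Lemma~\ref{lemma:weakfrac-convex}) carries over verbatim. The only cosmetic point is that the needed direction is ``deterministic weakly stable $\Rightarrow$ satisfies~(\ref{LE4W})'', which holds and is also covered by your citation of Proposition~\ref{proposition1weak}.
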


However, Proposition~\ref{prop:fractional-to-expost} does not extend to weak preferences and weak priorities. The example to prove this is due to Battal Do\u{g}an.

\begin{proposition}\label{prop:weakfractional-notto-weakexpost}
Fractional weak stability does not imply ex-post weak stability.
\end{proposition}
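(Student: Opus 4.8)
The plan is to exhibit a concrete small instance with weak priorities in which some random matching satisfies the linear inequalities~(\ref{LE4W}) for every agent-object pair, yet cannot be written as a convex combination of deterministic weakly stable matchings. The natural place to look is an example where ties in the priorities are essential: with strict priorities Proposition~\ref{prop:fractional-to-expost} holds because the polytope cut out by (\ref{LE1})--(\ref{LE4}) is integral, so the failure must come precisely from the extra slack that indifferences introduce into (\ref{LE4W}) (the terms $o'\neq o$ and $j\neq i$ weaken the constraint). First I would fix a modestly small $N$ and $O$ (three or four of each), put a nontrivial tie into one or more priority lists, and write down the full list of deterministic weakly stable matchings for that instance.

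Next I would construct a candidate random matching $q$ --- typically something symmetric such as a uniform-type matrix, or an average of a couple of extreme matchings --- and verify two things. The first verification is the easy, purely mechanical part: check that $q$ satisfies inequality~(\ref{LE4W}), equivalently the compact form~(\ref{LEB}), $\sum_{o':o'\succsim_i o;o'\neq o}q(i,o') \geq \sum_{j:j\prec_o i}q(j,o)$, for every pair $(i,o)$; here the tie in the priority of the relevant object is what makes the left-hand side large enough. The second verification is the substantive one: show that $q$ admits \emph{no} decomposition into deterministic weakly stable matchings. The standard technique, already used in the proof of Proposition~\ref{prop:rexpost-notto-exante}, is to argue from the zero entries of $q$: any deterministic matching appearing in a decomposition of $q$ must be zero wherever $q$ is zero, and by following these forced zeros one pins down exactly which deterministic matchings the support of $q$ can contain; then one checks that this restricted family of deterministic matchings either is too small to sum back to $q$, or contains a matching that is not weakly stable (has a strict blocking pair), giving the contradiction.

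The main obstacle is getting the instance right: one needs the polytope defined by (\ref{LE4W}) to be strictly larger than the convex hull of the weakly stable deterministic matchings, and simultaneously to keep the example small enough that both the ``$q$ satisfies (\ref{LEB})'' check and the ``support of $q$ forces a bad/insufficient set of deterministic matchings'' argument are short. I expect the decisive feature to be an object $o$ with two agents tied at the bottom (or in some middle class) of $\succ_o$: in the fractional constraint their probability mass on $o$ can be ``absorbed'' into $i$'s own weak upper contour term via the $o'\neq o$ relaxation, so $q$ stays fractionally weakly stable, while every deterministic weakly stable matching is forced to resolve that tie in one direction, and no convex combination of such resolutions reproduces the balanced $q$. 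Since the example is attributed to Battal Do\u{g}an, I would expect it to be a clean three-by-three or four-by-four instance, and I would present it in the same format as the earlier proofs --- preference/priority tables, the list of deterministic weakly stable matchings, the candidate $q$, the verification of (\ref{LEB}), and the support argument ruling out any weakly stable decomposition.
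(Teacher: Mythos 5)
Your plan is the right kind of argument and coincides in structure with the paper's proof (a small instance with ties, a candidate $q$, a check of the inequalities~(\ref{LEB}), and a support argument pinning down the decomposition), but as written it has a genuine gap: you never actually exhibit an instance, a matching $q$, or carry out either verification. For a non-implication statement the counterexample \emph{is} the proof, and nothing in your text guarantees that the two requirements you list (fractional weak stability of $q$, and every decomposition of $q$ containing a strictly blocked deterministic matching) can be met simultaneously; indeed with strict preferences and priorities they cannot, by Proposition~\ref{prop:fractional-to-expost}, so the existence claim is exactly the nontrivial content you leave open. A roadmap of where to look does not discharge it.

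For comparison, the instance the paper uses (due to Do\u{g}an) is $N=\{1,2,3\}$, $O=\{x,y,z\}$ with agents $1$ and $3$ completely indifferent among all objects, agent $2$ with strict preferences $y\succ_2 x\succ_2 z$, priority $\succ_x$ putting $2$ and $3$ tied above $1$, and $y,z$ fully indifferent over agents; the matching $q$ with $q(1,x)=q(1,z)=q(2,y)=q(2,z)=q(3,x)=q(3,y)=\frac{1}{2}$ and zeros elsewhere satisfies~(\ref{LEB}) (agents $1$ and $3$ receive only most-preferred objects, and for the pair $(2,x)$ the left side is $q(2,y)=\frac{1}{2}\geq q(1,x)=\frac{1}{2}$ since $3$ is not below $2$ at $\succ_x$), yet its \emph{unique} decomposition is into two deterministic matchings one of which is blocked by $(2,x)$, so $q$ is not ex-post weakly stable. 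Note also that the decisive slack there comes mainly from indifferences in the \emph{preferences} (agents $1$ and $3$ indifferent among everything) together with a tie at the top of $\succ_x$, rather than from a tie at the bottom of a priority list as you conjectured; your heuristic might still lead to a working example, but until a concrete instance with both verifications is written down, the proposition is not proved.
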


\begin{proof}[\textbf{Proof}]
Let $N=\{1,2,3\}$ and $O=\{x,y,z\}$. Consider the following preferences and priorities (the brackets indicate indifferences): 	
\begin{center}
\begin{tabular}{lccc}
$\succ_1$:&$[x\ y\ z]$&&\\
$\succ_2$:&$y$&$x$&$z$\\
$\succ_3$:&$[x\ y\ z]$&&
\end{tabular}
\quad\quad
\begin{tabular}{lcc}
$\succ_x$:&$[2\ 3]$&$1$\\
$\succ_y$:&$[1\ 2\ 3]$&\\
$\succ_z$:&$[1\ 2\ 3]$&
\end{tabular}	
\end{center}				
Consider random matching $q$, which is fractionally weakly stable because agents~1 and 3 only get best objects and from agent~2's perspective no agent with a lower priority consumes his best object~$y$, which he receives with probability $\frac{1}{2}$, and agent~2, who does have a lower priority for object~$x$ does not consume more that $\frac{1}{2}$ of $x$,	
\begin{center}
\begin{tabular}{c}
$q=\begin{pmatrix}
\nicefrac{1}{2}&0&\nicefrac{1}{2}\\
0&\nicefrac{1}{2}&\nicefrac{1}{2}\\
\nicefrac{1}{2}&\nicefrac{1}{2}&0
\end{pmatrix}.$
\end{tabular}										
\end{center}
Note that random matching $q$ has a unique decomposition into the deterministic matchings
\begin{center}
\begin{tabular}{c}
$p^{1}=\begin{pmatrix}
1&0&0\\
0&0&1\\
0&1&0
\end{pmatrix}$
\end{tabular}
\quad\quad
\begin{tabular}{c}
$p^{2}=\begin{pmatrix}
0&0&1\\
0&1&0\\
1&0&0
\end{pmatrix}$
\end{tabular}
\end{center}
such that $q=\frac{1}{2}p^{1}+\frac{1}{2}p^{2}$. However, deterministic matching $p^{1}$ is weakly unstable because agent~$2$ justifiably envies agent~$1$. Hence, random matching $q$ is not ex-post weakly stable.\end{proof}

The arguments in the proofs of Propositions~\ref{prop:fractional-to-claimwise} and \ref{prop:claimwise-notto-expost} remain valid to prove that fractional weak stability implies claimwise weak stability but not vice versa.

\begin{proposition}\label{prop:weakfractional-to-claimwise}
Fractional weak stability implies claimwise weak stability.
\end{proposition}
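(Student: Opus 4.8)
The plan is to mimic the proof of Proposition~\ref{prop:fractional-to-claimwise}, which already establishes the analogous implication for the strict base model. The statement to prove is that fractional weak stability implies claimwise weak stability, so I will argue by contraposition: assume a random matching $p$ is not claimwise weakly stable, and deduce that $p$ violates fractional weak stability.

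First I would unpack the failure of claimwise weak stability. By Definition~\ref{def:weaklyclaimwisestable}, there is a pair $(i,o)\in N\times O$ and an agent $j\in N$ with $i\succ_o j$ such that inequality (\ref{LE8New}) fails, i.e.\ $p(j,o)>\sum_{o':o'\succsim_i o;o'\neq o}p(i,o')$; equivalently, agent $i$ has a claim against agent $j$ with respect to object $o$ as in inequality (\ref{LE7New}). The key step is then the observation that since $i\succ_o j$, agent $j$ is one of the agents in the strict lower contour set of object $o$ at $i$, i.e.\ $j\in\{k:k\prec_o i\}$. Hence $p(j,o)\leq \sum_{k:k\prec_o i}p(k,o)$, and combining this with the claim inequality gives
\[
\sum_{k:k\prec_o i}p(k,o)\;\geq\;p(j,o)\;>\;\sum_{o':o'\succsim_i o;o'\neq o}p(i,o').
\]
This is exactly the violation of fractional weak stability recorded in inequality (\ref{LEC}) for the pair $(i,o)$, so $p$ is not fractionally weakly stable, completing the contrapositive argument.

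There is essentially no obstacle here: the only subtlety, and the one place where the weak-preference adjustments matter, is that the right-hand side of the claim inequality uses the weak upper contour set excluding $o$ itself (the term $\sum_{o':o'\succsim_i o;o'\neq o}p(i,o')$) rather than the strict upper contour set; but this is precisely the same expression that appears on the right-hand side of the fractional weak stability inequality (\ref{LEB}) and its violation (\ref{LEC}), so the matching of terms is immediate. The text already signals this by stating that ``the arguments in the proofs of Propositions~\ref{prop:fractional-to-claimwise} and \ref{prop:claimwise-notto-expost} remain valid''; the non-implication direction (that claimwise weak stability does not imply fractional weak stability) is handled by the example in the proof of Proposition~\ref{prop:weakclaimwise-notto-expost}, and is not needed for this statement. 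So the proof is a two-line deduction once the claim inequality is written down and one notes $j\prec_o i$.
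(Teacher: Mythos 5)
Your proposal is correct and follows exactly the argument the paper intends: it transplants the contrapositive proof of Proposition~\ref{prop:fractional-to-claimwise} to the weak setting, noting that the claim inequality (\ref{LE7New}) together with $j\in\{k:k\prec_o i\}$ yields the violation (\ref{LEC}) of fractional weak stability, which is precisely why the paper states that the earlier arguments ``remain valid'' and omits a separate proof. No gaps.
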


\begin{proposition}\label{prop:weakclaimwise-notto-expost}
Claimwise weak stability does not imply fractional weak stability.
\end{proposition}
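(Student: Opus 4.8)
The plan is to construct an explicit example, analogous in spirit to the earlier separating examples (e.g., Proposition~\ref{prop:claimwise-notto-expost}), that exhibits a random matching which is claimwise weakly stable but violates fractional weak stability. First I would reuse the template of a small instance with $N=\{1,2,3\}$ and $O=\{x,y,z\}$ and the uniform random matching $q$ with all entries equal to $\nicefrac{1}{3}$, which automatically satisfies the feasibility constraints (\ref{LE1})--(\ref{LE3}). The task then reduces to choosing weak preferences and weak priorities so that (i) inequality (\ref{LEC}) holds for some pair $(i,o)$, witnessing a violation of fractional weak stability, while (ii) for every pair $(i,o)$ and every lower-priority agent $j$, inequality (\ref{LE8New}) holds, so no agent has a claim.

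The key design idea is to exploit the asymmetry between the ``aggregate lower-priority mass'' in (\ref{LEC}) and the ``single agent mass'' in (\ref{LE8New}). A violation of fractional weak stability needs $\sum_{j:j\prec_o i}p(j,o) > \sum_{o':o'\succsim_i o;o'\neq o}p(i,o')$; with the uniform matching and a strict-priority object $o$ having two agents below agent $i$, the left side can be $\nicefrac{2}{3}$. To block this from becoming a claim we need each \emph{individual} such agent $j$ to satisfy $p(j,o)=\nicefrac{1}{3}\le\sum_{o':o'\succsim_i o;o'\neq o}p(i,o')$, which just needs agent $i$'s weak upper contour set at $o$ (excluding $o$) to carry probability at least $\nicefrac{1}{3}$ for him. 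So I would make agent $i=2$ (say) have $o=x$ as a middle-ranked object with exactly one strictly better object $y$ (giving him $\nicefrac{1}{3}$ in the upper contour set, exactly meeting the claim bound), and set $\succ_x: 1 \ 3 \ 2$ so that agents $1$ and $3$ are both strictly below agent $2$ — wait, that reverses; rather $\succ_x$ should put agent $2$ above agents $1$ and $3$, e.g.\ $\succ_x:2\ 1\ 3$ is wrong too. The correct choice is to have the \emph{other} agents strictly below $2$ at $x$: take $\succ_x$ with $2$ on top so $1,3\prec_x 2$, then $\sum_{j:j\prec_x 2}q(j,x)=\nicefrac{2}{3}>\nicefrac{1}{3}=\sum_{o':o'\succ_2 x}q(2,o')$. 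Then I would fill in the remaining preferences and priorities (using indifferences freely for agents $1,3$ and objects $y,z$, as in Proposition~\ref{prop:weakfractional-notto-weakexpost}) so that a case check over all $(i,o,j)$ triples confirms no claim exists — for an agent's best object no lower-priority agent can generate a claim since competitors for a top object would need higher priority, and for middle/worst objects the upper-contour mass of $\nicefrac{1}{3}$ or $\nicefrac{2}{3}$ suffices to dominate any single competitor's $\nicefrac{1}{3}$.

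The main obstacle is verifying claimwise weak stability exhaustively: one must check every pair $(i,o)$ and every $j$ with $i\succ_o j$ that inequality (\ref{LE8New}) is not violated, and the tricky cases are precisely the ones near equality (here the $(2,x)$ pair against agents $1$ and $3$, where the bound is met with equality $\nicefrac{1}{3}=\nicefrac{1}{3}$, which is why this instance separates the two concepts rather than making both fail). I would organize the verification as a short enumeration exactly as in the proof of Proposition~\ref{prop:claimwise-notto-expost}: group the argument by whether the object in question is the agent's (weakly) most preferred, a middle object, or least preferred, and note in each case why the weak-upper-contour probability mass of the claiming agent matches or exceeds the single competitor's $\nicefrac{1}{3}$. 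Finally I would state the violation of fractional weak stability explicitly by displaying inequality (\ref{LEC}) instantiated at $(2,x)$, which shows $q$ is not fractionally weakly stable; combined with the claimwise-weak-stability check this yields the separation and hence the proposition.
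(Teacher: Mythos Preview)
Your approach is correct in spirit and would work if carried out carefully, but it is more elaborate than necessary. The paper's proof is a one-liner: since strict preferences and priorities are a special case of weak ones, and since on strict instances the weak stability concepts (Definitions~\ref{def:weakfractionalstability} and \ref{def:weaklyclaimwisestable}) reduce exactly to their strict counterparts (Definitions~\ref{def:fractionalstability} and \ref{def:claimwisestable}), the very same all-strict example from the proof of Proposition~\ref{prop:claimwise-notto-expost} already separates claimwise weak stability from fractional weak stability. No indifferences need to be introduced, and no new verification is required.

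Your plan to build a fresh instance with indifferences for agents $1,3$ and objects $y,z$ would also succeed, but it creates extra work: you must then specify $\succ_y$ and $\succ_z$ carefully (in particular, you must ensure that agent $2$'s top object $y$ has no agent strictly below $2$ in $\succ_y$, or else agent $2$ would have a claim there, since his weak upper contour set at $y$ excluding $y$ has probability $0$), and you must redo the full $(i,o,j)$ enumeration. By contrast, the paper simply inherits both the violation of fractional stability at $(2,x)$ and the absence of claims from the already-verified strict example. Your remark that ``$\succ_x: 2\ 1\ 3$ is wrong too'' is itself a slip --- that ordering does exactly what you want (puts $2$ on top with $1,3$ strictly below), and indeed the paper's example uses $\succ_x: 2\ 3\ 1$, which works for the same reason.
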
	
	
\section{Generalized random matchings: weak preferences, weak priorities, unacceptability, and different numbers of agents and objects}\label{section:extensions}\label{subsection:weakdifferentnumbers}

We now further generalize the model and consider the setting where there is a set of $n$ agents $N=\{1,\ldots, n\}$  (in lexicographic order $1,\ldots, n$) and a set of $m$ objects $O=\{o_1,\ldots, o_m\}$  (in lexicographic order $o_1,\ldots, o_m$) where $m$ can be less than, equal to, or more than $n$. We also now allow the agents and objects to partition the other side into acceptable and unacceptable entities. An agent / object would rather be unmatched than to be matched to an unacceptable object / agent.
As in Section~\ref{subsection:weakprefpri}, preferences and priorities can be weak.\medskip

We still assume that agents would like to consume (up to) one object, but with different numbers of agents and objects and taking acceptability into account we relax the notion of a random matching as follows. A \textit{\textbf{generalized random matching}} $p$ is a $n\times m$ matrix $[p(i,o)]_{i\in N,o\in O}$ such that
\begin{equation}\label{LE1gen}\mbox{for each pair }(i,o)\in N\times O,\ p(i,o)\geq 0.\end{equation}
\begin{equation}\label{LE2gen}\mbox{for each }i\in N,\ \sum_{o\in O}p(i,o)\leq 1,\mbox{ and }\end{equation}
\begin{equation}\label{LE3gen}\mbox{for each }o\in O,\ \sum_{i\in N}p(i,o)\leq 1.\end{equation}
Hence, a random matching is a special case of a generalized random matching.
A generalized random matching $p$ is \textbf{\textit{deterministic}} if for each pair $(i,o)\in N\times O$, $ p(i,o) \in \{0,1\}$.\medskip

Each generalized random matching can be represented as a convex combination of generalized deterministic matchings. The statement follows from the fact that every doubly substochastic matrix is a finite convex combination of partial permutation matrices~\citep[][Section~3.2, pp.~164-165]{Horn86a}. \citet{KoMa10a} also give an explicit argument for the same statement in Proposition~1 of their paper.
A \textbf{\emph{decomposition}} of a generalized random matching $p$ into generalized deterministic matchings $P_j$ ($j\in \{1,\ldots,k\}$) equals a sum $p=\sum_{j=1}^k \lambda_jP_j$ such that for each $j\in \{1,\ldots,k\}$, $\lambda_j\in (0,1]$ and $\sum_{j=1}^k\lambda_j=1$.\medskip

At a generalized deterministic matching, it can now happen that an agent gets no object at all or that an object is not assigned to any agent. We now adjust the no envy definition to take the first of these issues into account when defining no envy.

\begin{definition}[\textbf{No envy for generalized deterministic matchings}]
\normalfont	A generalized deterministic matching $p$ has  \emph{\textbf{no envy}} if there exists no agent $i$ who is matched to object $o'$ or does not receive an object but  prefers object $o$ while object $o$ is matched to some agent $j$ with lower priority than $i$, i.e., there exist no $i,j\in N$ and no $o\in O$ such that $\sum_{o':o'\succsim_i o}p(i,o')=0$ (agent $i$ does not receive $o$ or any better object), $p(j,o)=1$, and $i\succ_{o}j$.\label{def:weakgeneralnoenvy}
\end{definition}

Two additional properties for generalized random matchings will play an important role. The first is a minimal efficiency requirement that assures that no agents would rather like to obtain a higher probability for any object that isn't (fully) allocated.

\begin{definition}[\textbf{Non-wastefulness}]
\normalfont	A generalized random matching $p$ is \emph{\textbf{non-wasteful}} if there is no acceptable pair $(i,o)\in N\times O$, $\sum_{o':o'\succsim_i o}p(i,o')<1$ ($i$ would like to have more of $o$), and $\sum_{j\in N}p(j,o)<1$ ($o$ is not fully allocated).\label{def:weakgeneralnonwastefulness}
\end{definition}

Note that in our previous model, non-wastefulness was built into the definition of a random matching. Furthermore, the role of agents and objects in the definitions of no envy and non-wastefulness is not symmetric.\medskip

The second property is a voluntary participation property that ensures that no agent / object is ever matched to an unacceptable object / agent (not even partially).

\begin{definition}[\textbf{Individual rationality}]
\normalfont	A generalized random matching $p$ is \emph{\textbf{individually rational}} if for each pair $(i,o)\in N\times O$ such that at least one of $i$ and $o$ considers the other unacceptable it follows that $p(i,o)=0$.\label{def:weakgeneralindividualrationality}
\end{definition}

Since in our previous models all agents / objects were acceptable, individual rationality was automatically satisfied. We differ a bit in our terminology from \citet{WuRothWP2017} in that we do not include individual rationality in our definition of no envy.\medskip

Next we show that any ``general instance''  with an unequal number of agents and objects and with unacceptability can be transformed into an ``associated instance'' in which the number of agents and objects is equal and all entities are acceptable. The purpose of this approach is to obtain a better understanding of our general model in connection with the base model and to show how almost all our stability concepts are preserved by that transformation (in fact, all but one of the stability concepts are even equivalent under our transformation). This approach will also be crucial in establishing a taxonomy of stability concepts for the general model (see Figure~\ref{fig:weakassociated-random}).\medskip

To formalize a general instance, let the empty set $\emptyset$ symbolize the so called \textit{\textbf{null object}} which stands for being unmatched (or possibly an outside option). All objects / agents that are ranked above the null object are \textit{\textbf{acceptable}} and all objects / agents ranked below are \textit{\textbf{unacceptable}}. We now assume that agents' preferences and objects' priorities are weak orders over $O\cup\{\emptyset\}$ and $N\cup\{\emptyset\}$ respectively. Furthermore, no object is indifferent with the null object for any agent and no agent has the same priority as the null object for any object, i.e., for each pair $(i,o)\in N\times O$, [either $o\succ_i \emptyset$ or $\emptyset\succ_i o$] and [either $i\succ_o \emptyset$ or $\emptyset \succ_o i$]. A \textit{\textbf{general instance}} is denoted by the set of agents, the set of objects and the corresponding preferences and priorities: $I=(N,O,\succsim)$. If at a general instance all agents and objects are acceptable, then the null object is the least preferred entity in all preferences and priorities.\medskip

Our model does include so called \emph{\textbf{school choice}} instances as special cases: the set of agents equals a set of students, the set of objects equals a set of school seats where each school provides a fixed capacity of seats, students strictly rank schools but don't care which seat at a school they are matched to, and each school seat is allocated with the priority of the associated school \citep[recent surveys on school choice are][]{Abd13,Pat11}.\medskip

Next, we introduce a transformation from any general instance to an instance  in which the number of agents and objects is equal and all entities are acceptable and such that almost all our stability concepts are preserved / equivalent under the transformation.

\paragraph{Transforming an instance with unequal number of agents and objects and with unacceptability to one in which the number of agents and objects is equal and all entities are acceptable:}
Consider a general instance $I=(N,O,\succsim)$. Then, we can transform $I$ into an \emph{\textbf{associated instance}} $I'=(N',O',\succsim')$
of the standard setting with weak preferences and weak priorities (Section~\ref{subsection:weakprefpri}) where $|N'|=|O'|$ and all objects and agents are acceptable as follows.
\begin{equation}\label{N'}N'=N\cup D\end{equation}
where $D=\{d_1,\ldots, d_m\}$ is a set of dummy agents (in lexicographic order $d_1,\ldots, d_m$).
\begin{equation}\label{O'}O'=O\cup \Phi\end{equation}
where $\Phi=\{\phi_1,\ldots, \phi_n\}$ is a set of null objects  (in lexicographic order $\phi_1,\ldots, \phi_n$).\medskip

Note that $|N'|=|O'|=n+m$.\medskip

Next, we extend preferences / priorities from the general instance $I$ to the associated instance $I'$ as follows. For any subset $B$ of $N$ or $O$, we denote the restriction of preference / priority of agent / object $\alpha$, $\succsim_\alpha$, to set $B$ by $\succsim_{\alpha}\!\!(B)$. Furthermore, for any subset $B$ of set $N$, $D$, $O$, or $\Phi$, we denote the lexicographic order of $B$ by $lex(B)$. The main idea of extending preferences and priorities from general instance $I$ to associated instance $I'$ is that each agent $i\in N$ has a default personal null object $\phi_i$ that is less preferred than all acceptable objects, more preferred than all the unacceptable objects, and that ranks agent $i$ as its highest priority agent. Furthermore, each object $o_j\in O$ has a default personal dummy agent $d_j$ who has lower priority than all the acceptable agents, higher priority than all unacceptable agents, and who most prefers object~$o_j$.\medskip

\noindent Each agent $i\in N$ extends his preferences $\succsim_i$ by replacing the null object $\emptyset$ by the set of null objects $\Phi$ such that agent $i$'s null object $\phi_i$ is more preferred than all other null objects (in strict lexicographic order) - each comma below indicates strict preferences at $\succsim'_i$:
\begin{equation}\label{agentspref}
  \succsim'_i= \succsim_i\!\left(\{o\in O\midd o \succ_i \emptyset\}\right),\ \phi_i,\ lex\left(\Phi\setminus \{\phi_i\}\right),\ \succsim_i\!\left(\{o\in O\midd \emptyset\succ_i o\right)\}).
\end{equation}
Each object $o_j\in O$ extends its priorities by replacing the null object $\emptyset$ by the set of dummy agents $D$ such that object $o_j$'s dummy agent $d_j$ is more preferred than all other dummy agents (in strict lexicographic order)  - each comma below indicates strict priorities at $\succsim'_{o_j}$:
\begin{equation}\label{objectpri}
\succsim'_{o_j}=\succsim_{o_j}\!\!\left(\{i\in N\midd i\succ_{o_j}\emptyset\}\right),\ d_j,\ lex\left(D\setminus \{d_j\}\right),\ \succsim_{o_j}\!\!\left(\{i\in N\midd \emptyset\succ_{o_j}i\}\right).
\end{equation}
Each dummy agent $d_j$ first prefers object $o_j$, then all other objects in $O\setminus\{o_j\}$ (in strict lexicographic order), and finally dummy agent $d_j$ ranks the null objects in $\Phi$ exactly as object $o_j$ ranks the agents in $N$, i.e., for all $i,k\in N$, $\phi_i \succsim_{d_j}' \phi_k$ if and only if $i \succsim_{o_j} k$  - each comma below indicates strict preferences at $\succsim'_{d_j}$:
\begin{equation}\label{dummypref}
\succsim'_{d_j}= o_j,\ lex\left(O\setminus \{o_j\}\right),\ \succsim'_{d_j}\!(\Phi).
\end{equation}
Each null object $\phi_i$ first ranks agent $i$ as the highest priority agent, then all the other agents in $N\setminus\{i\}$ (in strict lexicographic order), and finally null object $\phi_i$ ranks the dummy agents in $D$ exactly as agent $i$ ranks the objects in $O$, i.e., for all $o_j,o_k\in O$, $d_j \succsim_{\phi_i}' d_k$ if and only if $o_j \succsim_{i} o_k$  - each comma below indicates strict priorities at $\succsim'_{\phi_i}$:
\begin{equation}\label{nullpri}
\succsim'_{\phi_i}= i,\ lex(N\setminus \{i\}),\ \succsim'_{\phi_i}\!(D).
\end{equation}
Consider a generalized random matching $p$ for a general instance $I=(N,O,\succsim)$. Then, we defined the \emph{\textbf{associated random matching}} $p'$ for instance $I'=(N',O',\succsim')$ as follows:
\begin{itemize}
  \item for each pair $(i,o_j)\in N\times O$, $p'(i,o_j)=p(i,o_j)$ ,
  \item for each pair $(d_j,\phi_i)\in D\times \Phi$, $p'(d_j,\phi_i)=p(i,o_j)$,
  \item for each pair $(i,\phi_i)\in N\times \Phi$, $p'(i,\phi_i)=1-\sum_{o\in O}p(i,o)$,
  \item for each pair $(d_j,o_j)\in D\times O$, $p'(d_j,o_j)=1-\sum_{i\in N}p(i,o_j)$, and
  \item for all remaining pairs $(a,b)\in \left(N\cup D\right)\times\left(O\cup \Phi\right)$, $p'(a,b)=0$.
\end{itemize}

The associated random matching $p'$ looks as follows:
\begin{center}
\begin{blockarray}{ccccccccccc}
    &&\matindex{$o_1$} & \matindex{$\ldots$} & \matindex{$o_m$}&&\matindex{$\phi_1$}&\matindex{$\ldots$}&\matindex{$\phi_n$}\\
    \begin{block}{c(cccccccccc)}
   \matindex{$1$}& &$p(1,o_1)$   &$\ldots$  & $p(1,o_m)$&$|$&$p(1,\emptyset)$&&\Large{$0$}&\\
   \matindex{$\vdots$}& & $\vdots$  &   &$\vdots$&$|$&&$\ddots$&\\
   \matindex{$n$} & & $p(n,o_1)$& $\ldots$  & $p(n,o_m)$&$|$&\Large{$0$}&&$p(n,\emptyset)$\\
       &&---  & ---&--- &---&---&---&---&\\
   \matindex{$d_1$} &&$p(\emptyset,o_1)$ &  & \Large{$0$}&$|$&$p(1,o_1)$&$\ldots$&$p(n,o_1)$&\\
       \matindex{$\vdots$}& &  &$\ddots$ & &$|$&$\vdots$&&$\vdots$&\\
         \matindex{$d_m$}& &\Large{$0$} &  & $p(\emptyset,o_m)$&$|$&$p(1,o_m)$&$\ldots$&$p(n,o_m)$&\\
    \end{block}
  \end{blockarray}
  \end{center}
where
\begin{itemize}
	\item for each $i\in\{1,\ldots,n\}$, $p(i,\emptyset):=1-\sum_{o\in O}p(i,o)$ and
	\item for each $j\in\{1,\ldots,m\}$, $p(\emptyset,o_j):=1-\sum_{i\in N}p(i,o_j)$.
\end{itemize}

\begin{definition}[\textbf{Associated random matching $\bm{p'}$ respecting non-wastefulness}]\label{def:p'respectingNW}
\normalfont	Let $p$ be a generalized random matching and $p'$ its associated random matching. Then, $p'$ respects non-wastefulness (of $p$) if $p$ is non-wasteful; i.e., there is no acceptable pair $(i,o_j)\in N\times O$ such that [$\sum_{o':o'\succsim_i o_j} p'(i,o')<1$ and $\sum_{j\in N}p'(j,o_j)<1$].
\end{definition}

\begin{definition}[\textbf{Associated random matching $\bm{p'}$ respecting individual rationality}]\label{def:p'respectingIR}
\normalfont	Let $p$ be a generalized random matching and $p'$ its associated random matching. Then, $p'$ respects individual rationality (of $p$) if $p$ is individually rational; i.e., for any unacceptable pair $(i,o_j)\in N\times O$, $p'(i,o_j)=p(i,o_j)=0$ and $p'(d_j,\phi_i)=p(i,o_j)=0$.
\end{definition}

\begin{example}[\textbf{Transforming a general instance and a generalized random matching}]\normalfont Consider the following general instance $I=(N,O,\succ)$ with strict preferences and strict priorities: $N=\{1,2,3\}$, $O=\{x,y\}$,
\begin{center}
\begin{tabular}{lccc}
$\succ_1$:&$x$&$y$&$\emptyset$\\
$\succ_2$:&$y$&$x$&$\emptyset$\\
$\succ_3$:&$x$&$\emptyset$&$y$
\end{tabular}
\quad\quad
\begin{tabular}{lcccc}
$\succ_x$:&$2$&$3$&$1$&$\emptyset$\\
$\succ_y$:&$1$&$2$&$\emptyset$&$3$
\end{tabular}	
\end{center}	
and the generalized random matching
\begin{center}
\begin{tabular}{c}
$p=\begin{pmatrix}
\nicefrac{1}{3}&\nicefrac{1}{2}\\
0&\nicefrac{1}{2}\\
\nicefrac{2}{3}&0
\end{pmatrix}.$
\end{tabular}										
\end{center}
The associated instance equals $I'=(N',O', \succ')$ with strict preferences and strict priorities such  that $N'=\{1,2,3,d_x,d_y\}$, $O'=\{a,b,\phi_1,\phi_2,\phi_3\}$, and
\begin{center}
\begin{tabular}{lccccc}
$\succ'_1$:&$x$&$y$&$\phi_1$&$\phi_2$&$\phi_3$\\
$\succ'_2$:&$y$&$x$&$\phi_2$&$\phi_1$&$\phi_3$\\
$\succ'_3$:&$x$&$\phi_3$&$\phi_1$&$\phi_2$&$y$\\
$\succ'_{d_x}$:&$x$&$y$&$\phi_2$&$\phi_3$&$\phi_1$\\
$\succ'_{d_y}$:&$y$&$x$&$\phi_1$&$\phi_2$&$\phi_3$
\end{tabular}
\quad\quad
\begin{tabular}{lccccc}
$\succ'_x$:&$2$&$3$&$1$&$d_x$&$d_y$\\
$\succ'_y$:&$1$&$2$&$d_y$&$d_x$&$3$\\
$\succ'_{\phi_1}$:&$1$&$2$&$3$&$d_x$&$d_y$\\
$\succ'_{\phi_2}$:&$2$&$1$&$3$&$d_y$&$d_x$\\
$\succ'_{\phi_3}$:&$3$&$1$&$2$&$d_x$&$d_y$.
\end{tabular}	
\end{center}
The associated random matching equals
\begin{center}
\begin{tabular}{c}
$p'=$\begin{blockarray}{cccccccccc}
    &&\matindex{$x$} & \matindex{$y$} & & \matindex{$\phi_1$}&\matindex{$\phi_2$}&\matindex{$\phi_3$}\\
    \begin{block}{c(ccccccccc)}
   \matindex{$1$}& &$\nicefrac{1}{3}$   &$\nicefrac{1}{2}$  &$|$ & $\nicefrac{1}{6}$&$0$&$0$\\
   \matindex{$2$}& & $0$  & $\nicefrac{1}{2}$ &$|$ &$0$&$\nicefrac{1}{2}$&$0$\\
   \matindex{$3$} & & $\nicefrac{2}{3}$& $0$  & $|$ & $0$&$0$& $\nicefrac{1}{3}$\\
       &&---  & ---&--- &---&---&---&\\
   \matindex{$d_x$} && $0$& $0$ & $|$&$\nicefrac{1}{3}$&$0$&$\nicefrac{2}{3}$&\\
   \matindex{$d_y$}& & $0$ & $0$ & $|$&$\nicefrac{1}{2}$&$\nicefrac{1}{2}$&$0$\\
    \end{block}
  \end{blockarray}\ .
\end{tabular}										
\end{center}\hfill~$\diamond$
\end{example}

In the sequel, whenever preferences and priorities are strict, then weak stability (Definition~\ref{def:weakstability}) can be simply referred to as stability. The reason why we need to be more careful in terminology when preferences and priorities are weak is the fact that we use strict blocking when defining weak stability while other blocking notions are theoretically also possible (see also Section~\ref{subsection:weakprefpri}).\medskip

Next, we show that for each generalized deterministic matching $p$, no envy, individual rationality, and non-wastefulness are equivalent to weak stability of the associated deterministic matching $p'$.

\begin{proposition}\label{Prop:weakNJE-IR-NWiffStable}
The generalized deterministic matching $p$ has no envy and is individually rational and non-wasteful if and only if the associated deterministic matching $p'$ is weakly stable.
\end{proposition}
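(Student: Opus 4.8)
The plan is to prove the two directions separately, in each case carefully tracking how the transformation moves blocking pairs back and forth between $I$ and $I'$. Recall that $p'$ is a genuine (bistochastic) random matching on $N'=N\cup D$, $O'=O\cup\Phi$ with all entities acceptable, so weak stability of $p'$ means: there is no pair $(a,b)\in N'\times O'$ with $b\succ'_a p'$-match-of-$a$ and $a\succ'_b p'$-match-of-$b$ in the deterministic sense of Definition~\ref{def:weakstability}, i.e.\ no $a,b,c,d$ with $p'(a,d)=1$, $p'(c,b)=1$, $b\succ'_a d$, $a\succ'_b c$.

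For the ``only if'' direction I would assume $p$ has no envy, is individually rational, and is non-wasteful, and show $p'$ is weakly stable by contradiction: suppose there is a strict blocking pair $(a,b)$ for $p'$. I would split into cases according to whether $a\in N$ or $a\in D$ and whether $b\in O$ or $b\in \Phi$. The case $a=i\in N$, $b=o_j\in O$: a strict block means $i$ is matched (at $p'$, hence at $p$) to something strictly worse than $o_j$ or to a null object $\phi_k$ worse than $o_j$ — in either case, since $\phi_i$ sits just below the acceptable objects in $\succsim'_i$, this forces $o_j\succ_i\emptyset$ (i.e.\ $o_j$ acceptable to $i$) and $\sum_{o':o'\succsim_i o_j}p(i,o')<1$; and the object side, $i\succ'_{o_j}c$ where $p'(c,o_j)=1$, forces either $c=d_j$ (so $o_j$ not fully allocated, contradicting non-wastefulness together with the agent-side inequality) or $c\in N$ with $i\succ_{o_j}c$ and $c$ acceptable to $o_j$ — but then $(i,j)$ is a no-envy violation at $p$ (using individual rationality to ensure $i$ acceptable to $o_j$ via $i\succ'_{o_j}c\succ'_{o_j}$ null-region), a contradiction. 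The symmetric case $a=d_j$, $b=\phi_i$ uses the ``mirror'' construction of $\succsim'_{d_j}$ and $\succsim'_{\phi_i}$ and reduces to the same no-envy / non-wastefulness violation via the identity $p'(d_j,\phi_i)=p(i,o_j)$. The remaining cases $a=i\in N,b=\phi_k\in\Phi$ and $a=d_j\in D,b=o_k\in O$ I would dispatch directly: because $\phi_i$ is $i$'s unique personal null object with $i$ top-priority and $p'(i,\phi_i)=1-\sum_o p(i,o)$, and because the lexicographic ranking of the other $\phi_k$'s and $d$'s is ``self-consistent'', no strict block can arise here — essentially, whenever $i$ strictly prefers $\phi_k$ ($k\ne i$) to his match, feasibility and the definition of $p'$ force $\phi_k$'s current match to have higher priority than $i$ at $\phi_k$.

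For the ``if'' direction I would assume $p'$ is weakly stable and derive all three properties of $p$. Non-wastefulness: if $p$ were wasteful via acceptable $(i,o_j)$ with $\sum_{o':o'\succsim_i o_j}p(i,o')<1$ and $\sum_k p(k,o_j)<1$, then $p'(d_j,o_j)=1-\sum_k p(k,o_j)>0$, so in some deterministic matching in a decomposition of $p'$ we have $d_j$ matched to $o_j$ while $i$ is matched to something strictly worse than $o_j$ (or to a null object below $o_j$); since $o_j\succ'_i$ that match and $i\succ'_{o_j}d_j$, that deterministic matching is not weakly stable, contradicting that $p'$ is weakly stable (here I use that ex-post / robust-ex-post reasoning is available, or more directly that weak stability of $p'$ as a random matching can be read off its decompositions — I may instead argue purely at the level of $p'$'s support). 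Individual rationality: if some unacceptable pair $(i,o_j)$ had $p(i,o_j)>0$, then $p'(i,o_j)>0$, so in a decomposition there is a deterministic matching with $p'$-type $1$ in the $(i,o_j)$ cell; but $o_j$ is in $i$'s strict lower contour region (below $\phi_i$) and $d_j\succ'_{o_j}$-region, so one readily builds a strict block (e.g.\ $i$ with $\phi_i$, since $\phi_i\succ'_i o_j$ and $\phi_i$ ranks $i$ top), contradicting stability of $p'$. No envy: if $p$ had an envy violation — $i,j,o$ with $\sum_{o':o'\succsim_i o}p(i,o')=0$, $p(j,o)=1$, $i\succ_o j$ — then at $p'$ agent $i$ is matched entirely within $\{\phi_i\}\cup\{\text{objects/nulls below }o\}$, in particular $o\succ'_i$ (his match), while $p'(j,o)=1$ and $i\succ'_o j$ (using that $i\succ_o j$ lifts to $I'$ and, since $j$ is acceptable to $o$ as $p(j,o)=1$ forced IR-consistency, $i$ is too); hence $(i,o)$ strictly blocks $p'$, a contradiction.

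The main obstacle I expect is the case analysis for the ``only if'' direction, specifically the bookkeeping around the null-object / dummy-agent blocks and, more subtly, correctly handling acceptability: a strict block $(a,b)$ for $p'$ with $a$ or $b$ in the dummy/null part must be shown to never occur, and the only leverage is the precise interleaving in (\ref{agentspref})--(\ref{nullpri}) — in particular the ``self-reference'' that $\phi_i$ ranks the dummies $D$ exactly as $i$ ranks $O$, and $d_j$ ranks $\Phi$ exactly as $o_j$ ranks $N$. Getting those cases airtight (rather than hand-waving ``by symmetry'') is where the real work lies; by contrast, the equivalence of the no-envy clause with the surviving ``real'' blocking pairs, and the extraction of non-wastefulness and individual rationality, are essentially direct translations through the defining equations of $p'$.
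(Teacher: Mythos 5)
Your overall plan is the same as the paper's proof: both directions are handled by translating strict blocking pairs of $p'$ into violations of $p$ (envy, individual irrationality, wastefulness) and conversely, organized by whether each member of the pair lies in $N$ or $D$ and in $O$ or $\Phi$; the paper simply carries this out exhaustively in Tables~\ref{table:weakviolations1}--\ref{table:weakviolations3}. The one place where you commit to a concrete mechanism for the ``remaining cases,'' however, is wrong as stated. For a potential block $(i,\phi_k)$ with $i\in N$, $k\neq i$, you claim it is excluded because $\phi_k$'s current match has higher priority than $i$ at $\phi_k$. That is false in general: by the construction, $\phi_k$ is matched either to $k$ or to some dummy $d_l$ (namely when $p(k,o_l)=1$), and every real agent --- in particular $i$ --- has higher priority than every dummy at $\succsim'_{\phi_k}$, so the object side of such a block can perfectly well hold. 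The case is excluded on the \emph{agent} side: since $p'(i,\phi_m)=0$ for all $m\neq i$, agent $i$ holds either a real object or $\phi_i$; $\phi_k\succ'_i\phi_i$ is impossible by construction, and $\phi_k\succ'_i o'$ for a real object $o'$ forces $\emptyset\succ_i o'$, i.e., an individual-rationality violation of $p$, contradicting your hypothesis --- which is exactly how the paper disposes of these configurations (third row of Table~\ref{table:weakviolations2} and the corresponding rows of Table~\ref{table:weakviolations3}). Symmetrically, the case $(d_j,o_k)$ is killed on the \emph{object} side via individual rationality of $p$ (any agent actually matched to $o_k$ is acceptable to $o_k$ and hence beats every dummy, and $d_k$ beats $d_j$), not by any feature of $d_j$'s preferences. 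So the conclusions of your case analysis are right, but the justification for precisely these cases --- the ones you yourself identify as ``where the real work lies'' --- must be replaced as above.

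A smaller point: in the ``if'' direction you invoke decompositions of $p'$ and ex-post reasoning. Since $p$ is deterministic, $p'$ is deterministic as well (all its entries are $0$ or $1$), so there is nothing to decompose; the blocking pairs you describe for wastefulness, individual irrationality, and envy can and should be read off $p'$ directly, which is what the paper's Table~\ref{table:weakviolations1} does. With that simplification and the repaired case analysis, your argument matches the paper's.
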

\begin{proof}[\textbf{Proof}]
\textbf{Part 1:} Let $p$ be a generalized deterministic matching and $p'$ its associated deterministic matching. We first show that $p$ being individually irrational, or wasteful, or having envy implies that $p'$ is not weakly stable. We do so via the following table that for any of the possible violations for $p$ lists an associated no-envy violation for $p'$. In Table~\ref{table:weakviolations1}, $(i,o_j)\in N\times O$:
\begin{center}
\begin{longtable}{ll}
	\captionsetup{width=.7\textwidth}
	\caption{Violations for $p$ and associated no-envy violation for $p'$. \textbf{IIR} $=$ individually irrational, \textbf{W} $=$ wasteful, \textbf{E} $=$ envy.}	\label{table:weakviolations1}\smallskip\\
  \hline
  \textbf{violation for} $\bm{p}$ & \textbf{associated no-envy violation for} $\bm{p'}$\\\hline\hline
  \textbf{IIR} agent $i$'s match is an unacceptable object:& \textbf{E} agent $i$ envies $d_j$ for $\phi_i$:\\
  $p(i,o_j)=1$ and $\emptyset\succ_i o_j$ & $p'(i,o_j)=1$, $p'(d_j,\phi_i)=1$,\\
  & $\phi_i\succ'_i o_j$, $i\succ'_{\phi_i}d_j$\\\hline
  \textbf{IIR} object $o_j$'s match is an unacceptable agent:& \textbf{E} agent $d_j$ envies $i$ for $o_j$: \\
  $p(i,o_j)=1$ and $\emptyset\succ_{o_j} i$ & $p'(i,o_j)=1$, $p'(d_j,\phi_i)=1$,\\
  & $o_j\succ'_{d_j} \phi_i$, $d_j\succ'_{o_j}i$\\\hline
  \textbf{W} agent $i$ gets $\emptyset$ and wants unassigned object $o_j$: & \textbf{E} agent $i$ envies $d_j$ for $o_j$:\\
  $i\succ_{o_j}\emptyset$, $o_j\succ_i \emptyset$,  & $p'(i,\phi_i)=1$, $p'(d_j,o_j)=1$,\\
  $\sum_{l\in N}p(l,o)=0$, $\sum_{o'\in O}p(i,o')=0$ & $o_j\succ'_i\phi_i$, $i\succ'_{o_j}d_j$\\\hline
  \textbf{W} agent $i$ gets $o'$ and wants unassigned object $o_j$: & \textbf{E} agent $i$ envies $d_j$ for $o_j$:\\
  $i\succ_{o_j}\emptyset$, $o_j\succ_i \emptyset$, & $p'(i,o')=1$, $p'(d_j,o_j)=1$,\\
  $\sum_{l\in N}p(l,o)=0$, $p(i,o')=1$ &  $o_j\succ'_i o'$, $i\succ'_{o_j}d_j$\\\hline
  \textbf{E} agent $i$ gets $o'$ and envies $k$ for $o_j$: & \textbf{E} agent $i$ envies $k$ for $o_j$:\\
  $p(i,o')=1$, $p(k,o_j)=1$,  &   $p'(i,o')=1$, $p'(k,o_j)=1$,\\ $o_j\succ_i o'$, $i\succ_{o_j}k$ & $o_j\succ'_i o'$, $i\succ'_{o_j}k$\\\hline
  \textbf{E} agent $i$ gets $\emptyset$ and envies $k$ for $o_j$: & \textbf{E} agent $i$ envies $k$ for $o_j$:\\
  $\sum_{o'\in O}p(i,o')=0$, $p(k,o_j)=1$,  &   $p'(i,\phi_i)=1$, $p'(k,o_j)=1$,\\ $o_j\succ_i \emptyset$, $i\succ_{o_j}k$ & $o_j\succ'_i \phi_i$, $i\succ'_{o_j}k$\\\hline
\end{longtable}
\end{center}
\textbf{Part 2:} We show that $p'$ not being weakly stable implies that $p$ is individually irrational, or wasteful, or has envy, or that the weak stability violation at $p'$ was not possible. Assume that at $p'$ some agent $a$ envies an agent $b$ for object $c$. Then, $p'(a,d)=1$, $p'(b,c)$, $c\succ'_{a}d$, and $a\succ'_{c}b$. Depending on the specifications of $a$, $b$, $c$, and $d$, different violations can be identified for $p$. The following tables list all no-envy violations for $p'$ and for $p$ associates individual irrationality, wastefulness, or envy (Table~\ref{table:weakviolations2}) or explains why the no-envy violations of $p'$ cannot occur given its definition (Table~\ref{table:weakviolations3}). Note that since $(a,d),(b,c)\in \{N\times O, N\times \Phi, D\times O, D\times \Phi\}$ we have in total 16 different cases to discuss.
\begin{center}
\begin{longtable}{ll}\captionsetup{width=.7\textwidth}
	\caption{No-envy violations for $p'$ and associated violation for $p$. \textbf{IIR} $=$ individually irrational, \textbf{W} $=$ wasteful, \textbf{E} $=$ envy.}	\label{table:weakviolations2}\smallskip\\
  \hline
\textbf{no-envy violation for} $\bm{p'}$ & \textbf{associated violation for} $\bm{p}$\\\hline\hline
\textbf{E} agent $a$ envies agent $b$ for $c$: & \textbf{E} agent $i$ envies agent $j$ for $o$:\\
$(a,d),(b,c)\in N\times O$,  &  \\
$(a,d)=(i,o')$, $(b,c)=(j,o)$, &  \\
$p'(i,o')=1$, $p'(j,o)=1$, & $p(i,o')=1$, $p(j,o)=1$,\\
$o\succ'_{i}o'$, $i\succ'_o j$. & $o\succ_{i}o'$, $i\succ_o j$.\\\hline

\textbf{E} agent $a$ envies agent $b$ for $c$: & agent $i$ is individually irrational\\
$(a,d)\in N\times O$, $(b,c)\in D\times O$,  &  or object $o$ is wasted:\\
$(a,d)=(i,o')$, $(b,c)=(d_j,o_j)$, &   \\
$p'(i,o')=1$, $p'(d_j,o_j)=1$, &  $p(i,o')=1$, $\sum_{k\in N}p(k,o_j)=0$,\\
$o_j\succ'_i o'$, $i\succ'_{o_j}d_j$, & $o_j\succ_{i}o'$, $i\succ_{o_j} \emptyset$,\\
(a) $\phi_i\succ'_i o_j$, & (a) \textbf{IIR} $\emptyset\succ_i o_j$, \\
(b) $o_j\succ'_i\phi_i$. & (b) \textbf{W} $o_j\succ_i\emptyset$.\\\hline

\textbf{E} agent $a$ envies agent $b$ for $c$: & \textbf{IIR} agent $i$ is individually irrational:\\
$(a,d)\in N\times O$, $(b,c)\in D\times \Phi$,  &  \\
$(a,d)=(i,o')$, $(b,c)=(d_j,\phi_k)$, &  \\
in particular, $p'(i,o')=1$, $\phi_k\succ'_i o'$. &  $p(i,o')=1$, $\emptyset\succ_i o'$.\\\hline

\textbf{E} agent $a$ envies agent $b$ for $c$: & \textbf{W} object $o_j$ is wasted: \\
$(a,d)\in N\times \Phi$, $(b,c)\in D\times O$,  &  \\
$(a,d)=(i,\phi_i)$, $(b,c)=(d_j,o_j)$, &     \\
$p'(i,\phi_i)=1$, $p'(d_j,o_j)=1$, &  $\sum_{o'\in O}p(i,o')=0$, $\sum_{k\in N}p(k,o_j)=0$,\\
$o_j\succ'_i\phi_i$, $i\succ'_{o_j}d_j$. & $o_j\succ_{i}\emptyset$, $i\succ_{o_j} \emptyset$.\\\hline

\textbf{E} agent $a$ envies agent $b$ for $c$: & \textbf{E} agent $l$ envies agent $k$ for $o_i$:\\
$(a,d),(b,c)\in D\times \Phi$,  & \\
$(a,d)=(d_i,\phi_k)$, $(b,c)=(d_j,\phi_l)$, &  \\
$p'(d_i,\phi_k)=1$, $p'(d_j,\phi_l)=1$, & $p(k,o_i)=1$, $p(l,o_j)=1$,\\
$\phi_l\succ'_{d_i}\phi_k$, $d_i\succ'_{\phi_l} d_j$. &  $l\succ_{o_i}k$, $o_i\succ_l o_j$.\\\hline

\textbf{E} agent $a$ envies agent $b$ for $c$:& \textbf{JE} agent $i$ envies agent $j$ for $o$:\\
$(a,d)\in N\times \Phi$, $(b,c)\in N\times O$,  &  \\
$(a,d)=(i,\phi_i)$, $(b,c)=(j,o)$, &  \\
$p'(i,\phi_i)=1$, $p'(j,o)=1$, & $\sum_{o':o'\succsim o}p(i,o')=0$, $p(j,o)=1$,\\
$o\succ'_i\phi_i$, $i\succ'_o j$. & $o\succ_i \emptyset$, $i\succ_o j$.\\\hline

\textbf{E} agent $a$ envies agent $b$ for $c$: & \textbf{IIR} object $o$ is individually irrational:\\
$(a,d)\in D\times O$, $(b,c)\in N\times O$,  &  \\
$(a,d)=(d_i,o_i)$, $(b,c)=(j,o)$, &  \\
in particular, $p'(j,o)=1$, $d_i\succ'_o j$. &  $p(j,o)=1$, $\emptyset\succ_o j$.\\\hline

\textbf{E} agent $a$ envies agent $b$ for $c$: & \textbf{IIR} object $o$ is individually irrational:\\
$(a,d)\in D\times \Phi$, $(b,c)\in N\times O$,  &  \\
$(a,d)=(d_i,\phi_k)$, $(b,c)=(j,o)$, &  \\
in particular, $p'(j,o)=1$, $d_i\succ'_o j$. &  $p(j,o)=1$, $\emptyset\succ_o j$.\\\hline
\end{longtable}
\end{center}

\smallskip

\begin{center}
\begin{longtable}{ll}\captionsetup{width=.6\textwidth}
	\caption{No-envy violations of $p'$ that are not possible. \textbf{E} $=$ envy, \textbf{IP} $=$ impossibility.}
	\label{table:weakviolations3}\smallskip\\
  \hline
\textbf{no-envy violation for} $\bm{p'}$ & \textbf{why this no-envy violation for} $\bm{p'}$ \textbf{is not possible}\\\hline\hline
\textbf{E} agent $a$ envies agent $b$ for $c$: & \\
$(a,d)\in N\times O$, $(b,c)\in N\times \Phi$,  & \textbf{IP} $i\succ'_{\phi_j} j$ is not possible \\
$(a,d)=(i,o')$, $(b,c)=(j,\phi_j)$, &  because by  the definition of $p'$, \\
in particular, $i\succ'_{\phi_j} j$. & $j\succ'_{\phi_j}i$.
\\\hline

\textbf{E} agent $a$ envies agent $b$ for $c$: & \\
$(a,d)\in N\times \Phi$, $(b,c)\in N\times \Phi$,  & \textbf{IP} $\phi_j\succ'_{i} \phi_i$ is not possible\\
$(a,d)=(i,\phi_i)$, $(b,c)=(j,\phi_j)$,& because by the definition of $p'$,\\
in particular, $\phi_j\succ'_{i} \phi_i$. & $\phi_i\succ'_{i}\phi_j$.\\\hline

\textbf{E} agent $a$ envies agent $b$ for $c$: & \\
$(a,d)\in N\times \Phi$, $(b,c)\in D\times \Phi$,  & \textbf{IP} $\phi_k\succ'_{i} \phi_i$ is not possible\\
$(a,d)=(i,\phi_i)$, $(b,c)=(d_j,\phi_k)$, & because by the definition of $p'$, \\
in particular, $\phi_k\succ'_{i} \phi_i$. & $\phi_i\succ'_{i}\phi_k$.\\\hline

\textbf{E} agent $a$ envies agent $b$ for $c$: & \\
$(a,d)\in D\times O$, $(b,c)\in D\times O$,  & \textbf{IP} $o_j\succ'_{d_i} o_i$ is not possible\\
$(a,d)=(d_i,o_i)$, $(b,c)=(d_j,o_j)$, & because by the definition of $p'$, \\
in particular, $o_j\succ'_{d_i} o_i$. & $o_i\succ'_{d_i} o_j$.\\\hline

\textbf{E} agent $a$ envies agent $b$ for $c$: & \\
$(a,d)\in D\times O$, $(b,c)\in D\times \Phi$,  & \textbf{IP} $\phi_k\succ'_{d_i} o_i$ is not possible\\
$(a,d)=(d_i,o_i)$, $(b,c)=(d_j,\phi_k)$, & because by the definition of $p'$, \\
in particular, $\phi_k\succ'_{d_i} o_i$. & $o_i\succ'_{d_i} \phi_k$.\\\hline

\textbf{E} agent $a$ envies agent $b$ for $c$: & \\
$(a,d)\in D\times O$, $(b,c)\in N\times \Phi$,  & \textbf{IP} $\phi_j\succ'_{d_i} o_i$ is not possible\\
$(a,d)=(d_i,o_i)$, $(b,c)=(j,\phi_j)$, & because by the definition of $p'$, \\
in particular, $\phi_j\succ'_{d_i} o_i$. & $o_i\succ'_{d_i} \phi_j$.\\\hline

\textbf{E} agent $a$ envies agent $b$ for $c$: & \\
$(a,d)\in D\times \Phi$, $(b,c)\in N\times \Phi$,  & \textbf{IP} $d_i\succ'_{\phi_j}j$ is not possible\\
$(a,d)=(d_i,\phi_k)$, $(b,c)=(j,\phi_j)$, & because by the definition of $p'$, \\
in particular, $d_i\succ'_{\phi_j}j$. & $j\succ'_{\phi_j}d_i$.\\\hline

\textbf{E} agent $a$ envies agent $b$ for $c$: & \\
$(a,d)\in D\times \Phi$, $(b,c)\in D\times O$,  & \textbf{IP} $d_i\succ'_{o_j}d_j$ is not possible\\
$(a,d)=(d_i,\phi_k)$, $(b,c)=(d_j,o_j)$, & because by the definition of $p'$, \\
in particular, $d_i\succ'_{o_j}d_j$. & $d_j\succ'_{o_j}d_i$.\\\hline
\end{longtable}\qedhere
\end{center}
\end{proof}

\begin{example}[\textbf{No-envy, individual rationality, and non-wastefulness are logically independent}]
\normalfont Consider the following general instance $I=(N,O,\succ)$ with strict preferences and strict priorities: $N=\{1,2\}$, $O=\{x,y\}$,
\begin{center}
\begin{tabular}{lccc}
$\succ_1$:&$x$&$\emptyset$&$y$\\
$\succ_2$:&$x$&$y$&$\emptyset$
\end{tabular}
\quad\quad
\begin{tabular}{lccc}
$\succ_x$:&$2$&$1$&$\emptyset$\\
$\succ_y$:&$1$&$2$&$\emptyset$
\end{tabular}	
\end{center}	
and the generalized deterministic matchings
\begin{center}
\begin{tabular}{c}
$p^1=\begin{pmatrix}
0&0\\
0&0
\end{pmatrix}$,
\end{tabular}
\quad\quad	
\begin{tabular}{c}
$p^2=\begin{pmatrix}
0&1\\
1&0
\end{pmatrix}$,
\end{tabular}
\quad\quad	
\begin{tabular}{c}
$p^3=\begin{pmatrix}
1&0\\
0&1
\end{pmatrix}$,
\end{tabular}
\quad\quad	
\begin{tabular}{c}
$p^4=\begin{pmatrix}
0&0\\
1&0
\end{pmatrix}$.
\end{tabular}										
\end{center}
Then, $p^1$ has no envy, is individually rational, but it is wasteful; $p^2$ has no envy, is non-wasteful, but it is individually irrational; $p^3$ is individually rational, non-wasteful, but it has envy; $p^4$ is the only generalized deterministic matching for this instance that has no envy, is individually rational, and is non-wasteful.\hfill~$\diamond$
\end{example}

The classic definition of weak stability for generalized deterministic matchings in our model is the following.

\begin{definition}[\textbf{Weak stability for generalized deterministic matchings}]
\normalfont	A generalized deterministic matching $p$ is \emph{\textbf{weakly stable}} if it is individually rational and there exist no agent and no object that would prefer each other to their current match, i.e., there exists no pair $(i,o)\in N\times O$ such that $\sum_{o':o'\succsim_i o}p(i,o')=0$ (agent $i$ would like to have $o$) and $\sum_{j:j\succsim_o i}p(j,o)=0$  (object $o$ would like to be matched to $i$).\label{def:weakgeneralstable}
\end{definition}

The well-known deferred-acceptance algorithm \citep{GS62} computes a generalized deterministic matching that is weakly stable.\medskip

Whenever preferences and priorities are strict, then weak stability (Definition~\ref{def:weakgeneralstable})  can be simply referred to as stability. It is easy to check that the following now holds.

\begin{proposition}A generalized deterministic matching is weakly stable if and only if it is individually rational, non-wasteful, and has no envy.\label{Prop:weakStableWiffNJE-IR-NW}
\end{proposition}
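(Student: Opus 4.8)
The plan is to prove the two directions of the equivalence between weak stability (Definition~\ref{def:weakgeneralstable}) and the conjunction of individual rationality, non-wastefulness, and no envy (Definitions~\ref{def:weakgeneralindividualrationality}, \ref{def:weakgeneralnonwastefulness}, \ref{def:weakgeneralnoenvy}). Since the proof is elementary, I would carry it out directly by unwinding definitions, and I expect the only subtlety to be bookkeeping about the null object $\emptyset$ and the cases of unmatched agents/objects.

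For the ``only if'' direction, assume the generalized deterministic matching $p$ is weakly stable. Individual rationality is part of the definition of weak stability, so it is immediate. For non-wastefulness, suppose towards a contradiction that there is an acceptable pair $(i,o)$ with $\sum_{o':o'\succsim_i o}p(i,o')<1$ and $\sum_{j\in N}p(j,o)<1$; since $p$ is deterministic these sums are $0$, so agent $i$ gets no object weakly better than $o$ and object $o$ is unassigned, hence $\sum_{j:j\succsim_o i}p(j,o)=0$. As the pair is acceptable, $o\succ_i\emptyset$ and $i\succ_o\emptyset$, so agent~$i$ would like $o$ and object~$o$ would like $i$ — exactly the forbidden blocking pair, contradicting weak stability. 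For no envy, suppose there exist $i,j\in N$ and $o\in O$ with $\sum_{o':o'\succsim_i o}p(i,o')=0$, $p(j,o)=1$, and $i\succ_o j$. Then $\sum_{o':o'\succsim_i o}p(i,o')=0$ already gives ``agent $i$ would like $o$''; and since $p$ is individually rational and $p(j,o)=1$, object~$o$ finds $j$ acceptable, so $j\succ_o\emptyset$, hence from $i\succ_o j$ we get $i\succ_o\emptyset$; combined with $p(j,o)=1$ and $i\succ_o j$ we obtain $\sum_{k:k\succsim_o i}p(k,o)=0$ (no agent weakly higher than $i$ in $o$'s priority is matched to $o$, since only $j$ is and $i\succ_o j$). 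This is again the forbidden blocking pair, a contradiction.

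For the ``if'' direction, assume $p$ is individually rational, non-wasteful, and has no envy, and suppose towards a contradiction that $p$ is not weakly stable. Individual rationality is part of Definition~\ref{def:weakgeneralstable}, so the failure must come from a blocking pair: there is $(i,o)\in N\times O$ with $\sum_{o':o'\succsim_i o}p(i,o')=0$ and $\sum_{j:j\succsim_o i}p(j,o)=0$. From $\sum_{o':o'\succsim_i o}p(i,o')=0$ and individual rationality of agent~$i$'s own match, $o$ must be strictly better than $i$'s match (or $i$ is unmatched) and in particular $o\succ_i\emptyset$. I claim also $i\succ_o\emptyset$: since $\sum_{j:j\succsim_o i}p(j,o)=0$, if $\emptyset\succ_o i$ then every agent $j$ matched to $o$ would satisfy $\emptyset\succ_o i\succsim$? — here one just observes that no $j$ with $j\succsim_o i$ is matched to $o$; if in addition $\emptyset\succ_o i$, then $o$ is matched only to agents $j$ with $i\succ_o j$, hence $\emptyset\succ_o j$, violating individual rationality unless $o$ is unmatched. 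Split into two cases. Case~1: object~$o$ is unmatched, i.e.\ $\sum_{l\in N}p(l,o)=0$. Then $(i,o)$ is acceptable ($o\succ_i\emptyset$ and $i\succ_o\emptyset$), $\sum_{o':o'\succsim_i o}p(i,o')=0<1$, and $o$ is not fully allocated, contradicting non-wastefulness. Case~2: object~$o$ is matched to some agent $j$, i.e.\ $p(j,o)=1$; since $\sum_{k:k\succsim_o i}p(k,o)=0$ we have $i\succ_o j$, and with $\sum_{o':o'\succsim_i o}p(i,o')=0$ this is precisely an envy of $i$ toward $j$ for $o$ (with $i\succ_o j$), contradicting no envy. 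Either way we reach a contradiction, so $p$ is weakly stable.

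The main obstacle, such as it is, is the careful handling of the null object in the ``if'' direction: one must deduce $o\succ_i\emptyset$ and $i\succ_o\emptyset$ from the blocking condition together with individual rationality, and correctly split on whether $o$ is unmatched (use non-wastefulness) or matched (use no envy). Everything else is routine unwinding of definitions, parallel to how Proposition~\ref{Prop:weakNJE-IR-NWiffStable} reduces to a finite case analysis.
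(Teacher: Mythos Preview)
Your proof is correct and follows the same route as the paper's: both argue that, for an individually rational deterministic matching, the blocking condition in Definition~\ref{def:weakgeneralstable} splits into the case where $o$ is unassigned (handled by non-wastefulness) and the case where $o$ is matched to a lower-priority agent $j$ (handled by no envy). Your treatment of acceptability is more explicit than the paper's very terse argument, though the step ``in particular $o\succ_i\emptyset$'' when $i$ may be unmatched, and the claim $i\succ_o\emptyset$ in Case~1, rely on reading the blocking condition as the English in Definition~\ref{def:weakgeneralstable} intends (both sides strictly prefer each other to their current, individually rational, matches)---a reading the paper also tacitly uses.
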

\begin{proof}[\textbf{Proof}]
Let $p$ be a generalized deterministic matching that is individually rational. Assume $p$ is weakly stable, i.e., there exists no pair $(i,o)\in N\times O$ such that $\sum_{o':o'\succsim_i o}p(i,o')=0$ and $\sum_{j:j\succsim_o i}p(j,o)=0$. Since $p$ is deterministic, this is equivalent to there being no pair $(i,o)\in N\times O$ such that $\sum_{o':o'\succsim_i o}p(i,o')=0$ and (a) $\sum_{j\in N}p(j,o)=0$ or (b) for some agent $j\in N$, $i\succ_o j$ and $p(j,o)=1$. This in turn is equivalent to $p$ being (a) non-wasteful and (b) having no envy.
\end{proof}

Propositions~\ref{Prop:weakNJE-IR-NWiffStable} and \ref{Prop:weakStableWiffNJE-IR-NW} now imply the following (see Figure~\ref{fig:weakassociated-deterministic}).

\begin{proposition}\label{Prop:weakStableiffStable}
A generalized deterministic matching $p$ is weakly stable if and only if the associated deterministic matching $p'$ is weakly stable.
\end{proposition}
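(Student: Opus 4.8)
The plan is to obtain Proposition~\ref{Prop:weakStableiffStable} as an immediate corollary of the two preceding results, essentially by a short chain of logical equivalences, so no new combinatorial work is needed. First I would recall that, by Proposition~\ref{Prop:weakStableWiffNJE-IR-NW} applied to the generalized deterministic matching $p$, weak stability of $p$ is equivalent to the conjunction ``$p$ is individually rational, non-wasteful, and has no envy.'' Then I would invoke Proposition~\ref{Prop:weakNJE-IR-NWiffStable}, which states precisely that this conjunction holds if and only if the associated deterministic matching $p'$ is weakly stable. Chaining these two equivalences gives: $p$ weakly stable $\iff$ ($p$ individually rational, non-wasteful, no envy) $\iff$ $p'$ weakly stable, which is the claim.

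Concretely, the proof body would be only a sentence or two: ``By Proposition~\ref{Prop:weakStableWiffNJE-IR-NW}, $p$ is weakly stable if and only if $p$ is individually rational, non-wasteful, and has no envy. By Proposition~\ref{Prop:weakNJE-IR-NWiffStable}, the latter holds if and only if the associated deterministic matching $p'$ is weakly stable. Combining these two equivalences yields the statement.'' I would also add a pointer to Figure~\ref{fig:weakassociated-deterministic}, mirroring the sentence just before the proposition in the excerpt.

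There is essentially no obstacle here, since all the substantive content --- in particular the large case analysis in Tables~\ref{table:weakviolations1}, \ref{table:weakviolations2}, and \ref{table:weakviolations3} --- was already carried out in the proof of Proposition~\ref{Prop:weakNJE-IR-NWiffStable}. The only thing to be careful about is that Proposition~\ref{Prop:weakStableWiffNJE-IR-NW} is stated for generalized deterministic matchings and implicitly treats individual rationality as part of weak stability, so one should make sure the decomposition ``weakly stable $=$ IR $+$ NW $+$ no envy'' is quoted in exactly the form proved, and that the associated matching $p'$ lives in the standard setting where all entities are acceptable (so that Definition~\ref{def:weakstability} applies to it verbatim). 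If one wanted to be fully self-contained one could instead note that $p'$ is automatically individually rational (all entities acceptable in $I'$), so weak stability of $p'$ in the sense of Definition~\ref{def:weakgeneralstable} reduces to the no-envy condition of Definition~\ref{def:weakstability}, but citing the two propositions is cleaner and is clearly the intended route.
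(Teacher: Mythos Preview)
Your proposal is correct and matches the paper's approach exactly: the paper does not give a separate proof but simply states that Propositions~\ref{Prop:weakNJE-IR-NWiffStable} and~\ref{Prop:weakStableWiffNJE-IR-NW} imply the result (see the sentence immediately preceding the proposition and Figure~\ref{fig:weakassociated-deterministic}). Your chain of equivalences is precisely what the paper intends.
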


\begin{figure}[H]
\begin{framed}
\textbf{Propositions~\ref{Prop:weakNJE-IR-NWiffStable}, \ref{Prop:weakStableWiffNJE-IR-NW}, and \ref{Prop:weakStableiffStable}:} For any generalized  deterministic matching $p$ and its associated deterministic matching $p'$, we have\bigskip
\begin{center}
\scalebox{1}{
\begin{tikzpicture}
\tikzstyle{pfeil}=[->,>=angle 60, shorten >=1pt,draw]
\tikzstyle{onlytext}=[]
    \node[onlytext] (p) at (0,0) {\begin{tabular}{cl}
 & no envy,\\     $p$ &  non-wastefulness,\\
 &individual rationality\\
 &(Defs.~\ref{def:weakgeneralnoenvy}, \ref{def:weakgeneralnonwastefulness}, and \ref{def:weakgeneralindividualrationality})
\end{tabular}
};

    \node[onlytext] (p') at (3,-4) {\begin{tabular}{cl}
 & no envy\\    $p'$  & $=$ weakly stable\\
 &(Def.~\ref{def:weakstability})
\end{tabular}
};

    \node[onlytext] (p-stable) at (-3,-4) {\begin{tabular}{cl}
$p$ & weakly stable\\
 &(Def.~\ref{def:weakgeneralstable})
\end{tabular}
};

\draw[pfeil, thick, blue] (p) to (p');
\draw[pfeil, thick, blue] (p') to (p);
 \node[onlytext] (pp') at (2.5,-2) {Prop.~\ref{Prop:weakNJE-IR-NWiffStable}};

\draw[pfeil, thick, blue] (p-stable) to (p');
\draw[pfeil, thick, blue] (p') to (p-stable);
 \node[onlytext] (pp') at (-0.3,-4.5) {Prop.~\ref{Prop:weakStableiffStable}};

\draw[pfeil, thick, blue] (p-stable) to (p);
\draw[pfeil, thick, blue] (p) to (p-stable);
 \node[onlytext] (pp-stable) at (-2.5,-2) {Prop.~\ref{Prop:weakStableWiffNJE-IR-NW}};
\end{tikzpicture}}
\end{center}	
\caption{Relations between weak stability, no envy, individual rationality, and non-wastefulness for generalized deterministic matchings.}
\label{fig:weakassociated-deterministic}
\end{framed}
\end{figure}

By Proposition~\ref{Prop:weakStableWiffNJE-IR-NW}, a generalized deterministic matching $p$ is weakly stable if it is individually rational, non-wasteful, and has no envy. Recall that no envy implies that there exist no $i,j\in N$ and no $o\in O$ such that $\sum_{o':o'\succsim_i o}p(i,o')=0$, $p(j,o)=1$, and $i\succ_{o}j$. The latter is equivalent to the following inequalities being satisfied:\footnote{For instances with strict preferences and strict priorities, this characterization of stable matchings is due to \citet{Rothblum1992} \citep[see also][]{RothetalMOR93}.} for each acceptable pair $(i,o)\in N\times O$,
\begin{equation}\label{weakgeneralLE4W}
p(i,o)+\sum_{o':o'\succsim_i o;o'\neq o}p(i,o')+ \sum_{j:j\succsim_{o}i;j\neq i}p(j,o)\geq 1.
\end{equation}

We now adapt all previous stability concepts introduced in Section~\ref{subsection:weakprefpri} to generalized random matchings. First, we adjust the property of no ex-ante envy to generalized random matchings.

\begin{definition}[\textbf{No ex-ante envy for generalized random matchings}]
\normalfont	A generalized random matching $p$ has \emph{\textbf{no ex-ante envy}} if there exists no agent $i$ who prefers a higher probability for object $o$ while object $o$ is matched with positive probability to some agent $j$ with lower priority than $i$, i.e., there exist no $i,j\in N$ and no $o\in O$ such that $\sum_{o':o'\succsim o}p(i,o')<1$ (agent $i$ would like to have more of $o$), $p(j,o)>0$ (agent $j$ has some of $o$), $o\succ_i o'$, and $i\succ_oj$.\label{def:weakgeneralnoexantejustifiedenvy}
\end{definition}

For generalized random matchings the definition of Aharoni-Fleiner fractional stability (Definition~\ref{def:AFfractionalstability}) remains the same and its equivalence to no ex-ante envy follows as before.\medskip

Next, for each generalized random matching $p$, no ex-ante envy, individual rationality, and non-wastefulness are equivalent to ex-ante weak stability of the associated random matching~$p'$.

\begin{proposition}\label{Prop:weakex-anteNJE-IR-NWiffex-anteStable}
The generalized random matching $p$ has no ex-ante envy and is individually rational and non-wasteful if and only if the associated random matching $p'$ is ex-ante weakly stable.
\end{proposition}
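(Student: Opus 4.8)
The plan is to follow the proof of Proposition~\ref{Prop:weakNJE-IR-NWiffStable} essentially line by line, using the dictionary ``matched with probability one'' becoming ``matched with positive probability'', and replacing each deterministic no-envy condition of the form $\sum_{o':o'\succsim_i o}p(i,o')=0$ by its ex-ante analogue $\sum_{o':o'\succsim_i o}p(i,o')<1$ (``agent $i$ would like more of $o$''). As before, I would prove the two implications by contraposition, so that the two halves become: \textbf{(Part 1)} if $p$ is individually irrational, or wasteful, or has ex-ante envy, then $p'$ is not ex-ante weakly stable; and \textbf{(Part 2)} if $p'$ is not ex-ante weakly stable, then $p$ is individually irrational, or wasteful, or has ex-ante envy.

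For Part~1, each row of Table~\ref{table:weakviolations1} has a verbatim probabilistic counterpart obtained from the construction \eqref{agentspref}--\eqref{nullpri} of $p'$. For instance, if $p(i,o_j)>0$ while $\emptyset\succ_i o_j$, then $p'(i,o_j)>0$ and $p'(d_j,\phi_i)=p(i,o_j)>0$, and since $\phi_i\succ'_i o_j$ and $i\succ'_{\phi_i}d_j$ agent $i$ ex-ante envies $d_j$ for $\phi_i$; if $(i,o_j)$ is an acceptable wasteful pair then $p'(d_j,o_j)=1-\sum_{k\in N}p(k,o_j)>0$ while $\sum_{o':o'\succsim'_i o_j}p'(i,o')=\sum_{o':o'\succsim_i o_j}p(i,o')<1$, so $i$ carries positive probability on some entity $o''$ strictly below $o_j$ in $\succsim'_i$ (a worse or unacceptable object, or the personal null object $\phi_i$), and since $i\succ'_{o_j}d_j$ agent $i$ ex-ante envies $d_j$ for $o_j$; and an ex-ante envy of $i$ towards $j$ for $o$ in $p$ transfers unchanged to $p'$, using $\sum_{o':o'\succsim'_i o}p'(i,o')=\sum_{o':o'\succsim_i o}p(i,o')<1$. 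The individually-irrational object case is symmetric, via $o_j\succ'_{d_j}\phi_i$ and $d_j\succ'_{o_j}i$.

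For Part~2, I would take an ex-ante envy of some $a$ towards some $b$ for some $c$ in $p'$ --- that is, $p'(a,o')>0$, $p'(b,c)>0$, $c\succ'_a o'$, $a\succ'_c b$ --- and split into the same sixteen cases according to which of the four blocks $N\times O$, $N\times\Phi$, $D\times O$, $D\times\Phi$ the pairs $(a,o')$ and $(b,c)$ belong to. The eight impossible cases of Table~\ref{table:weakviolations3} remain impossible, for the identical reasons, since those arguments use only the fixed preference/priority structure from \eqref{agentspref}--\eqref{nullpri} and never the numerical values of $p'$. Each of the remaining eight cases translates back to an individual-rationality, non-wastefulness, or ex-ante-envy violation of $p$ exactly as in Table~\ref{table:weakviolations2}: one weakens every ``$p'(\cdot,\cdot)=1$'' there to ``$p'(\cdot,\cdot)>0$'', weakens every ``$\sum_{\cdots}=0$'' no-envy condition to ``$\sum_{\cdots}<1$'', and uses the routine observation that $p'(a,o')>0$ together with $c\succ'_a o'$ forces $\sum_{o'':o''\succsim'_a c}p'(a,o'')<1$, which, read back through the construction, is precisely the ``would like more'' hypothesis feeding into the corresponding violation of $p$.

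The only genuinely new wrinkle relative to Proposition~\ref{Prop:weakNJE-IR-NWiffStable}, and the step I would be most careful with, is the bookkeeping of the entity that carries an agent's ``slack'' probability. In the deterministic argument an agent's entire match is a single object, whereas here $\sum_{o':o'\succsim_i o}p(i,o')<1$ tells us only that agent $i$ places positive probability on \emph{some} entity strictly below $o$ in $\succsim'_i$, and that entity may be the personal null object $\phi_i$ (equivalently, $i$ is not fully matched in $p$) rather than a worse object of $O$. Tracking this distinction --- and observing that it is harmless, because in every case it suffices to exhibit \emph{one} $o''$ with $p'(i,o'')>0$ and $c\succ'_i o''$ --- is the main obstacle, but it is a finite case check of exactly the same flavour as the one already carried out for Proposition~\ref{Prop:weakNJE-IR-NWiffStable}.
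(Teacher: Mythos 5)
Your proposal is correct and takes essentially the same route as the paper, whose proof simply reruns the argument of Proposition~\ref{Prop:weakNJE-IR-NWiffStable} after replacing probability-one matches by positive-probability matches and the deterministic no-envy sums (equal to $0$) by their ex-ante counterparts (strictly less than $1$). The bookkeeping issue you flag about where the slack probability sits (possibly on the personal null object $\phi_i$) is exactly the ``any positive probability of receiving an object'' reading the paper invokes, so nothing further is needed.
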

\begin{proof}[\textbf{Proof}]
The proof follows exactly along the lines of the proof of Proposition~\ref{Prop:weakNJE-IR-NWiffStable}. The only difference is that in that proof no envy, individual rationality, non-wastefulness, and weak stability all are defined for probabilities 1 and 0 to receive an object and when we now consider no ex-ante envy, individual rationality, non-wastefulness, and ex-ante weak stability, these definitions pertain to any probability of receiving an object: all arguments that were using an agent receiving an object with probability 1 now apply for an agent receiving a positive probability of that object.
\end{proof}

Ex-ante weak stability for generalized random matchings is naturally defined as follows.

\begin{definition}[\textbf{Ex-ante weak stability for generalized random matchings}]
\normalfont	A generalized random matching $p$ is \emph{\textbf{ex-ante weakly stable}} if it is individually rational and there exist no agent and no object that would prefer a higher probability for each other, i.e., there exist no pair $(i,o)\in N\times O$ such that $\sum_{o':o'\succsim_i o}p(i,o')<1$ (agent $i$ would like to have more of $o$) and $\sum_{j:j\succsim_o i}p(j,o)<1$  (object $o$ would like to be matched more to $i$).\label{def:weakgeneralexantestable}
\end{definition}

It is easy to check that the following now holds.

\begin{proposition}A generalized random matching is ex-ante weakly stable if and only if it has no ex-ante envy and it is individually rational and non-wasteful.\label{Prop:weakex-anteStableWiffNJE-IR-NW}
\end{proposition}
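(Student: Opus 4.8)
The plan is to run exactly the argument of Proposition~\ref{Prop:weakStableWiffNJE-IR-NW}, performing the passage from probability $1$/probability $0$ to positive probability/probability strictly below $1$ — the same upgrade that lifts Proposition~\ref{Prop:weakNJE-IR-NWiffStable} to Proposition~\ref{Prop:weakex-anteNJE-IR-NWiffex-anteStable}. Since individual rationality occurs verbatim on both sides of the asserted equivalence, I would fix a generalized random matching $p$ that is individually rational and reduce the claim to: $p$ is ex-ante weakly stable if and only if $p$ has no ex-ante envy and $p$ is non-wasteful.

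The one arithmetic ingredient, used in both directions, is the splitting of the ``object side'' of the blocking condition. For every pair $(i,o)\in N\times O$, feasibility gives \[\sum_{j\in N}p(j,o)=\sum_{j:j\succsim_o i}p(j,o)+\sum_{j:i\succ_o j}p(j,o)\leq 1,\] so $\sum_{j:j\succsim_o i}p(j,o)<1$ holds if and only if $\sum_{j\in N}p(j,o)<1$ (object $o$ is not fully allocated) or there is an agent $j$ with $i\succ_o j$ and $p(j,o)>0$ (a strictly lower-priority agent consumes a positive share of $o$). Hence a pair $(i,o)$ violating ex-ante weak stability, \ie with $\sum_{o':o'\succsim_i o}p(i,o')<1$ and $\sum_{j:j\succsim_o i}p(j,o)<1$, exists precisely when there is an acceptable pair $(i,o)$ with $\sum_{o':o'\succsim_i o}p(i,o')<1$ and $\sum_{j\in N}p(j,o)<1$ (a wastefulness witness) or a triple $(i,j,o)$ with $\sum_{o':o'\succsim_i o}p(i,o')<1$, $p(j,o)>0$ and $i\succ_o j$ (an ex-ante-envy witness). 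Negating this equivalence yields the proposition. Concretely, for ``$\Rightarrow$'' I would start from a witness of a non-wastefulness failure or of an ex-ante-envy failure and observe that $\sum_{j:j\succsim_o i}p(j,o)\leq\sum_{j\in N}p(j,o)<1$ (respectively $\sum_{k:k\succsim_o i}p(k,o)\leq 1-p(j,o)<1$, since $j$ lies in $\{k:i\succ_o k\}$) turns it into a blocking pair; for ``$\Leftarrow$'' I would take a blocking pair $(i,o)$ and branch on whether $\sum_{j:i\succ_o j}p(j,o)$ is zero (then $o$ is underdemanded, so non-wastefulness fails) or positive (then a lower-priority agent $j$ with $p(j,o)>0$ appears, so no ex-ante envy fails).

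The only genuinely delicate point is the acceptability bookkeeping: non-wastefulness quantifies only over acceptable pairs, whereas the blocking clause of ex-ante weak stability ranges over all of $N\times O$. I would dispose of this exactly as in the proof of Proposition~\ref{Prop:weakStableWiffNJE-IR-NW}, invoking that individual rationality of $p$ forces $p(i,o)=0$ on every unacceptable pair — which is what lets one see that an unacceptable pair can witness a failure of ex-ante weak stability only through a configuration already recorded by non-wastefulness or by no ex-ante envy. I expect this acceptability case distinction, rather than the arithmetic above, to be the step that needs the most care to state cleanly.
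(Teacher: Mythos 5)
Your proposal is correct and takes essentially the same route as the paper's own proof: the paper likewise fixes individual rationality and rewrites the object-side blocking condition $\sum_{j:j\succsim_o i}p(j,o)<1$ as ``$o$ is not fully allocated or some strictly lower-priority agent holds a positive share of $o$,'' which is exactly your arithmetic splitting, and then reads off non-wastefulness and no ex-ante envy. The acceptability bookkeeping you flag is not made explicit in the paper's argument (nor in its proof of Proposition~\ref{Prop:weakStableWiffNJE-IR-NW}), so your treatment via individual rationality is, if anything, slightly more careful than the original.
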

\begin{proof}[\textbf{Proof}]
Let $p$ be a generalized random matching that is individually rational. Assume $p$ is ex-ante weakly stable, i.e., there exists no pair $(i,o)\in N\times O$ such that $\sum_{o':o'\succsim_i o}p(i,o')<1$ and $\sum_{j:j\succsim_o i}p(j,o)<1$. This is equivalent to there being no pair $(i,o)\in N\times O$ such that $\sum_{o':o'\succsim_i o}p(i,o')<1$ and (a) $\sum_{j\in N}p(j,o)<1$ or (b) for some agent $j\in N$, $i\succ_o j$ and $p(j,o)>0$. This in turn is equivalent to $p$ being (a) non-wasteful and (b) having no ex-ante envy.
\end{proof}

Propositions~\ref{Prop:weakex-anteNJE-IR-NWiffex-anteStable} and \ref{Prop:weakex-anteStableWiffNJE-IR-NW} now imply the following  (see the top part of Figure~\ref{fig:weakassociated-random}).

\begin{proposition}\label{Prop:weakex-anteStableiffex-anteStable}
The generalized random matching $p$ is ex-ante weakly stable if and only if the associated random matching $p'$ is ex-ante weakly stable.
\end{proposition}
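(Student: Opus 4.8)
The plan is to obtain this statement as an immediate corollary of the two propositions that precede it, exactly mirroring the way Proposition~\ref{Prop:weakStableiffStable} is derived from Propositions~\ref{Prop:weakNJE-IR-NWiffStable} and \ref{Prop:weakStableWiffNJE-IR-NW} in the deterministic case. The key observation is that both ingredients are biconditionals phrased around the same triple of conditions on $p$, so all that is needed is to chain them.

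Concretely, I would proceed in two steps. First, apply Proposition~\ref{Prop:weakex-anteStableWiffNJE-IR-NW}, which says that a generalized random matching $p$ is ex-ante weakly stable if and only if $p$ has no ex-ante envy and is individually rational and non-wasteful. Second, apply Proposition~\ref{Prop:weakex-anteNJE-IR-NWiffex-anteStable}, which says that $p$ has no ex-ante envy and is individually rational and non-wasteful if and only if the associated random matching $p'$ is ex-ante weakly stable. Composing these two equivalences yields: $p$ is ex-ante weakly stable $\iff$ $p$ has no ex-ante envy, is individually rational, and is non-wasteful $\iff$ $p'$ is ex-ante weakly stable, which is precisely the claim.

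I do not anticipate any genuine obstacle here; the only things to be careful about are bookkeeping details already handled in the cited results, namely that individual rationality and non-wastefulness of $p$ are the conditions built into the associated instance's extension of preferences and priorities (so that $p'$'s ex-ante weak stability cannot be violated ``for free'' on the dummy/null side), and that the ``no ex-ante envy'' notion for the generalized model (Definition~\ref{def:weakgeneralnoexantejustifiedenvy}) is the one matching the ex-ante weak stability notion on the associated side (Definition~\ref{def:weakex-antestability}). Since Propositions~\ref{Prop:weakex-anteNJE-IR-NWiffex-anteStable} and \ref{Prop:weakex-anteStableWiffNJE-IR-NW} already encapsulate exactly these correspondences, the proof is a one-line chaining argument and can be stated as such.

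\begin{proof}[\textbf{Proof}]
By Proposition~\ref{Prop:weakex-anteStableWiffNJE-IR-NW}, the generalized random matching $p$ is ex-ante weakly stable if and only if it has no ex-ante envy and is individually rational and non-wasteful. By Proposition~\ref{Prop:weakex-anteNJE-IR-NWiffex-anteStable}, the latter holds if and only if the associated random matching $p'$ is ex-ante weakly stable. Chaining these two equivalences yields the claim.
\end{proof}
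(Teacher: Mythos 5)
Your proposal is correct and matches the paper exactly: the paper itself derives this proposition by observing that Propositions~\ref{Prop:weakex-anteNJE-IR-NWiffex-anteStable} and \ref{Prop:weakex-anteStableWiffNJE-IR-NW} immediately imply it, which is precisely your chaining of the two biconditionals.
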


Next, we adjust the properties of ex-post weak stability and robust ex-post weak stability to generalized random matchings.\medskip

Recall that each generalized random matching can be represented as a convex combination of generalized deterministic matchings. We now establish three results concerning the decomposition of an individually rational, respectively non-wasteful, generalized random matching.

\begin{lemma}
A generalized random matching is individually rational if and only if in each of its decompositions all generalized deterministic matchings are individually rational.
\end{lemma}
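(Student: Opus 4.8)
Both directions are immediate from the fact that individual rationality is the requirement that certain entries of the matching matrix (those corresponding to unacceptable pairs) be exactly zero, together with the observation that in a decomposition $p=\sum_{j=1}^k\lambda_jP_j$ all weights $\lambda_j$ are strictly positive and all entries $P_j(i,o)$ are nonnegative. I would prove the ``only if'' direction by contraposition and the ``if'' direction directly, using that every generalized random matching admits at least one decomposition (as noted in the excerpt, via Horn and Johnson, Section 3.2, or \citet{KoMa10a}, Proposition 1).

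\textbf{Only if (contrapositive).} Suppose some decomposition $p=\sum_{j=1}^k\lambda_jP_j$ has a generalized deterministic matching, say $P_\ell$, that is not individually rational. Then there is an unacceptable pair $(i,o)\in N\times O$ with $P_\ell(i,o)=1$. Since $\lambda_\ell\in(0,1]$ and each $P_j(i,o)\geq 0$, it follows that $p(i,o)=\sum_{j=1}^k\lambda_jP_j(i,o)\geq \lambda_\ell P_\ell(i,o)=\lambda_\ell>0$. Hence $p$ is not individually rational, which establishes the contrapositive of the ``only if'' direction.

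\textbf{If.} Suppose that in each decomposition of $p$ all generalized deterministic matchings are individually rational. Since $p$ admits at least one decomposition $p=\sum_{j=1}^k\lambda_jP_j$, each $P_j$ in it is individually rational; that is, for every unacceptable pair $(i,o)\in N\times O$ and every $j\in\{1,\ldots,k\}$ we have $P_j(i,o)=0$. Therefore $p(i,o)=\sum_{j=1}^k\lambda_jP_j(i,o)=0$ for every unacceptable pair $(i,o)$, so $p$ is individually rational.

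\textbf{Main obstacle.} There is no real obstacle here; the only point requiring care is that the ``if'' direction uses the existence of a decomposition (so that the universally quantified hypothesis is not vacuous), which is guaranteed for generalized random matchings by the doubly substochastic/partial-permutation-matrix decomposition cited earlier in the section.
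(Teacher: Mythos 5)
Your proof is correct and follows essentially the same reasoning as the paper's: both rest on the observation that $p(i,o)>0$ for an unacceptable pair exactly when some deterministic matching carrying positive weight in a decomposition assigns that pair. The only cosmetic difference is that you argue the ``if'' direction directly from one explicit decomposition (citing its existence), whereas the paper states both directions contrapositively; the substance is identical.
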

\begin{proof}[\textbf{Proof}]
\textbf{Part 1:} Suppose that generalized random matching $p$ is individually irrational. Then, for some $(i,o)\in N\times O$, $p(i,o)>0$ and agent $i$ or object $o$ considers the other unacceptable. Then, in any decomposition of $p$ into generalized deterministic matchings, there exists a generalized deterministic matching $q$ such that $q(i,o)=1$ and $q$ is individually irrational.\medskip

\noindent \textbf{Part 2:} Suppose that at some decomposition of $p$ there exists an individually irrational generalized deterministic matching $q$, i.e., for some $(i,o)\in N\times O$, $q(i,o)=1$ and agent $i$ or object $o$ considers the other unacceptable. Then,  $p(i,o)>0$  and $p$ is individually irrational.
\end{proof}

\begin{lemma}\label{lemma:non-wasteful-decomp}
If a generalized random matching is non-wasteful, then in each of its decompositions all generalized deterministic matchings are non-wasteful.
\end{lemma}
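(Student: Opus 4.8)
The plan is to argue directly, reducing non-wastefulness to a ``saturation'' condition and then invoking the elementary observation that a convex combination of reals, each at most~$1$, equals~$1$ only if every summand carrying positive weight equals~$1$. Concretely, I would first restate non-wastefulness of a generalized random matching $q$ as: for every acceptable pair $(i,o)\in N\times O$, either $\sum_{o':o'\succsim_i o}q(i,o')=1$ or $\sum_{j\in N}q(j,o)=1$. This is immediate, since both sums are always at most~$1$ by (\ref{LE2gen}) and (\ref{LE3gen}), so the negation of ``$<1$ and $<1$'' is exactly this disjunction. Note that acceptability of a pair is a property of the instance, not of a particular matching, so it is unambiguous across $p$ and every generalized deterministic matching appearing in a decomposition of $p$.

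Next I would fix a decomposition $p=\sum_{t=1}^{k}\lambda_tP_t$ (each $\lambda_t\in(0,1]$, $\sum_t\lambda_t=1$) and an arbitrary acceptable pair $(i,o)$, and show that \emph{no} $P_t$ wastes $(i,o)$. By non-wastefulness of $p$ and the reformulation above, either $\sum_{o':o'\succsim_i o}p(i,o')=1$ or $\sum_{j\in N}p(j,o)=1$. In the first case, expanding $p=\sum_t\lambda_tP_t$ entrywise gives $\sum_{t=1}^{k}\lambda_t\bigl(\sum_{o':o'\succsim_i o}P_t(i,o')\bigr)=1$; each bracketed quantity is at most $\sum_{o'\in O}P_t(i,o')\le 1$ by (\ref{LE2gen}), and $\sum_t\lambda_t=1$, so every bracketed quantity with $\lambda_t>0$ — that is, every $t$ — equals $1$, which makes the first clause of wastefulness fail at $(i,o)$ for each $P_t$. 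The second case is symmetric: applying the same averaging step to $\sum_{t=1}^{k}\lambda_t\bigl(\sum_{j\in N}P_t(j,o)\bigr)=1$ and using (\ref{LE3gen}) forces $\sum_{j\in N}P_t(j,o)=1$ for every $t$, making the second clause fail. Since $(i,o)$ was an arbitrary acceptable pair, each $P_t$ is non-wasteful.

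There is no genuine obstacle here — the proof is short — but the one point to get right is the \emph{direction} of the convex-combination step: we are pushing a saturation equality \emph{downward} from $p$ to every $P_t$, which works precisely because the relevant partial sums are bounded above by~$1$ for \emph{every} substochastic matrix. Pushing wastefulness \emph{upward} from a single $P_t$ to $p$ would fail, since another deterministic matching in the decomposition could saturate the agent's upper contour set or fully allocate the object; this is exactly why the lemma claims only this implication and not an equivalence, in contrast with the preceding individual-rationality lemma.
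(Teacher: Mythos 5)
Your proof is correct and rests on the same elementary observation as the paper's: the relevant upper-contour and column partial sums are bounded by $1$ for every (sub)stochastic matrix, so saturation/non-saturation passes between $p$ and the deterministic matchings carrying positive weight in a decomposition. The paper simply states this in contrapositive form (a wasteful deterministic matching in the decomposition forces both partial sums of $p$ strictly below $1$), which is the same argument you give directly.
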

\begin{proof}
[\textbf{Proof}] Suppose that at some decomposition of $p$ there exists a wasteful generalized deterministic matching $q$, i.e., there exists an acceptable pair $(i,o)\in N\times O$ such that $\sum_{o':o'\succsim_i o}q(i,o')=0$ ($i$ would like to have object $o$) and $\sum_{j\in N}q(j,o)=0$ (object $o$ is not allocated). Then it follows that $\sum_{o':o'\succsim_i o}p(i,o')<1$ and $\sum_{j\in N}p(j,o)<1$. Hence, $p$ is wasteful.
\end{proof}

The following example shows that the converse statement in Lemma~\ref{lemma:non-wasteful-decomp} does not hold.

\begin{example}[\textbf{A wasteful generalized random matching that can be decomposed into generalized deterministic (non-wasteful) weakly stable  matchings}]\label{example:wastefulIR+inequalities}
\normalfont Consider the following general instance $I=(N,O,\succsim)$ with strict preferences and weak priorities (the brackets indicate indifferences): $N=\{1,2,3\}$, $O=\{x,y,z\}$,  	
\begin{center}
\begin{tabular}{lcccc}
$\succsim_1$:&$x$&$y$&$z$&$\emptyset$\\
$\succsim_2$:&$z$&$y$&$x$&$\emptyset$\\
$\succsim_3$:&$x$&$y$&$\emptyset$&$z$\\
\end{tabular}
\quad\quad
\begin{tabular}{lcccc}
$\succsim_{x}$:&$[1\ 3]$&$2$&$\emptyset$\\
$\succsim_{y}$:&$2$&$1$&$\emptyset$&$3$\\
$\succsim_{z}$:&$1$&$2$&$3$&$\emptyset$\\
\end{tabular}	
\end{center}
Consider the generalized random matching
\[p=\begin{pmatrix}
\nicefrac{1}{2}&0&\nicefrac{1}{2}\\
0&\nicefrac{1}{2}&\nicefrac{1}{2}\\
\nicefrac{1}{2}&0&0
\end{pmatrix}\]
and note that it is wasteful: agent 1 would like to have more of object $y$ that is not fully allocated. However, matching $p$ can be decomposed into two generalized deterministic non-wasteful and weakly stable matchings as follows:
\[p=\frac{1}{2}q^1+\frac{1}{2}q^2\]
where
\[q^1=\begin{pmatrix}
1&0&0\\
0&0&1\\
0&0&0
\end{pmatrix}\]							and
\[q^2=\begin{pmatrix}
0&0&1\\
0&1&0\\
1&0&0
\end{pmatrix}.\]
Note that $q^1$ and $q^2$ are non-wasteful and weakly stable: at $q^1$ both agents~1 and 2 are matched to their most preferred objects and the unassigned object $y$ finds agent~3 unacceptable; agent~3 cannot block with $x$ since $x$ has maximal priority for agent~1 and agent~3 cannot block with $z$ since $z$ has higher priority for agent~2. At $q^2$ objects $y$ and $z$ will not block because they are matched respectively to their highest priority agents; agent~1 would like to block with $x$ but $x$ has maximal priority for agent~3.\hfill~$\diamond$
\end{example}

Example~\ref{example:wastefulIR+inequalities} illustrates why in the next two definitions it is important to add non-wastefulness.

\begin{definition}[\textbf{Ex-post weak stability for generalized random matchings}]
\normalfont	A generalized random matching $p$ is \emph{\textbf{ex-post weakly stable}} if it is non-wasteful and can be decomposed into generalized deterministic weakly stable matchings.\label{def:weakgeneralexpostStability}
\end{definition}

\begin{definition}[\textbf{Robust ex-post weak stability for generalized random matchings}]
\normalfont	A generalized random matching $p$ is \emph{\textbf{robust ex-post weakly stable}} if it is non-wasteful and all of its decompositions are into generalized deterministic weakly stable matchings.\label{def:weakgeneralrobustexpoststability}
\end{definition}

We have the following equivalences for ex-post weak stability and robust ex-post weak stability for generalized random matchings and their associated random matchings.

\begin{proposition}\label{Prop:weakexpostStableiffexpostStable}
The generalized random matching $p$ is ex-post weakly stable if and only if the associated random matching $p'$ is ex-post weakly stable and respects non-wastefulness.\label{Prop:weakgeneralexpost-iff-expost}
\end{proposition}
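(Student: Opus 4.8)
The plan is to reduce everything to the deterministic equivalence already established, Proposition~\ref{Prop:weakStableiffStable} (a generalized deterministic matching $q$ is weakly stable iff its associated deterministic matching $q'$ is), together with the observation that the map $p\mapsto p'$ is affine and hence sends a decomposition of $p$ into the corresponding decomposition of $p'$. Note first that ``$p'$ respects non-wastefulness'' is by definition just ``$p$ is non-wasteful''; since this condition sits on both sides of the claimed equivalence, the real content is that, for a non-wasteful $p$, the matching $p$ decomposes into generalized deterministic weakly stable matchings if and only if $p'$ decomposes into deterministic weakly stable matchings of the associated instance $I'$.

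For the ``only if'' direction I would start from $p=\sum_j\lambda_j P_j$ with every $P_j$ generalized deterministic and weakly stable. Inspecting the formulas defining the associated matching shows that $P_j'$ again has only $0/1$ entries and is bistochastic, hence is a deterministic matching of $I'$, and that $p'=\sum_j\lambda_j P_j'$. By Proposition~\ref{Prop:weakStableiffStable} each $P_j'$ is weakly stable, so $p'$ is ex-post weakly stable; and it respects non-wastefulness because $p$ is non-wasteful. This direction is routine.

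For the ``if'' direction I would take a decomposition $p'=\sum_j\mu_j R_j$ with every $R_j$ deterministic and weakly stable in $I'$, and $p$ non-wasteful. Since $\mu_j>0$ and $R_j\ge 0$, each $R_j$ vanishes wherever $p'$ does; in particular $R_j(i,\phi_k)=0$ for $k\ne i$ and $R_j(d_l,o_k)=0$ for $l\ne k$, so in $R_j$ every agent of $N$ is matched either within $O$ or to her own null object $\phi_i$, and every object of $O$ is matched either within $N$ or to its own dummy $d_l$. Set $Q_j:=R_j|_{N\times O}$; this is a generalized deterministic matching, and $p=\sum_j\mu_j Q_j$. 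It then suffices to show each $Q_j$ is weakly stable, for then $p$, being also non-wasteful, is ex-post weakly stable. To get weak stability of $Q_j$ I would compare $R_j$ with the associated matching $Q_j'$ of its own restriction: by the structural zeros above, $R_j$ and $Q_j'$ agree on all $N$-rows and all $O$-columns, and moreover $R_j(i,\phi_i)=1-\sum_o R_j(i,o)=Q_j'(i,\phi_i)$ and $R_j(d_l,o_l)=Q_j'(d_l,o_l)$; the two can differ only in how the ``free'' dummies (those not matched within $O$) are paired with the ``free'' null objects (those not matched within $N$). A short case analysis of candidate blocking pairs of $Q_j'$ — using that $o_l$ is the top choice of $d_l$, that $i$ is the top choice of $\phi_i$, and that weak stability of $R_j$ already forbids any agent (object) of $I$ from being matched to an object (agent) it deems unacceptable — shows that any blocking pair of $Q_j'$ would lift to a blocking pair of $R_j$; hence $Q_j'$ is weakly stable, and Proposition~\ref{Prop:weakStableiffStable} gives that $Q_j$ is weakly stable.

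I expect the last step of the ``if'' direction to be the main obstacle. A decomposition of $p'$ need not be ``symmetric'': an $R_j$ occurring in it need not equal the associated matching $Q_j'$ of its restriction, since the coupling $p'(d_l,\phi_i)=p'(i,o_l)=p(i,o_l)$ can be respected only in aggregate, not term by term. Hence Proposition~\ref{Prop:weakStableiffStable} cannot be applied to $R_j$ directly, and one must argue — either through the blocking-pair analysis sketched above, or, equivalently, by re-symmetrizing the whole decomposition into $p'=\sum_j\mu_j Q_j'$ (which sums to the same matrix $p'$ and so is still a legitimate decomposition) — that re-pairing the free dummies with the free null objects according to the coupling creates no instabilities.
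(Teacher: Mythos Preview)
Your proposal is correct and follows the same overall strategy as the paper: push a decomposition of $p$ forward to $p'$ (and vice versa) and invoke Proposition~\ref{Prop:weakStableiffStable} termwise. The ``only if'' direction is identical to the paper's.

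For the ``if'' direction you are actually more careful than the paper. The paper simply writes that ``by taking the first $n$ rows and the first $m$ columns of each of the deterministic weakly stable matchings in the decomposition, we can derive a decomposition of $p$ into generalized deterministic weakly stable matchings (Proposition~\ref{Prop:weakStableiffStable}),'' without addressing the point you flag: an $R_j$ in a decomposition of $p'$ need not be the associated matching of its own restriction $Q_j$, so Proposition~\ref{Prop:weakStableiffStable} does not literally apply to it. Your observation that the structural zeros of $p'$ force $R_j(i,\phi_k)=0$ for $k\ne i$ and $R_j(d_l,o_k)=0$ for $l\ne k$ is exactly what is needed to repair this: with those zeros, any failure of weak stability of $Q_j$ (individual irrationality, wastefulness, or a blocking pair in $N\times O$) lifts to a blocking pair of $R_j$ in $I'$, using that $\phi_i$ has $i$ as top priority and $d_l$ has $o_l$ as top choice. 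Your re-symmetrization remark is also correct --- $\sum_j\mu_j Q_j'$ does equal $p'$ --- but, as you note, it does not by itself avoid the blocking-pair check; so the blocking-pair analysis is the substantive step either way.
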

\begin{proof}[\textbf{Proof}]
Let $p$ be a generalized random matching and $p'$ its associated random matching.\medskip

\noindent\textbf{Part 1:} Let $p$ be an ex-post weakly stable generalized matching. Recall that the non-wastefulness of $p$ is equivalent to $p'$ respecting non-wastefulness. Furthermore, $p$ can be decomposed into generalized deterministic weakly stable matchings. By Proposition~\ref{Prop:weakStableiffStable}, each generalized deterministic weakly stable matching in the decomposition corresponds to an associated deterministic weakly stable matching. The induced decomposition consisting of the associated deterministic weakly stable matchings is a decomposition of the associated random matching $p'$. Hence, $p'$ is ex-post weakly stable.\medskip

\noindent \textbf{Part 2:} Recall that from any associated random matching $p'$ we can obtain the original generalized random matching $p$ by taking its first $n$ rows and its first $m$ columns ($|N|=n$ and $|O|=m$). Let the associated random matching $p'$ of $p$ be ex-post weakly stable and respect non-wastefulness. Then, $p'$ can be decomposed into deterministic weakly stable matchings. Note that by taking the first $n$ rows and the first $m$ columns of each of the deterministic weakly stable matchings in the decomposition, we can derive a decomposition of $p$ into generalized deterministic weakly stable matchings (Proposition~\ref{Prop:weakStableiffStable}). Furthermore, since $p'$ respects non-wastefulness, $p$ is non-wasteful. Hence, $p$ is ex-post weakly stable.
\end{proof}

\begin{proposition}\label{Prop:weakrexpostStableiffrexpostStable}
The generalized random matching $p$ is robust ex-post weakly stable if and only if the associated random matching $p'$ is robust ex-post weakly stable and respects non-wastefulness.
\end{proposition}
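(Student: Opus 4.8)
The plan is to set up a bijection between the decompositions of $p$ into generalized deterministic matchings and the decompositions of $p'$ into deterministic matchings, arrange that it pairs weakly stable summands with weakly stable summands, and then push the universal quantifier ``all decompositions'' through this bijection, exactly as in Proposition~\ref{Prop:weakgeneralexpost-iff-expost} but with ``every'' in place of ``some''. I would carry this out in three steps.

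First I would record the decomposition correspondence. In the easy direction, any decomposition $p=\sum_j\lambda_j q_j$ into generalized deterministic matchings induces the decomposition $p'=\sum_j\lambda_j q_j'$ of $p'$ into the associated deterministic matchings $q_j'$: every block of $p'$ is an affine function of the entries of $p$ (for instance $p'(i,\phi_i)=1-\sum_{o\in O}p(i,o)$ and $p'(d_j,\phi_i)=p(i,o_j)$), the same affine relations tie $q_j'$ to $q_j$, and $\sum_j\lambda_j=1$ together with $\sum_j\lambda_j q_j=p$ then gives $\sum_j\lambda_j q_j'=p'$. In the reverse direction I would argue, from the zero pattern of $p'$, that every deterministic matching $Q$ occurring in a decomposition of $p'$ is itself the associated matching of its restriction $q:=Q|_{N\times O}$. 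Here $Q$ must vanish wherever $p'$ does, which forces: each $i\in N$ is matched in $Q$ either to some $o\in O$ or to its own null object $\phi_i$; each dummy $d_j$ is matched to $o_j$ or to some $\phi_i$; each $o_j$ is matched to some $i\in N$ or to $d_j$; and, crucially, $p'(d_j,\phi_i)=p(i,o_j)>0$ is required for $Q(d_j,\phi_i)=1$, so $\phi_i$ can be matched to a dummy only through the unique $d_j$ with $Q(i,o_j)=1$. Tracking these constraints through the permutation $Q$ shows $Q=q'$; taking restrictions then converts a decomposition of $p'$ back into a decomposition of $p$, inverse to the lifting above.

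Second, I would combine this with two facts already available: (i) $p$ is non-wasteful if and only if $p'$ respects non-wastefulness (Definition~\ref{def:p'respectingNW}), and (ii) a generalized deterministic matching $q_j$ is weakly stable if and only if its associated deterministic matching $q_j'$ is weakly stable (Proposition~\ref{Prop:weakStableiffStable}). The proof then becomes a quantifier chase. For the forward implication: if $p$ is robust ex-post weakly stable then $p$ is non-wasteful, so $p'$ respects non-wastefulness; given any decomposition $p'=\sum_j\lambda_j Q_j$, restricting yields a decomposition $p=\sum_j\lambda_j(Q_j|_{N\times O})$ into generalized deterministic matchings, so by robustness of $p$ each summand $Q_j|_{N\times O}$ is weakly stable, and since $Q_j=(Q_j|_{N\times O})'$, fact (ii) makes every $Q_j$ weakly stable; hence $p'$ is robust ex-post weakly stable. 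For the converse: if $p'$ is robust ex-post weakly stable and respects non-wastefulness then $p$ is non-wasteful; given any decomposition $p=\sum_j\lambda_j q_j$, the induced decomposition $p'=\sum_j\lambda_j q_j'$ has weakly stable summands, so by fact (ii) every $q_j$ is weakly stable, whence $p$ is robust ex-post weakly stable.

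The main obstacle is the reverse direction of the decomposition correspondence, i.e.\ showing that an arbitrary permutation matrix in a decomposition of $p'$ is forced into the ``associated'' block form. Everything else is affine bookkeeping or a direct appeal to Proposition~\ref{Prop:weakStableiffStable}, but without the structural rigidity of the support of $p'$ one could not conclude that a weakly stable summand of $p'$ restricts to a weakly stable summand of $p$. I would therefore isolate that rigidity as a separate lemma — it is implicitly what is also needed in the proof of Proposition~\ref{Prop:weakgeneralexpost-iff-expost} — and prove it by the case analysis on where agents, dummies, objects, and null objects can be matched that is sketched above.
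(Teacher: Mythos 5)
Your overall architecture (lift/restrict decompositions, transfer weak stability summand-by-summand via Proposition~\ref{Prop:weakStableiffStable}, handle non-wastefulness through Definition~\ref{def:p'respectingNW}) is the same route the paper takes, and your converse direction (lifting a decomposition of $p$ to the associated matchings of its summands) is sound. The genuine gap is the ``rigidity lemma'' you isolate as the crux: it is false. A deterministic matching $Q$ occurring in a decomposition of $p'$ need \emph{not} be the associated matching of its restriction. Take $n=m=2$, everyone acceptable, and $p$ uniform, so $p'(d_j,\phi_i)=p(i,o_j)=\nicefrac{1}{2}$ for all $i,j$ while $p'(i,\phi_i)=p'(d_j,o_j)=0$. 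The permutation $Q$ with $Q(1,o_1)=Q(2,o_2)=Q(d_1,\phi_2)=Q(d_2,\phi_1)=1$ has support inside the support of $p'$ and appears in the decomposition $p'=\tfrac12 Q+\tfrac12 R$ (with $R(1,o_2)=R(2,o_1)=R(d_1,\phi_1)=R(d_2,\phi_2)=1$), yet $Q$ is not the associated matching of its restriction, which would require $Q(d_1,\phi_1)=Q(d_2,\phi_2)=1$. Your justification conflates $p(i,o_j)>0$ with $Q(i,o_j)=1$: the support condition $p'(d_j,\phi_i)=p(i,o_j)>0$ only ties the dummy--null block to the support of $p$, not to the agent--object block of the \emph{same} permutation $Q$. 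This breaks your forward quantifier chase at the step ``$Q_j=(Q_j|_{N\times O})'$, hence fact (ii) makes $Q_j$ weakly stable''; nor can you fall back on the weaker claim ``restriction weakly stable $\Rightarrow Q$ weakly stable'', since $Q$ above can be strictly blocked by $(d_1,\phi_1)$ (when $o_1\succ_1 o_2$ and $1\succ_{o_1}2$) even though its restriction $1\mapsto o_1$, $2\mapsto o_2$ is weakly stable.

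What is actually needed in the forward direction is a different observation: in any $Q$ from a decomposition of $p'$, the zero pattern of $p'$ forces the dummy--null block to encode a (possibly different) generalized deterministic matching $\mu$ with $\mu(i)=o_j$ iff $Q(d_j,\phi_i)=1$, whose support again lies in the support of $p$ and which matches every fully matched agent and fully allocated object, so $\mu$ itself occurs in some decomposition of $p$. Blocking pairs of $Q$ of the form $(i,o)$ correspond to weak-stability violations of the restriction $Q|_{N\times O}$, blocking pairs of the form $(d_j,\phi_i)$ correspond (via the mirrored preferences in (\ref{dummypref}) and (\ref{nullpri})) to weak-stability violations of $\mu$, and blocking pairs mixing the two sides are excluded by individual rationality and the construction of $\succsim'$; robustness of $p$ then covers both $Q|_{N\times O}$ and $\mu$, so $Q$ is weakly stable. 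Note that the paper's own Part~2 is terser still -- it restricts the summands of a decomposition of $p'$ and invokes Proposition~\ref{Prop:weakStableiffStable} without stating your rigidity claim -- so it leaves the same dummy--null-block issue implicit; but your proposal, by asserting and ``proving'' the false rigidity statement, contains a concrete incorrect step rather than merely an elision, and as written the forward implication does not go through.
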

\begin{proof}[\textbf{Proof}]Let $p$ be a generalized random matching and $p'$ its associated random matching. By Proposition~\ref{Prop:weakexpostStableiffexpostStable}, $p$ is ex-post weakly stable if and only if $p'$ is ex-post weakly stable and respects non-wastefulness.\medskip

\noindent\textbf{Part 1:} Let $p$ be an ex-post weakly stable generalized matching that is not robust ex-post weakly stable. Hence, $p$ has a decomposition into generalized deterministic matchings that is not weakly stable, i.e., at least one of the generalized deterministic matchings in the decomposition is not weakly stable. By Proposition~\ref{Prop:weakStableiffStable}, each generalized deterministic matching in the decomposition corresponds to an associated deterministic matching and the weakly unstable generalized deterministic matching leads to a weakly unstable associated deterministic matching. The induced decomposition consisting of the associated deterministic matchings is a decomposition of the associated random matching $p'$. Hence, $p'$ has a decomposition into deterministic matchings that are not all weakly stable and $p'$ is not robust ex-post weakly stable.\medskip

\noindent \textbf{Part 2:} Recall that from any associated random matching $p'$ we can obtain the original generalized random matching $p$ by taking its first $n$ rows and its first $m$ columns ($|N|=n$ and $|O|=m$). Let the associated random matching $p'$ of $p$ respect non-wastefulness and be ex-post weakly stable but not robust ex-post weakly stable.  Hence, $p'$ has a decomposition into deterministic matchings that is not weakly stable, i.e., at least one of the deterministic matchings in the decomposition is not weakly stable. Note that by taking the first $n$ rows and the first $m$ columns of each of the deterministic matchings in the decomposition, we can derive a decomposition of $p$ into generalized deterministic matchings and the weakly unstable associated deterministic matching leads to a weakly unstable generalized deterministic matching (Proposition~\ref{Prop:weakStableiffStable}). The induced decomposition consisting of the generalized deterministic matchings is a decomposition of the generalized random matching $p$. Hence, $p$ has a decomposition into generalized deterministic matchings that are not all weakly stable and $p$ is not robust ex-post weakly stable.
\end{proof}	

Next, we adjust the properties of fractional weak stability and claimwise weak stability to generalized random matchings.\medskip

Fractional weak stability is again obtained by relaxing the ``integer solution requirement'' for the inequalities that define weak stability for generalized deterministic matchings (\ref{weakgeneralLE4W}). Given a generalized random matching $p$ and an object $o$, recall that by $p(\emptyset,o)$ we denote the amount of object $o$ that is unassigned, i.e., $p(\emptyset,o)=1-\sum_{i\in N}p(i,o)$.

\begin{definition}[\textbf{Fractional weak stability and violations of fractional weak stability for generalized random matchings}]
\normalfont	A generalized random matching $p$ is \textit{\textbf{fractionally weakly stable}} if
$p$ is individually rational, non-wasteful, and for each acceptable pair $(i,o)\in N\times O$,
\begin{equation}\tag{\ref{weakgeneralLE4W}}
p(i,o)+\sum_{o':o'\succsim_i o;o'\neq o}p(i,o')+ \sum_{j:j\succsim_{o}i;j\neq i}p(j,o)\geq 1,
\end{equation}
or more compactly,
\begin{equation}\label{weakgeneralLE5W}
\sum_{o':o'\succsim_i o;o'\neq o}p(i,o')\geq \sum_{j:j\prec_o i}p(j,o)+p(\emptyset,o).
\end{equation}
A \textit{\textbf{violation of fractional weak stability}} occurs if there exists a pair $(i,o)\in N\times O$ such that
\begin{equation}\label{weakgeneralLE6}
\sum_{j:j\prec_o i}p(j,o)+p(\emptyset,o)>\sum_{o':o'\succsim_i o;o'\neq o}p(i,o').
\end{equation}
\label{def:weakgeneralfractionalstable}
\end{definition}
Inequality (\ref{weakgeneralLE6}) implies $\sum_{o':o'\succsim_i o}p(i,o')<1$, i.e., agent $i$ receives some fraction of an object in his strict lower contour set at $o$ or $i$ is not fully matched (if not, this would imply that $\sum_{o':o'\succsim_i o;o'\neq o}p(i,o')+p(i,o)=1$ and hence, $\sum_{j:j\prec_o i}p(j,o)+p(\emptyset,o)+p(i,o)>1$; a contradiction). Thus, agent $i$ would want to consume more of object $o$. Inequality (\ref{weakgeneralLE6}) also implies $\sum_{j:j\succsim_o i}p(j,o)<1$, i.e., object $o$ receives some fraction of an agent in its strict lower contour set at $i$ or $o$ is not fully allocated. Thus, object $o$ would want to consume more of agent $i$. Moreover, strict inequality (\ref{weakgeneralLE6}) encodes the following envy notion: using consumption process language, as long as agent $i$ consumes objects that are different and not worse than $o$ he does not envy the set of lower priority agents to jointly consume fractions of $o$ and he does not mind fractions of $o$ to be unassigned, however, once the unassigned amounts of $o$ plus the amounts lower priority agents have consumed reach agent $i$'s weak upper contour set at $o$ (not including $o$), agent $i$ starts either having envy or complaining about wastefulness (unless agent $i$ can fill his remaining probability quota with object $o$).

\begin{remark}[\textbf{A symmetric reformulation of fractional weak stability for generalized random matchings and its violations}]\label{remark:symmetricFractionalWeakgeneral}
\normalfont In the definition of fractional weak stability for generalized random matchings by inequalities (\ref{weakgeneralLE5W}) and of a violation of fractional weak stability for generalized random matchings by inequality (\ref{weakgeneralLE6}) we have taken the viewpoint of an agent who considers the consumptions of lower priority agents for an object and the amount of the object that is unassigned. The symmetric formulations when taking the viewpoint of an object that ``considers'' the matches of lower preferred objects to an agent and the amount of an agent he is not matched at are as follows. Given a generalized random matching $p$ and an agent $i$, recall that by $p(i,\emptyset)$ we denote the amount of agent $i$ that is not matched, i.e., $p(i,\emptyset)=1-\sum_{o\in O}p(i,o)$. Then, a generalized random matching $p$ is fractionally weakly stable if for each acceptable pair $(i,o)\in N\times O$,
\begin{equation}\tag{\ref{weakgeneralLE5W}'}
\sum_{j:j\succsim_o i;j\neq i}p(j,o)\geq \sum_{o':o'\prec_i o}p(i,o')+p(i,\emptyset).
\end{equation}
We can write a violation of fractional weak stability as, there exists an acceptable pair $(i,o)\in N\times O$ such that
\begin{equation}\tag{\ref{weakgeneralLE6}'}
\sum_{o':o'\prec_i o}p(i,o')+p(i,\emptyset)>\sum_{j:j\succsim_o i;j\neq i}p(j,o).
\end{equation}\hfill~$\diamond$
\end{remark}

Interestingly, if preferences and priorities are strict, then inequalities~(\ref{weakgeneralLE4W}) imply non-wastefulness.

\begin{proposition}\label{prop:ruralhospital-nonwastefulness}
Let $p$ be an individually rational generalized random matching such that for each acceptable pair $(i,o)\in N\times O$,
\[p(i,o)+\sum_{o':o'\succsim_i o;o'\neq o}p(i,o')+ \sum_{j:j\succsim_{o}i;j\neq i}p(j,o)\geq 1.\]
If preferences and priorities are strict, then $p$ satisfies non-wastefulness.
\end{proposition}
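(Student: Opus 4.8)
The plan is to prove the contrapositive: assuming $p$ is wasteful, I will exhibit an acceptable pair $(i,o)$ for which inequality~(\ref{weakgeneralLE4W}) fails. So suppose there is an acceptable pair $(i,o)\in N\times O$ with $\sum_{o':o'\succsim_i o}p(i,o')<1$ (agent $i$ would like more of $o$) and $\sum_{j\in N}p(j,o)<1$ (object $o$ is not fully allocated), i.e. $p(\emptyset,o)>0$. Using the compact form~(\ref{weakgeneralLE5W}) of the hypothesis, which for the pair $(i,o)$ reads $\sum_{o':o'\succsim_i o;o'\neq o}p(i,o')\geq \sum_{j:j\prec_o i}p(j,o)+p(\emptyset,o)$, the goal is to derive a contradiction with $p(\emptyset,o)>0$ together with $\sum_{o':o'\succsim_i o}p(i,o')<1$.

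First I would observe that, because priorities are strict, every agent $j\neq i$ satisfies either $j\succ_o i$ or $i\succ_o j$, so $\sum_{j\in N}p(j,o)=p(i,o)+\sum_{j:j\succ_o i}p(j,o)+\sum_{j:j\prec_o i}p(j,o)$, and likewise, because preferences are strict, $\sum_{o'\in O}p(i,o')=p(i,o)+\sum_{o':o'\succ_i o}p(i,o')+\sum_{o':o'\prec_i o}p(i,o')=\sum_{o'\in O}p(i,o')\leq 1$. The key step is to apply the inequality~(\ref{weakgeneralLE4W}) not to the pair $(i,o)$ itself but to whichever pair lets the unassigned mass $p(\emptyset,o)>0$ be absorbed. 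The cleanest route: rearrange the hypothesis inequality for $(i,o)$ into $\sum_{o':o'\succsim_i o}p(i,o')\geq p(i,o)+\sum_{j:j\succsim_o i}p(j,o)$ is not quite it; instead combine $\sum_{o':o'\succsim_i o;o'\neq o}p(i,o')\geq \sum_{j:j\prec_o i}p(j,o)+p(\emptyset,o)$ with $\sum_{j:j\succsim_o i}p(j,o)=p(i,o)+\sum_{j:j\succ_o i}p(j,o)$ and $\sum_{j\in N}p(j,o)=1-p(\emptyset,o)$. From these, $\sum_{o':o'\succsim_i o}p(i,o') = p(i,o)+\sum_{o':o'\succsim_i o;o'\neq o}p(i,o')\geq p(i,o)+\sum_{j:j\prec_o i}p(j,o)+p(\emptyset,o)$. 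Now since $\sum_{j:j\prec_o i}p(j,o)=1-p(\emptyset,o)-p(i,o)-\sum_{j:j\succ_o i}p(j,o)$, substituting gives $\sum_{o':o'\succsim_i o}p(i,o')\geq 1-\sum_{j:j\succ_o i}p(j,o)$. This is not yet a contradiction on its own, so the argument must also use that $(i,o)$ was chosen as a \emph{witness} of wastefulness, and in particular one should pick the witness carefully — e.g. choose $o$ to be an object that $i$ most prefers among the not-fully-allocated acceptable objects, or run a counting/averaging argument over all agents and objects (summing the row sums and column sums, both of which are $\leq 1$ for substochastic matrices and the hypothesis forces enough of them to equal $1$).

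The cleaner counting approach I would actually pursue: sum inequality~(\ref{weakgeneralLE4W}) in an appropriate way, or more directly, mimic the classical "rural hospitals"-style argument — show that strictness of preferences and priorities forces every acceptable object's column to be ``saturated'' in the sense that $\sum_{i\in N}p(i,o)=1$ whenever some agent who finds $o$ acceptable is not fully matched to weakly-preferred objects. Concretely, fix an acceptable pair $(i,o)$ with $\sum_{o':o'\succsim_i o}p(i,o')<1$. Then, since $\sum_{o'\in O}p(i,o')\le 1$ and preferences are strict, $i$ puts some positive mass on objects strictly worse than $o$, so $i$ would gladly take more of $o$; we must show $o$ is fully allocated. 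Apply~(\ref{weakgeneralLE4W}) to $(i,o)$: the left side equals $\sum_{o':o'\succsim_i o}p(i,o')+\sum_{j:j\succ_o i}p(j,o)$. Since $\sum_{o':o'\succsim_i o}p(i,o')<1$ we get $\sum_{j:j\succ_o i}p(j,o)>1-\sum_{o':o'\succsim_i o}p(i,o')-\text{(something)}$... the bookkeeping needs the exact substitution but the upshot is $p(i,o)+\sum_{j:j\succsim_o i;j\neq i}p(j,o)\ge 1-\sum_{o':o'\succsim_i o;o'\neq o}p(i,o')>0$, and since these are all among the masses summing (with $\sum_{j:j\prec_o i}p(j,o)$ and $p(\emptyset,o)$) to $1$, combined with the hypothesis inequality~(\ref{weakgeneralLE5W}) rearranged as $p(\emptyset,o)\le \sum_{o':o'\succsim_i o;o'\neq o}p(i,o')-\sum_{j:j\prec_o i}p(j,o)$, I would chase the two displayed bounds to force $p(\emptyset,o)=0$, contradicting wastefulness. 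The main obstacle I anticipate is purely the algebraic bookkeeping — getting the contribution of $p(i,o)$ on both sides and the split of the $o$-column into $\{j:j\succ_o i\}$, $\{i\}$, $\{j:j\prec_o i\}$, and $\emptyset$ to cancel correctly so that the strict inequality $\sum_{o':o'\succsim_i o}p(i,o')<1$ propagates into $p(\emptyset,o)>0$ contradicting~(\ref{weakgeneralLE5W}); strictness of preferences is used to guarantee $i$ has mass to spare below $o$, and strictness of priorities is used to guarantee the $o$-column splits cleanly with no ties on the priority class of $i$.
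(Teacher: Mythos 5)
Your proposal does not close, and the place where it stalls is exactly where the real difficulty of Proposition~\ref{prop:ruralhospital-nonwastefulness} lies. All the algebra you outline uses only inequality~(\ref{weakgeneralLE4W}) at the single wasteful witness pair $(i,o)$ (possibly chosen as $i$'s most preferred underallocated object), and that inequality simply cannot produce the contradiction: together with $\sum_{o':o'\succsim_i o}p(i,o')<1$ it yields only $\sum_{j:j\succ_o i}p(j,o)\geq 1-\sum_{o':o'\succsim_i o}p(i,o')>0$, which is perfectly compatible with $p(\emptyset,o)>0$. For instance, if $\sum_{o':o'\succsim_i o}p(i,o')=0.9$ while higher-priority agents absorb $0.95$ of $o$ and $0.05$ of $o$ is unassigned, the pair $(i,o)$ is a wastefulness witness yet its inequality holds with slack; the violation that wastefulness must create (the proposition is true) sits at \emph{other} pairs, reachable only by chaining through the inequalities of the higher-priority agents holding $o$, their better objects, and so on. Your own text concedes this (``This is not yet a contradiction on its own\dots'', ``I would chase the two displayed bounds to force $p(\emptyset,o)=0$''), and neither the refined choice of witness nor the vaguely invoked ``counting/averaging argument'' is actually carried out. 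So the central step of the proof is missing, and the purely local route you commit to cannot supply it.

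What is actually needed is global, structural input, and that is how the paper argues: by \citet{RothetalMOR93}, with strict preferences and priorities an individually rational solution of inequalities~(\ref{weakgeneralLE4W}) can be decomposed into stable (hence non-wasteful) generalized deterministic matchings $q^1,\ldots,q^m$; if $p$ were wasteful at an acceptable pair $(i,o)$, then $\sum_{j\in N}p(j,o)<1$ forces $o$ to be unmatched in some component, hence --- by the rural hospital theorem \citep{Roth86a} --- in \emph{every} stable component, while $\sum_{o':o'\succsim_i o}p(i,o')<1$ forces some component $q^k$ in which $i$ receives nothing weakly better than $o$; that $q^k$ is then wasteful, contradicting its stability. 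Your sketch gestures at ``mimicking the rural-hospitals-style argument'' but never introduces a decomposition, never uses the rural hospital theorem, and never builds the chain argument that would substitute for them; strictness of preferences and priorities enters the paper's proof through these two cited results (the decomposition fails under ties, cf.\ Proposition~\ref{prop:weakfractional-notto-weakexpost} and Example~\ref{example2:wastefulIR+inequalities}), not through the tie-free bookkeeping of a single row and column as in your outline.
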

\begin{proof}[\textbf{Proof}]
\citet{RothetalMOR93} show that in the general model with strict preferences and priorities, any individually rational generalized random matching satisfying inequalities~(\ref{weakgeneralLE4W}) can be decomposed into non-wasteful and stable generalized deterministic matchings. On top of that,  the rural hospital theorem~\citep{Roth86a} implies that the set of matched agents and objects is always the same in all stable generalized deterministic matchings.  Now suppose, by contradiction, that a convex combination of non-wasteful and stable generalized deterministic matchings $q^1,\ldots, q^m$ leads to a wasteful generalized random matching $p$. By definition of wastefulness, there is an acceptable pair $(i,o)\in N\times O$ such that $\sum_{o':o'\succsim_i o}p(i,o')<1$ ($i$ would like to have more of $o$) and $\sum_{j\in N}p(j,o)<1$ ($o$ is not fully allocated).
Then, the object $o$ that is wasted at generalized random matching $p$ is not assigned to any agent in at least one of the generalized deterministic stable matchings $q^j$ in the convex combination. Thus, by the rural hospital theorem, $o$ is not assigned to any agent in any stable generalized deterministic matching in $\{q^1,\ldots, q^m\}$. Since $\sum_{o':o'\succsim_i o}p(i,o')<1$, it follows that in at least one of the stable generalized deterministic matchings $q^k$, $\sum_{o':o'\succsim_i o}q^k(i,o')=0$ and $\sum_{j\in N}q^k(j,o)=0$. Hence, $q^k$ is wasteful; a contradiction.
\end{proof}
		
A statement along the lines of Proposition~\ref{prop:ruralhospital-nonwastefulness} is not true anymore when preferences and priorities can be weak, as the following example demonstrates.

\begin{example}[\textbf{A wasteful and individually rational generalized random matching that satisfies inequalities (\ref{weakgeneralLE4W})}]\label{example2:wastefulIR+inequalities}
\normalfont Consider the following general instance $I=(N,O,\succsim)$ with weak preferences and weak priorities: $N=\{1,2,3\}$, $O=\{x,y,z\}$,   	
\begin{center}
\begin{tabular}{lccc}
$\succsim_1$:&$[x\ y]$&$\emptyset$& $z$\\
$\succsim_2$:&$[y\ z]$&$\emptyset$&$x$\\
$\succsim_3$:&$[x\ z]$&$\emptyset$&$y$
\end{tabular}
\quad\quad
\begin{tabular}{lccc}
$\succsim_x$:&$[1\ 3]$&$\emptyset$&$2$\\
$\succsim_y$:&$[1\ 2]$&$\emptyset$&$3$\\
$\succsim_z$:&$[2\ 3]$&$\emptyset$&$1$.
\end{tabular}	
\end{center}
Then, the generalized random matching
\begin{center}
\begin{tabular}{c}
$p=\begin{pmatrix}
\nicefrac{1}{3}&\nicefrac{1}{3}&0\\
0&\nicefrac{1}{3}&\nicefrac{1}{3}\\
\nicefrac{1}{3}&0&\nicefrac{1}{3}
\end{pmatrix}$
\end{tabular}										
\end{center}
is wasteful, individually rational, and satisfies inequalities (\ref{weakgeneralLE4W}).\hfill~$\diamond$
\end{example}

We have the following equivalence for fractional weak stability for generalized random matchings and their associated random matchings.

\begin{proposition}\label{Prop:weakfractionalstableimpliesfractionalstable} \label{Prop:fractionalstableimpliesfractionalweakstable}
The generalized random matching $p$ is fractionally weakly stable if and only if the associated random matching $p'$ is fractionally stable and respects non-wastefulness.
\end{proposition}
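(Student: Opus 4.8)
The plan is to compare the two defining systems of linear inequalities directly. First I would unfold the definitions. By Definition~\ref{def:weakgeneralfractionalstable}, $p$ is fractionally weakly stable exactly when $p$ is individually rational, non-wasteful, and inequality~(\ref{weakgeneralLE4W}) holds for every acceptable pair $(i,o)\in N\times O$. On the other side, the associated instance $I'$ belongs to the weak setting of Section~\ref{subsection:weakprefpri} (all entities acceptable, $|N'|=|O'|$), so $p'$ is fractionally (weakly) stable exactly when inequality~(\ref{LE4W}) holds at \emph{every} pair in $N'\times O'$; and by Definition~\ref{def:p'respectingNW}, ``$p'$ respects non-wastefulness'' means nothing more than ``$p$ is non-wasteful''. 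Thus non-wastefulness appears verbatim on both sides and can be set aside, and everything reduces to the equivalence
\[
\big[\,p\text{ individually rational and~(\ref{weakgeneralLE4W}) holds for all acceptable }(i,o)\in N\times O\,\big]\ \Longleftrightarrow\ \big[\,(\ref{LE4W})\text{ holds for }p'\text{ at every pair in }N'\times O'\,\big].
\]

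The core is a bookkeeping argument --- in the spirit of, but lighter than, the table-driven proof of Proposition~\ref{Prop:weakNJE-IR-NWiffStable} --- that sorts $N'\times O'=(N\cup D)\times(O\cup\Phi)$ into four families and rewrites~(\ref{LE4W}) for $p'$ in terms of $p$ on each, using (i) the block structure of $p'$ (its $N\times O$ block is $p$, its $D\times\Phi$ block has $(d_j,\phi_i)$-entry $p(i,o_j)$, and the two off-diagonal blocks carry the slacks $p(i,\emptyset)$ on the $N\times\Phi$ diagonal and $p(\emptyset,o_j)$ on the $D\times O$ diagonal, zeros elsewhere) and (ii) the exact placement of each index in the extended weak orders~(\ref{agentspref})--(\ref{nullpri}). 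Three observations then do the work. First, for an \emph{acceptable} pair $(i,o_j)\in N\times O$, inequality~(\ref{LE4W}) for $p'$ is, term for term, inequality~(\ref{weakgeneralLE4W}) for $p$: the null objects and the objects unacceptable to $i$ all sit strictly below $o_j$ in $\succsim'_i$, and the dummy agents strictly below $i$ in $\succsim'_{o_j}$, so they never enter. Second, inequality~(\ref{LE4W}) for $p'$ at the pair $(i,\phi_i)$ collapses --- its priority sum being empty, since $i$ is the unique top of $\succsim'_{\phi_i}$ --- to $p(i,\emptyset)+\sum_{o:\,o\succ_i\emptyset}p(i,o)\ge 1$, i.e.\ to $\sum_{o:\,\emptyset\succ_i o}p(i,o)\le 0$, which says precisely that $i$ is matched to no unacceptable object; symmetrically the pair $(d_j,o_j)$ yields that $o_j$ is matched to no unacceptable agent. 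Together these two subfamilies are exactly individual rationality of $p$. This already gives the ``$\Leftarrow$'' direction: from the $(i,o_j)$-subfamily we read off~(\ref{weakgeneralLE4W}), from the $(i,\phi_i)$- and $(d_j,o_j)$-subfamilies individual rationality, and non-wastefulness is handed to us.

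For ``$\Rightarrow$'' I would then verify, assuming $p$ individually rational, non-wasteful and satisfying~(\ref{weakgeneralLE4W}), that~(\ref{LE4W}) for $p'$ holds at the remaining pairs: the \emph{unacceptable} pairs $(i,o_j)\in N\times O$, the pairs $(i,\phi_k)$ with $k\neq i$, the pairs $(d_j,o_l)$ with $l\neq j$, and the pairs $(d_j,\phi_i)$. For an unacceptable pair $(i,o_j)$, individual rationality forces $p'(i,o_j)=0$, and the left side of~(\ref{LE4W}) is still $\ge 1$: when $\emptyset\succ_i o_j$, the weak upper contour set of $o_j$ in $\succsim'_i$ already contains $\phi_i$ together with all acceptable objects, whose $p'$-mass sums to $p(i,\emptyset)+\sum_{o:\,o\succ_i\emptyset}p(i,o)=1$ by individual rationality, and symmetrically when $\emptyset\succ_{o_j}i$; the same ``slack $+$ contour set'' count disposes of $(i,\phi_k)$ with $k\neq i$ (the $\phi_k$-priority term equals $1-\sum_o p(k,o)$ and the contour term equals $1$) and of $(d_j,o_l)$ with $l\neq j$. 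The step I expect to be the main obstacle is the pair $(d_j,\phi_i)$: there the mirror orderings (the restriction of $\succsim'_{d_j}$ to $\Phi$ copies $\succsim_{o_j}$ on $N$, and of $\succsim'_{\phi_i}$ to $D$ copies $\succsim_i$ on $O$) unwind~(\ref{LE4W}) for $p'$ into
\[
p(i,o_j)+\sum_{o':\,o'\succsim_i o_j,\,o'\neq o_j}p(i,o')+\sum_{k:\,k\succsim_{o_j}i,\,k\neq i}p(k,o_j)\ \ge\ 1-p(\emptyset,o_j)-p(i,\emptyset),
\]
a copy of~(\ref{weakgeneralLE4W}) for $(i,o_j)$ weakened by the two slacks --- immediate from~(\ref{weakgeneralLE4W}) when $(i,o_j)$ is acceptable, and, when $\emptyset\succ_i o_j$, a consequence of $\sum_{o':\,o'\succsim_i o_j,\,o'\neq o_j}p(i,o')+p(i,\emptyset)\ge 1$ (again individual rationality), symmetrically when $\emptyset\succ_{o_j}i$. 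The one genuine subtlety throughout is tracking where each index lands in the weak orders when the original preferences or priorities have ties, in particular keeping the ``$o'\neq o$'' / ``$j\neq i$'' conventions of~(\ref{weakgeneralLE4W}) and~(\ref{LE4W}) consistent; since the fractional weak stability inequality is itself agent/object-symmetric (Remark~\ref{remark:symmetricFractionalWeak}) and the transformation respects that symmetry, the last family of checks can be cut in half.
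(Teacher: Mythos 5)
Your sketch is correct and follows essentially the same route as the paper's proof: it translates the fractional-stability inequalities through the block structure of $p'$ and the extended orders (\ref{agentspref})--(\ref{nullpri}) with a case analysis over the four families of pairs in $(N\cup D)\times(O\cup\Phi)$ --- your $(d_j,\phi_i)$ computation is exactly the paper's Case~2, and your treatment of the pairs $(i,o_j)$, $(i,\phi_i)$, $(d_j,o_j)$ matches its Part~2, including reading individual rationality of $p$ off the latter two subfamilies. The difference is purely organizational (direct term-by-term verification rather than argument by contradiction), so no gap.
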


\begin{proof}[\textbf{Proof}]
Let $p$ be a generalized random matching and $p'$ its associated random matching.\medskip

\noindent\textbf{Part 1:} Let $p$ be a fractionally weakly stable generalized random matching. Thus, $p$ is non-wasteful, individually rational, and satisfies inequalities (\ref{weakgeneralLE4W}). Then, $p'$ respects non-wastefulness and individual rationality. Suppose, by contradiction, that $p'$ is not fractionally weakly stable. Then, for some pair $(a,b)\in N'\times O'$,
\begin{equation*}
\sum_{a':a'\prec'_{b} a}p'(a',b)>\sum_{b':b'\succsim'_a b;b'\neq b}p'(a,b').
\end{equation*}	
In particular,
\begin{equation*}
\sum_{a':a'\prec'_{b} a}p'(a',b)>0.
\end{equation*}	
Furthermore, recall that $\sum_{b':b'\succsim'_{a} b}p'(a,b')<1$ and hence,
\begin{equation*}
\sum_{b':b'\prec'_{a} b}p'(a,b')>0.
\end{equation*}	

\noindent \emph{Case~1.} Suppose that $b=o_j\in O$. Recall that
\begin{equation*}
\succsim'_{o_j}=\succsim_{o_j}\!\!\left(\{k\in N\midd k\succ_{o_j}\emptyset\}\right),\ d_j,\ lex\left(D\setminus \{d_j\}\right),\ \succsim_{o_j}\!\!\left(\{k\in N\midd \emptyset\succ_{o_j}k\}\right).
\end{equation*}
By the definition of $\succsim'$ and $p'$ and individual rationality (of $p$), for all $d_k\in D\setminus\{d_j\}$, $p'(d_k,o_j)=0$ and for all $l\in N$ such that $l\prec_{o_j}\emptyset$, $p'(l,o_j)=0$. Thus, if $a\precsim'_{o_j}d_j$, then  $\sum_{a':a'\prec'_{o_j} a}p'(a',o_j)=0$; a contradiction. Hence, $a\succ'_{o_j}d_j$ and $a=i\in N$ is an acceptable agent. By a symmetric argument, starting with $a=i\in N$ and
\begin{equation*}
\succsim'_{i}=\succsim_{i}\!\!\left(\{o\in O\midd o\succ_{i}\emptyset\}\right),\ \phi_i,\ lex\left(\Phi\setminus \{\phi_i\}\right),\ \succsim_{i}\!\!\left(\{o\in O\midd \emptyset\succ_{i}o\}\right),
\end{equation*}
we obtain $b\succ'_i\phi_i$ and that $b=o_j\in O$ is an acceptable object.

Then, by the definition of $\succsim'$ and $p'$ (recall that $p'(d_j,o_j)=p(\emptyset,o_j)$),
\begin{equation*}
i=a\succ'_{o_j}d_j\mbox{ implies }\sum_{a':a'\prec'_{o_j} i}p'(a',o_j)= \sum_{k:k\prec_{o_j} i}p(k,o_j)+p(\emptyset,o_j)
\end{equation*}	
and
\begin{equation*}
o_j=b\succ'_i\phi_i\mbox{ implies }\sum_{b':b'\succsim'_{i} o_j; b'\neq o_j}p'(i,b')= \sum_{o':o'\succsim_{i} o_j; o'\neq o_j}p(i,o').
\end{equation*}	
Hence, inequality $\sum_{a':a'\prec'_{b} a}p'(a',b)>\sum_{b':b'\succsim'_a b;b'\neq b}p'(a,b')$ for $a=i\in N$ and $b=o_j\in O$ can be rewritten as
\begin{equation*}
\sum_{k:k\prec_{o_j} i}p(k,o_j) +p(\emptyset,o_j)
> \sum_{o':o'\succsim_{i} o_j; o'\neq o_j}p(i,o'),\end{equation*}	
which contradicts that $p$ was fractionally weakly stable.

Since in Case~1 we have shown that $b\in O$ implies $a\in N$ and vice versa, the only remaining case to discuss is $(a,b)\in D\times\Phi$.\medskip

\noindent \emph{Case~2.} Suppose that $a=d_j\in D$ and $b=\phi_i\in \Phi$. Recall that
\begin{equation*}
\succsim'_{d_j}= o_j,\ lex\left(O\setminus \{o_j\}\right),\ \succsim'_{d_j}\!(\Phi).
\end{equation*}
By the definition of $\succsim'_{d_j}$ and $p'$, for all $l,m\in N$, $\phi_l\succsim'_{d_j}\phi_m$ if and only if $l\succsim_{o_j}m$ and $p'(d_j,o_j)=p(\emptyset,o_j)$. Then, we have
\begin{equation*}
\sum_{b':b'\succsim'_{d_j}\phi_i; b'\neq\phi_i}p'(d_j,b') =\sum_{k:\phi_k\succsim'_{d_j}\phi_i; k\neq i}p'(d_j,\phi_k) +p'(d_j,o_j) = \sum_{k:k\succsim_{o_j}i; k\neq i}p(k,o_j)+ p(\emptyset,o_j).
\end{equation*}
Next, recall that
\begin{equation*}
\succsim'_{\phi_i}= i,\ lex\left(N\setminus \{i\}\right),\ \succsim'_{\phi_i}\!(D).
\end{equation*}
By the definition of $\succsim'_{\phi_i}$, for all $x,y\in O$, $d_x\succsim'_{\phi_i}d_y$ if and only if $x\succsim_{i}y$.
Then, by the definition of $\succsim'_{\phi_i}$ and $p'$, we have
\begin{equation*}
\sum_{a':a'\prec'_{\phi_i} d_j}p'(a' ,\phi_i)=\sum_{d_{l}:d_l\prec'_{\phi_i} d_j}p'(d_l,\phi_i)=\sum_{o_{l}:o_l\prec_{i} o_j}p(i,o_l).
\end{equation*}
Hence, inequality $\sum_{a':a'\prec'_{b} a}p'(a',b)>\sum_{b':b'\succsim'_a b;b'\neq b}p'(a,b')$ for $a=d_j\in D$ and $b=\phi_i\in \Phi$ can be rewritten as
\begin{equation*}
\sum_{o_{l}:o_l\prec_{i} o_j}p(i,o_l)
> \sum_{k:k\succsim_{o_j}i; k\neq i}p(k,o_j)+ p(\emptyset,o_j).
\end{equation*}	
This implies   $\sum_{o_{l}:o_l\prec_{i} o_j}p(i,o_l)
> 0$ and individual rationality implies that agent $i$ finds object $o_j$ acceptable. Similarly it follows that $\sum_{k:k\succsim_{o_j}i; k\neq i}p(k,o_j)+ p(\emptyset,o_j)<1$, therefore $\sum_{k:k\precsim_{o_j} i}p(k,o_j)> 0$, and by individual rationality, object $o_j$ finds agent $i$ acceptable. Hence, $(i,o_j)\in N\times O$ is an acceptable pair.
Furthermore, recall that $\sum_{b':b'\succsim'_{a} b}p'(a,b')<1$ and hence,  $\sum_{o_{l}:o_l\succsim_{i} o_j}p(i,o_l)
< 1$. Thus, by non-wastefulness, $p(\emptyset,o_j)=0$. Therefore, for the acceptable pair $(i,o_j)\in N\times O$,
\begin{equation*}
\sum_{o_{l}:o_l\prec_{i} o_j}p(i,o_l)
> \sum_{k:k\succsim_{o_j}i; k\neq i}p(k,o_j)
\end{equation*}
and therefore also
\begin{equation*}
\sum_{o_{l}:o_l\prec_{i} o_j}p(i,o_l)+p(i,\emptyset)
> \sum_{k:k\succsim_{o_j}i; k\neq i}p(k,o_j);
\end{equation*}
contradicting that $p$ was fractionally weakly stable.\medskip

\noindent \textbf{Part 2:} Let $p'$ be a fractionally stable random matching that respects non-wastefulness. Thus, $p$ is non-wasteful. We first show that $p$ is individually rational. Consider an unacceptable pair $(i,o_j)\in N\times O$. Assume that object $o_j$ finds agent $i$ unacceptable, i.e., $\emptyset\succ_{o_j} i$. Now consider the pair $(d_j,o_j)\in N'\times O'$. Fractional stability of $p'$ requires
\begin{equation*}
  \sum_{b':b'\succsim'_{d_j}o_j;b'\neq o_j}p'(d_j,b')
\geq\sum_{a': a'\prec'_{o_j} d_j}p'(a',o_j).
\end{equation*}
Since object $o_j$ is the best object for $d_j$ at $\succsim'_{d_j}$, it follows that $\sum_{b':b'\succsim'_{i}d_j;b'\neq d_j}p'(d_j,b')
=0$. Hence, $\sum_{a':a'\prec'_{o_j} d_j}p'(a',o_j)=0$ and for each $a'\prec'_{o_j} d_j$, $p'(a',o_j)=0$. Next, $a'\prec'_{o_j} d_j$ if and only if $a'\in D\setminus\{d_j\}$ or [$a'\in N$ and $\emptyset\succ_{o_j} a'$]. Thus, by the definition of $p'$, for each $a'\in N$ such that $\emptyset\succ_{o_j} a'$, $p(a',o_j)=p'(a',o_j)=0$.
Symmetrically, starting from agent $i$ finding agent $o_j$ unacceptable, i.e., $\emptyset\succ_{o_j} i$, we obtain that for each $b'\in O$ such that $\emptyset\succ_i b'$, $p(i,b')=p'(i,b')=0$. Hence, the generalized random matching $p$ is individually rational.\medskip

Next suppose, by contradiction, that $p$ violates one of the inequalities (\ref{weakgeneralLE4W}). Then, for some acceptable pair $(i,o_j)\in N\times O$,
\begin{equation*}
\sum_{k:k\prec_{o_j} i}p(k,o_j)+p(\emptyset,o_j)>\sum_{o':o'\succsim_i o_j;o'\neq o_j}p(i,o').
\end{equation*}	
Recall that
\begin{equation*}
\succsim'_{o_j}=\succsim_{o_j}\!\!\left(\{k\in N\midd k\succ_{o_j}\emptyset\}\right),\ d_j,\ lex\left(D\setminus \{d_j\}\right),\ \succsim_{o_j}\!\!\left(\{k\in N\midd \emptyset\succ_{o_j}k\}\right)
\end{equation*}
and
\begin{equation*}
\succsim'_{i}=\succsim_{i}\!\!\left(\{o\in O\midd o\succ_{i}\emptyset\}\right),\ \phi_i,\ lex\left(\Phi\setminus \{\phi_i\}\right),\ \succsim_{i}\!\!\left(\{o\in O\midd \emptyset\succ_{i}o\}\right).
\end{equation*}
Then, by the definition of $\succsim'$ and $p'$ (recall that $p(\emptyset,o_j)=p'(d_j,o_j)$),
\begin{equation*}
i\succ_{o_j}\emptyset\mbox{ implies }\sum_{k:k\prec_{o_j} i}p(k,o_j)+p(\emptyset,o_j)=\sum_{a':a'\prec'_{o_j} i}p'(a',o_j)
\end{equation*}	
and
\begin{equation*}
o_j\succ_{i}\emptyset\mbox{ implies }\sum_{o':o'\succsim_i o_j;o'\neq o_j}p(i,o')=\sum_{b':b'\succsim'_{i} o_j; b'\neq o_j}p'(b',i).
\end{equation*}
Hence, inequality $\sum_{k:k\prec_{o_j} i}p(k,o_j)+p(\emptyset,o_j)>\sum_{o':o'\succsim_i o_j;o'\neq o_j}p(i,o')$  can be rewritten as
\begin{equation*}
\sum_{a':a'\prec'_{o_j} i}p'(a',o_j)
> \sum_{b':b'\succsim'_{i} o_j; b'\neq o_j}p'(b',i),
\end{equation*}	
which contradicts that $p'$ was fractionally stable.
\end{proof}

The following example demonstrates why we had to impose that $p'$ respects non-wastefulness in Proposition~\ref{Prop:fractionalstableimpliesfractionalweakstable}.

\begin{example}[\textbf{A wasteful and fractionally weakly stable associated random matching $\bm{p'}$}]\label{examle:wastefulfractionallystablep'}
\normalfont We consider the general instance $I=(N,O,\succsim)$ with weak preferences and weak priorities  (the brackets indicate indifferences) that we already have discussed in Example~\ref{example2:wastefulIR+inequalities}: $N=\{1,2,3\}$, $O=\{x,y,z\}$,
\begin{center}
\begin{tabular}{lccc}
$\succsim_1$:&$[x\ y]$&$\emptyset$& $z$\\
$\succsim_2$:&$[y\ z]$&$\emptyset$&$x$\\
$\succsim_3$:&$[x\ z]$&$\emptyset$&$y$
\end{tabular}
\quad\quad
\begin{tabular}{lccc}
$\succsim_x$:&$[1\ 3]$&$2$&$\emptyset$\\
$\succsim_y$:&$[1\ 2]$&$3$&$\emptyset$\\
$\succsim_z$:&$[2\ 3]$&$1$&$\emptyset$.
\end{tabular}	
\end{center}
Then, the generalized random matching
\begin{center}
\begin{tabular}{c}
$p=\begin{pmatrix}
\nicefrac{1}{3}&\nicefrac{1}{3}&0\\
0&\nicefrac{1}{3}&\nicefrac{1}{3}\\
\nicefrac{1}{3}&0&\nicefrac{1}{3}
\end{pmatrix}$
\end{tabular}										
\end{center}
is wasteful, individually rational, and satisfies inequalities (\ref{weakgeneralLE4W}) in the definition of fractional weak stability.

The associated instance $(N',O',\succsim')$ is such that $N'=\{1,2,3,d_x,d_y,d_z\}$, $O'=\{x,y,z,\phi_1,\phi_2,\phi_3\}$ with preferences and priorities (the brackets indicate indifferences): 	
\begin{center}
\begin{tabular}{lccccc}
$\succsim'_1$:&$[x\ y]$&$\phi_1$&$\phi_2$&$\phi_3$& $z$\\
$\succsim'_2$:&$[y\ z]$&$\phi_2$&$\phi_1$&$\phi_3$&$x$\\
$\succsim'_3$:&$[x\ z]$&$\phi_3$&$\phi_1$&$\phi_2$&$y$
\end{tabular}
\quad\quad
\begin{tabular}{lccccc}
$\succsim'_x$:&$[1\ 3]$&$2$&$d_x$&$d_y$&$d_z$\\
$\succsim'_y$:&$[1\ 2]$&$3$&$d_y$&$d_x$&$d_z$\\
$\succsim'_z$:&$[2\ 3]$&$1$&$d_z$&$d_x$&$d_y$
\end{tabular}\medskip

\begin{tabular}{lccccc}
$\succsim'_{\phi_1}$:&1&2&3&$[d_x\ d_y]$& $d_z$\\
$\succsim'_{\phi_2}$:&2&1&3&$[d_y\ d_z]$&$d_x$\\
$\succsim'_{\phi_3}$:&3&1&2&$[d_x\ d_z]$&$d_y$
\end{tabular}
\quad\quad
\begin{tabular}{lccccc}
$\succsim'_{d_x}$:&$x$&$y$&$z$&$[\phi_1\ \phi_3]$&$\phi_2$\\
$\succsim'_{d_y}$:&$y$&$x$&$z$&$[\phi_1\ \phi_2]$&$\phi_3$\\
$\succsim'_{d_z}$:&$z$&$x$&$y$&$[\phi_2\ \phi_3]$&$\phi_1$
\end{tabular}	
\end{center}
The associated random matching equals
\begin{center}
\begin{tabular}{c}
$p'=$\begin{blockarray}{ccccccccccc}
    &&\matindex{$x$} & \matindex{$y$} & \matindex{$z$}&&\matindex{$\phi_1$}&\matindex{$\phi_2$}&\matindex{$\phi_3$}\\
    \begin{block}{c(cccccccccc)}
   \matindex{$1$}& &$\nicefrac{1}{3}$   &$\nicefrac{1}{3}$  & $0$&$|$&$\nicefrac{1}{3}$&$0$&$0$&\\
   \matindex{$2$}& & $0$  & $\nicefrac{1}{3}$  &$\nicefrac{1}{3}$&$|$&$0$&$\nicefrac{1}{3}$&$0$\\
   \matindex{$3$} & & $\nicefrac{1}{3}$& $0$  & $\nicefrac{1}{3}$&$|$&$0$&$0$&$\nicefrac{1}{3}$\\
       &&---  & ---&--- &---&---&---&---&\\
   \matindex{$d_x$} && $\nicefrac{1}{3}$& $0$ & $0$&$|$&$\nicefrac{1}{3}$&$0$&$\nicefrac{1}{3}$&\\
       \matindex{$d_y$}& & $0$ &$\nicefrac{1}{3}$ & $0$&$|$&$\nicefrac{1}{3}$&$\nicefrac{1}{3}$&$0$&\\
         \matindex{$d_z$}& &$0$ & $0$ & $\nicefrac{1}{3}$&$|$&$0$&$\nicefrac{1}{3}$&$\nicefrac{1}{3}$&\\
    \end{block}
  \end{blockarray}
\end{tabular}										
\end{center}
and does not respect non-wastefulness.

One can now check for each $(a,b)\in N'\times O'$ that the fractional stability inequalities (\ref{LE4W}) are satisfied and hence $p'$ is fractionally stable. However, since $p$ is wasteful, it is not fractionally weakly stable.\hfill~$\diamond$
\end{example}

Next, in order to define claimwise weak stability for generalized random matchings, the notion of a claim can be adjusted as follows: using consumption process language, as long as agent $i$ consumes objects that are different from and not worse than $o$ he does not envy lower priority agent $j$ to consume fractions of $o$ and he does not mind fractions of $o$ to be unassigned, however, once the unassigned amounts of $o$ plus the amount lower priority agent $j$ has consumed reach agent $i$'s weak upper contour set at $o$ (not including $o$), agent $i$ either envies agent $j$ or complains about wastefulness (unless agent $i$ can fill his remaining probability quota with object $o$). An agent $i\in N$ has a \emph{claim} against an agent $j\in N$, if there exists an object $o\in O$ such that $(i,o)$ is an acceptable pair, $i\succ_o j$, and
\begin{equation}\label{weaklyLE7New}
p(j,o)+p(\emptyset,o)>\sum_{o':o'\succsim_i o;o'\neq o}p(i,o').
\end{equation}
Inequality (\ref{weaklyLE7New}) implies $\sum_{o':o'\succsim_i o}p(i,o')<1$, i.e., agent $i$ receives some fraction of an object in his strict lower contour set at $o$ or $i$ is not fully matched (if not, this would imply that $\sum_{o':o'\succsim_i o;o'\neq o}p(i,o')+p(i,o)=1$ and hence, $p(j,o)+p(\emptyset,o)+p(i,o)>1$; a contradiction). Thus, agent $i$ would want to consume more of object $o$.\medskip

A generalized random matching is \emph{claimwise weakly stable} if it is individually rational, non-wasteful, and does not admit any claim.

\begin{definition}[\textbf{Claimwise weak stability for generalized random matchings}]
\normalfont	A generalized random matching $p$ is \emph{\textbf{claimwise weakly stable}} if
$p$ is individually rational, non-wasteful, and
for each acceptable pair $(i,o)\in N\times O$ and each $j\in N$ such that $i\succ_oj$,
\begin{equation}\label{weakLE8general}
\sum_{o':o'\succsim_i o;o'\neq o}p(i,o') \geq  p(j,o)+ p(\emptyset,o).
\end{equation}
\label{def:weakgeneralclaimwisestable}
\end{definition}

With the next proposition and example we show that only one direction of the transformation between the base model and the most general model preserves claimwise weak stability, while the other does not. The intuitive reason that an equivalence result as in the case of fractional weak stability (Proposition~\ref{Prop:weakfractionalstableimpliesfractionalstable}) does not hold for claimwise weak stability (Proposition~\ref{Prop:weakclaimwisestableiffclaimwisestable}) is as follows: fractional weak stability is a \textit{symmetric} notion in that a violation that involves agent $i$ who would like more of object $o$ when facing lower priority agents is equivalent to a violation that involves object $o$ wanting more of agent $i$ when facing lower preferred objects while, in contrast, a claim is \textit{one-sidedly} defined by an agent $i$ wanting more of object $o$ when facing one lower priority agent without any implications for object $o$ wanting more of agent $i$ when facing one lower preferred object.

\begin{proposition}\label{Prop:weakclaimwisestableiffclaimwisestable}
The generalized random matching $p$ is claimwise weakly stable if  the associated random matching $p'$ is claimwise stable and respects non-wastefulness and individual rationality.
\end{proposition}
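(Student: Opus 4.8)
The plan is to verify, directly from Definition~\ref{def:weakgeneralclaimwisestable}, the three ingredients of claimwise weak stability of $p$: individual rationality, non-wastefulness, and the claim inequalities~(\ref{weakLE8general}). The first two are immediate from the hypotheses, since ``$p'$ respects individual rationality'' is defined to mean that $p$ is individually rational and ``$p'$ respects non-wastefulness'' is defined to mean that $p$ is non-wasteful. So all the work lies in establishing~(\ref{weakLE8general}) for an arbitrary acceptable pair $(i,o_j)\in N\times O$ and an arbitrary $k\in N$ with $i\succ_{o_j}k$.

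First I would transport such a potential claim of $p$ into the associated instance. Since $(i,o_j)$ is acceptable we have $o_j\succ_i\emptyset$ and $i\succ_{o_j}\emptyset$, so by the construction~(\ref{agentspref}) of $\succsim'_i$ and~(\ref{objectpri}) of $\succsim'_{o_j}$ we get $o_j\succ'_i\phi_i$ and, moreover, $i\succ'_{o_j}k$ regardless of whether $o_j$ finds $k$ acceptable (if $\emptyset\succ_{o_j}k$ then $i\succ_{o_j}\emptyset\succ_{o_j}k$; if $k\succ_{o_j}\emptyset$ then $i\succ_{o_j}k$ already sits inside the ``acceptable'' block of $\succsim'_{o_j}$, so strictness is preserved). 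Applying claimwise stability of $p'$ (in the sense of Definition~\ref{def:weaklyclaimwisestable}) to the pair $(i,o_j)\in N'\times O'$ and the agent $k\in N'$ gives
\[
\sum_{b':\,b'\succsim'_i o_j,\ b'\neq o_j}p'(i,b')\ \ge\ p'(k,o_j).
\]
Because $o_j\succ'_i\phi_i$, the elements of $O'$ weakly above $o_j$ in $\succsim'_i$ are exactly the objects $o'\in O$ with $o'\succsim_i o_j$, and $p'$ coincides with $p$ on $N\times O$; hence the inequality rewrites as $\sum_{o':\,o'\succsim_i o_j,\ o'\neq o_j}p(i,o')\ge p(k,o_j)$ (and if $o_j$ finds $k$ unacceptable, individual rationality of $p$ makes the right-hand side $0$, so this still holds).

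It remains to upgrade this inequality to~(\ref{weakLE8general}), that is, to add the unallocated mass $p(\emptyset,o_j)$ on the right-hand side. Here I would distinguish two cases by appealing to non-wastefulness of $p$ at the acceptable pair $(i,o_j)$, which says that $\sum_{o':o'\succsim_i o_j}p(i,o')\ge 1$ or $\sum_{l\in N}p(l,o_j)\ge1$. If $\sum_{l\in N}p(l,o_j)=1$, then $p(\emptyset,o_j)=0$ and~(\ref{weakLE8general}) is precisely the inequality already obtained. Otherwise $\sum_{o':o'\succsim_i o_j}p(i,o')=1$ (using feasibility), so $\sum_{o':\,o'\succsim_i o_j,\ o'\neq o_j}p(i,o')=1-p(i,o_j)$; then, since $i\neq k$ follows from $i\succ_{o_j}k$,
\[
p(k,o_j)+p(\emptyset,o_j)\ =\ 1-\sum_{l\in N\setminus\{k\}}p(l,o_j)\ \le\ 1-p(i,o_j)\ =\ \sum_{o':\,o'\succsim_i o_j,\ o'\neq o_j}p(i,o'),
\]
which is~(\ref{weakLE8general}). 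This exhausts all cases, so $p$ is claimwise weakly stable.

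The hard part — and the structural reason that only this implication, not its converse, is true — is exactly the bookkeeping of the term $p(\emptyset,o_j)$: a single claim inequality of $p'$ only bounds agent $i$'s upper-contour mass at $o_j$ by $p(k,o_j)$, never directly by $p(k,o_j)+p(\emptyset,o_j)$, reflecting the one-sided (asymmetric) nature of a claim as opposed to the symmetric fractional-stability inequalities treated in Proposition~\ref{Prop:weakfractionalstableimpliesfractionalstable}. The fix is to absorb $p(\emptyset,o_j)$ via non-wastefulness: when $o_j$ is not fully allocated, agent $i$'s upper-contour set at $o_j$ is already saturated, and a column count finishes the job. I would double-check the two borderline situations, $k$ unacceptable for $o_j$ and $p(i,o_j)=1$, but both are subsumed by the displayed inequalities.
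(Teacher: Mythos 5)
Your proof is correct and follows essentially the same route as the paper's: both rely on the transformation preserving the relevant preference/priority comparisons so that the claim inequality transfers between $p$ and $p'$, and both use non-wastefulness of $p$ to absorb the term $p(\emptyset,o_j)$. The only difference is presentational — you argue directly with a case split (where the case $\sum_{o':o'\succsim_i o_j}p(i,o')=1$ is settled by a feasibility count), while the paper argues by contradiction, in which that case never arises because a violation of (\ref{weakLE8general}) already forces $\sum_{o':o'\succsim_i o_j}p(i,o')<1$ and hence $p(\emptyset,o_j)=0$.
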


\begin{proof}[\textbf{Proof}]
Let $p$ be a generalized random matching and $p'$ its associated random matching. Let $p'$ be claimwise stable and respect non-wastefulness and individual rationality. Thus, $p$ is non-wasteful and individual rational. Suppose, by contradiction, that $p$ violates one of the inequalities (\ref{weakLE8general}). Then, for some acceptable pair $(i,o_j)\in N\times O$  and some agent $k\in N$ such that $k\prec_{o_j} i$,
\begin{equation*}
p(k,o_j)+p(\emptyset,o_j)>\sum_{o':o'\succsim_i o_j;o'\neq o_j}p(i,o').
\end{equation*}
Furthermore, $\sum_{o':o'\succsim_i o}p(i,o')<1$ and hence, by non-wastefulness, $p(\emptyset,o_j)=0$. Recall that
\begin{equation*}
\succsim'_{o_j}=\succsim_{o_j}\!\!\left(\{k\in N\midd k\succ_{o_j}\emptyset\}\right),\ d_j,\ lex\left(D\setminus \{d_j\}\right),\ \succsim_{o_j}\!\!\left(\{k\in N\midd \emptyset\succ_{o_j}k\}\right)
\end{equation*}
and
\begin{equation*}
\succsim'_{i}=\succsim_{i}\!\!\left(\{o\in O\midd o\succ_{i}\emptyset\}\right),\ \phi_i,\ lex\left(\Phi\setminus \{\phi_i\}\right),\ \succsim_{i}\!\!\left(\{o\in O\midd \emptyset\succ_{i}o\}\right).
\end{equation*}
Then, by the definition of $\succsim'$ and $p'$ (recall that $p(\emptyset,o_j)=p'(d_j,o_j)=0$),
\begin{equation*}
k\prec'_{o_j} i\mbox{ and }p(k,o_j)+p(\emptyset,o_j)=p'(k,o_j)
\end{equation*}	
and
\begin{equation*}
o_j\succ_{i}\emptyset\mbox{ implies }\sum_{o':o'\succsim_i o_j;o'\neq o_j}p(i,o')=\sum_{b':b'\succsim'_{i} o_j; b'\neq o_j}p'(b',i).
\end{equation*}
Hence, inequality $p(k,o_j)+p(\emptyset,o_j)>\sum_{o':o'\succsim_i o_j;o'\neq o_j}p(i,o')$  can be rewritten as
\begin{equation*}
p'(k,o_j)
> \sum_{b':b'\succsim'_{i} o_j; b'\neq o_j}p'(b',i),
\end{equation*}	
which contradicts that $p'$ was claimwise stable.
\end{proof}

\begin{example}[\textbf{A non-wasteful, individually rational, and claimwise weakly stable generalized random matching $\bm{p}$ but $p'$ is not claimwise stable}]\label{examle:weakclaimiseone-sided}\normalfont We reconsider the example used in the proof of Proposition~\ref{prop:claimwise-notto-expost}. Let $N=\{1,2,3\}$ and $O=\{x,y,z\}$. Consider the following preferences and priorities: 	
\begin{center}
\begin{tabular}{lcccc}
$\succ_1$:&$x$&$z$&$y$&$\emptyset$\\
$\succ_2$:&$y$&$x$&$z$&$\emptyset$\\
$\succ_3$:&$z$&$x$&$y$&$\emptyset$
\end{tabular}
\quad\quad
\begin{tabular}{lcccc}
$\succ_x$:&$2$&$3$&$1$&$\emptyset$\\
$\succ_y$:&$1$&$3$&$2$&$\emptyset$\\
$\succ_z$:&$2$&$1$&$3$&$\emptyset$
\end{tabular}	
\end{center}				
Let $p$ be the uniform random matching. Thus,
\begin{center}
\begin{tabular}{c}
$p=\begin{pmatrix}
\nicefrac{1}{3}&\nicefrac{1}{3}&\nicefrac{1}{3}\\
\nicefrac{1}{3}&\nicefrac{1}{3}&\nicefrac{1}{3}\\
\nicefrac{1}{3}&\nicefrac{1}{3}&\nicefrac{1}{3}
\end{pmatrix}.$
\end{tabular}
\end{center}	
Random matching $p$ is claimwise stable (see proof of Proposition~\ref{prop:claimwise-notto-expost}), individually rational, and non-wasteful.\medskip

The associated instance is $I'=(N',O',\succsim')$ where $N'=\{1,2,3,d_x,d_y,d_z\}$, $O'=\{x,y,z,\phi_1,\phi_2,\phi_3\}$, with preferences and priorities:
\begin{center}
\begin{tabular}{lcccccc}
$\succsim'_1$:&$x$& $z$&$y$&$\phi_1$&$\phi_2$&$\phi_3$\\
$\succsim'_2$:&$y$& $x$&$z$&$\phi_2$&$\phi_1$&$\phi_3$\\
$\succsim'_3$:&$z$& $x$&$y$&$\phi_3$&$\phi_1$&$\phi_2$\\
\end{tabular}
\quad\quad
\begin{tabular}{lcccccc}
$\succsim'_x$:&$2$&$3$&$1$&$d_x$&$d_y$&$d_z$\\
$\succsim'_y$:&$1$&$3$&$2$&$d_y$&$d_x$&$d_z$\\
$\succsim'_z$:&$2$&$1$&$3$&$d_z$&$d_x$&$d_y$
\end{tabular}\medskip

\begin{tabular}{lcccccc}
$\succsim'_{\phi_1}$:&1&2&3&$d_x$& $d_z$&$d_y$\\
$\succsim'_{\phi_2}$:&2&1&3&$d_y$& $d_x$&$d_z$\\
$\succsim'_{\phi_3}$:&3&1&2&$d_z$& $d_x$&$d_y$\\
\end{tabular}
\quad\quad
\begin{tabular}{lccccccc}
$\succsim'_{d_x}$:&$x$&$y$&$z$&$\phi_2$&$\phi_3$&$\phi_1$\\
$\succsim'_{d_y}$:&$y$&$x$&$z$&$\phi_1$&$\phi_3$&$\phi_2$\\
$\succsim'_{d_z}$:&$z$&$x$&$y$&$\phi_2$&$\phi_1$&$\phi_3$
\end{tabular}	
\end{center}
The associated random matching is
\begin{center}
\begin{tabular}{c}
$p'=$\begin{blockarray}{ccccccccccc}
    &&\matindex{$x$} & \matindex{$y$} & \matindex{$z$}&&\matindex{$\phi_1$}&\matindex{$\phi_2$}&\matindex{$\phi_3$}\\
    \begin{block}{c(cccccccccc)}
   \matindex{$1$}& &$\nicefrac{1}{3}$   &$\nicefrac{1}{3}$  & $\nicefrac{1}{3}$&$|$&$0$&$0$&$0$&\\
   \matindex{$2$}& & $\nicefrac{1}{3}$  & $\nicefrac{1}{3}$  &$\nicefrac{1}{3}$&$|$&$0$&$0$&$0$\\
   \matindex{$3$} & & $\nicefrac{1}{3}$& $\nicefrac{1}{3}$  & $\nicefrac{1}{3}$&$|$&$0$&$0$&$0$\\
       &&---  & ---&--- &---&---&---&---&\\
   \matindex{$d_x$} && $0$& $0$ & $0$&$|$&$\nicefrac{1}{3}$&$\nicefrac{1}{3}$&$\nicefrac{1}{3}$&\\
       \matindex{$d_y$}& & $0$ &$0$ & $0$&$|$&$\nicefrac{1}{3}$&$\nicefrac{1}{3}$&$\nicefrac{1}{3}$&\\
         \matindex{$d_z$}& &$0$ & $0$ & $0$&$|$&$\nicefrac{1}{3}$&$\nicefrac{1}{3}$&$\nicefrac{1}{3}$&\\
    \end{block}
  \end{blockarray}\ .
\end{tabular}
\end{center}
By definition, $p'$ respects non-wasteful and individually rational with respect to $p$. However, $p'$ is not claimwise stable: agent $d_x$ has a justified claim against $d_z$ for $\phi_2$ because $d_x\succ_{\phi_2} d_z$, $p'(d_z,\phi_2)=1/3$ and $\sum_{o':o'\succ_{d_x}  \phi_2}p'(i,o')=0$.\hfill~$\diamond$
\end{example}

Example~\ref{examle:wastefulfractionallystablep'} can also be used to demonstrate why we had to impose that $p'$ respects non-wastefulness in Proposition~\ref{Prop:weakclaimwisestableiffclaimwisestable}: the associated random matching $p'$ in the example is also claimwise weakly stable and does not respect non-wastefulness. Hence, the underlying generalized random matching $p$ is wasteful and hence not weakly claimwise stable. The following example demonstrates why we had to impose that $p'$ respects individual rationality in Proposition~\ref{Prop:weakclaimwisestableiffclaimwisestable}.

\begin{example}[\textbf{An individually irrational and claimwise weakly stable  associated random matching $\bm{p'}$}]\label{examle:associated-claimwise-notIR}\normalfont
Let $N=\{1,2\}$ and $O=\{x,y\}$. Consider the following preferences and priorities: 	
\begin{center}
\begin{tabular}{lcccc}
$\succ_1$:&$x$&$\emptyset$&$y$\\
$\succ_2$:&$[x,y]$&$\emptyset$\\
\end{tabular}
\quad\quad
\begin{tabular}{lcccc}
$\succ_x$:&$[1,2]$&$\emptyset$\\
$\succ_y$:&$[1,2]$&$\emptyset$\\
\end{tabular}
\end{center}
Let $p$ be the uniform random matching. Thus,
\begin{center}
\begin{tabular}{c}
$p=\begin{pmatrix}
\nicefrac{1}{2}&\nicefrac{1}{2}\\
\nicefrac{1}{2}&\nicefrac{1}{2}
\end{pmatrix}.$
\end{tabular}
\end{center}
Random matching $p$ is individually irrational, non-wasteful and satisfies inequalities (\ref{weakLE8general}) in the definition of claimwise weak stability.\medskip

The associated instance is $I'=(N',O',\succsim')$ where $N'=\{1,2,d_x,d_y\}$, $O'=\{x,y,\phi_1,\phi_2\}$, with preferences and priorities:
\begin{center}
\begin{tabular}{lcccccc}
$\succsim'_1$:&$x$&$\phi_1$&$\phi_2$&$y$&\\
$\succsim'_2$:&$[x,y]$&$\phi_1$&$\phi_2$
\end{tabular}
\quad\quad
\begin{tabular}{lcccccc}
$\succsim'_x$:&$[1,2]$&$d_x$&$d_y$\\
$\succsim'_y$:&$[1,2]$&$d_y$&$d_x$\\
\end{tabular}
\medskip\\
\begin{tabular}{lcccccc}
$\succsim'_{\phi_1}$:&1&2&$d_x$&$d_y$\\
$\succsim'_{\phi_2}$:&2&1&$[d_x,d_y]$
\end{tabular}
\quad\quad
\begin{tabular}{lccccccc}
$\succsim'_{d_x}$:&$x$&$y$&$[\phi_1,\phi_2]$\\
$\succsim'_{d_y}$:&$y$&$x$&$[\phi_1,\phi_2]$\\
\end{tabular}
\end{center}
The associated random matching equals
\begin{center}
\begin{tabular}{c}
$p'=$\begin{blockarray}{ccccccccc}
    &&\matindex{$x$} & \matindex{$y$} &&\matindex{$\phi_1$}&\matindex{$\phi_2$}\\
    \begin{block}{c(cccccccc)}
   \matindex{$1$}& &$\nicefrac{1}{2}$   &$\nicefrac{1}{2}$  & $|$&$0$&$0$\\
   \matindex{$2$}& & $\nicefrac{1}{2}$  & $\nicefrac{1}{2}$  &$|$&$0$&$0$\\
       &&---  & ---&--- &---&---\\
   \matindex{$d_x$} && $0$& $0$ & $|$&$\nicefrac{1}{2}$&$\nicefrac{1}{2}$&\\
       \matindex{$d_y$}& & $0$ &$0$ & $|$&$\nicefrac{1}{2}$&$\nicefrac{1}{2}$&\\
    \end{block}
  \end{blockarray}
\end{tabular}
\end{center}
and does not respect individual rationality. We argue that $p'$ is claimwise stable.
Agent~1 gets $1/2$ of $x$ and thus does not have a justified claim for $\phi_1$ or $\phi_2$ against $d_x$ or $d_y$.
Agent~2 gets a best possible outcome and thus has no justified claim.
Agents~$d_x$ or $d_y$ cannot have a justified claim against agent 1 or 2 because the latter have higher priority.
Finally, agents~$d_x$ and $d_y$ have no justified claim against each other.\hfill~$\diamond$
\end{example}

It follows easily that if we restrict attention to generalized deterministic matchings, then all the stability concepts for generalized random matchings coincide with standard weak stability (Definition~\ref{def:weakgeneralstable}). The proof of Proposition~\ref{generalproposition1weak} follows the same arguments as the proof of our previous Propositions~\ref{proposition1} and \ref{proposition1weak} and we therefore omit it.

\begin{proposition}\label{generalproposition1weak}
For generalized deterministic matchings, all the stability concepts for generalized random matchings with weak preferences and weak priorities coincide with weak stability for deterministic matchings.
\end{proposition}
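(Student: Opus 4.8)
The plan is to mirror, for generalized deterministic matchings, the chain of implications established in the proofs of Proposition~\ref{proposition1} and Proposition~\ref{proposition1weak}, adjusting only for the presence of unacceptability and the possibly unequal number of agents and objects. Recall that for a generalized deterministic matching, weak stability is by Proposition~\ref{Prop:weakStableWiffNJE-IR-NW} equivalent to the conjunction of individual rationality, non-wastefulness, and no envy. The goal is therefore to show that, when $p$ is a generalized deterministic matching, each of ex-ante weak stability, robust ex-post weak stability, ex-post weak stability, fractional weak stability, and claimwise weak stability for generalized random matchings (Definitions~\ref{def:weakgeneralexantestable}, \ref{def:weakgeneralrobustexpoststability}, \ref{def:weakgeneralexpostStability}, \ref{def:weakgeneralfractionalstable}, and \ref{def:weakgeneralclaimwisestable}) is equivalent to ``$p$ is individually rational, non-wasteful, and has no envy'' and hence to weak stability of $p$.

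The first direction is that weak stability of $p$ implies every one of the five concepts. First I would note that a deterministic matching has a unique decomposition, namely into itself, so for a generalized deterministic matching robust ex-post weak stability and ex-post weak stability both reduce to ``$p$ is non-wasteful and $p$ itself is weakly stable,'' which is exactly weak stability of $p$. For ex-ante weak stability I would invoke Proposition~\ref{Prop:weakex-anteStableWiffNJE-IR-NW}: ex-ante weak stability is equivalent to no ex-ante envy together with individual rationality and non-wastefulness, and for a deterministic $p$ no ex-ante envy (all probabilities are $0$ or $1$) coincides with no envy. For fractional and claimwise weak stability I would check that the defining inequalities~(\ref{weakgeneralLE4W}) and~(\ref{weakLE8general}) are satisfied by any weakly stable generalized deterministic matching: since $p$ is deterministic, non-wasteful, individually rational, and envy-free, the left-hand side of~(\ref{weakgeneralLE4W}) equals $1$ whenever agent $i$ weakly prefers $o$ to his match, and whenever $i$ strictly prefers $o$ to his match, no agent with lower priority than $i$ can hold $o$ (no envy) and $o$ must be allocated (non-wastefulness), so the sum of lower-priority shares plus $p(\emptyset,o)$ is $0$; the claimwise inequality follows similarly.

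The reverse direction is the one requiring a small amount of care, and it is where I would expect the only real obstacle. By the taxonomy of Figure~\ref{fig:weakassociated-random} (the generalized analogue of Figure~\ref{fig:weakpart-relations}), ex-ante weak stability $\Rightarrow$ robust ex-post weak stability $\Rightarrow$ ex-post weak stability $\Rightarrow$ fractional weak stability $\Rightarrow$ claimwise weak stability, so it suffices to show that a \emph{claimwise weakly stable} generalized deterministic matching $p$ is weakly stable. Such a $p$ is by definition individually rational and non-wasteful, so by Proposition~\ref{Prop:weakStableWiffNJE-IR-NW} it remains only to show $p$ has no envy. Suppose not: there are $i,j\in N$ and $o\in O$ with $\sum_{o':o'\succsim_i o}p(i,o')=0$, $p(j,o)=1$, and $i\succ_o j$. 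Individual rationality and $o\succ_i\emptyset$ make $(i,o)$ an acceptable pair (the envied object $o$ lies strictly above $\emptyset$ in $\succsim_i$, since $i$ receives nothing at least as good and is not matched to an unacceptable object). Then $p(j,o)+p(\emptyset,o)\ge 1 > 0 = \sum_{o':o'\succsim_i o;\,o'\neq o}p(i,o')$, so $i$ has a claim against $j$ for $o$, contradicting claimwise weak stability. Hence $p$ has no envy and is weakly stable. This closes the cycle and shows all five concepts coincide with weak stability for generalized deterministic matchings; the argument is a routine adaptation of the earlier proofs, which is why it is omitted in the main text.
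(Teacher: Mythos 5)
Your proposal is correct and takes essentially the approach the paper intends: the proof is omitted there precisely because it mirrors Propositions~\ref{proposition1} and~\ref{proposition1weak}, namely weak stability of a generalized deterministic matching yields all five generalized concepts, and the cycle is closed by showing that a claimwise weakly stable deterministic matching has no envy (using that claimwise weak stability already packages individual rationality and non-wastefulness, so Proposition~\ref{Prop:weakStableWiffNJE-IR-NW} finishes). One small caveat: your parenthetical justification that $o\succ_i\emptyset$ only covers the case where $i$ is matched; when $i$ is unmatched, $o\succ_i\emptyset$ is part of the intended reading of envy (an unmatched agent only envies for objects he finds acceptable, as in the envy rows of Table~\ref{table:weakviolations1}) rather than a consequence of individual rationality, and with that reading your claim construction for the acceptable pair $(i,o)$ goes through.
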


Our previous results (Figure~\ref{fig:weakpart-relations} together with Propositions~\ref{Prop:weakexpostStableiffexpostStable} -- \ref{Prop:weakclaimwisestableiffclaimwisestable}) now imply the following taxonomy of the stability concepts for generalized random matchings and their associated random matching in Figure~\ref{fig:weakassociated-random}.

\begin{figure}[H]
\begin{framed}
\textbf{Section~\ref{subsection:weakdifferentnumbers} results:} 	For any generalized  random matching $p$ and its associated random matching $p'$, we have\bigskip
\begin{center}
\scalebox{0.9}{
\begin{tikzpicture}
\tikzstyle{pfeil}=[->,>=angle 60, shorten >=1pt,draw]
\tikzstyle{onlytext}=[]
    \node[onlytext] (p) at (0,0) {\begin{tabular}{cl}
 & no ex-ante envy,\\     $p$ &and non-wastefulness,\\
 & individual rationality\\
 &(Defs.~\ref{def:weakgeneralnoexantejustifiedenvy}, \ref{def:weakgeneralnonwastefulness}, and \ref{def:weakgeneralindividualrationality})
\end{tabular}
};

    \node[onlytext] (p'exante) at (5,-4) {$p'$ ex-ante weakly stable (Def.~\ref{def:weakex-antestability})};
	\node[onlytext] (p'rexpost) at (5,-6) {\begin{tabular}{cl} &$p'$ robust ex-post weakly stable (Def.~\ref{def:weakrobustexpoststability})\\
	&and non-wasteful
	\end{tabular}
	};
	\node[onlytext] (p'expost) at (5,-8) {\begin{tabular}{cl}&$p'$ ex-post weakly stable (Def.~\ref{def:weakexpoststability})\\
	&and non-wasteful
	\end{tabular}
	};
	\node[onlytext] (p'fractional) at (5,-10) {\begin{tabular}{cl}&$p'$ fractionally  weakly stable (Def.~\ref{def:weakfractionalstability})\\
	&and non-wasteful
	\end{tabular}
	};
	
 \node[onlytext] (Ap'claimwise) at (1.2,-11.8) {};
	
		\node[onlytext] (p'claimwise) at (5,-12) {\begin{tabular}{cl} &$p'$ claimwise weakly stable (Def.~\ref{def:weaklyclaimwisestable}),\\
	&non-wasteful and individually rational
	\end{tabular}
		};

    \draw[pfeil, thick, blue, bend right] (p'exante) to (p'rexpost);
	\draw[pfeil, thick, red, bend right] (p'rexpost) to (p'exante);
	\draw[pfeil, thick, blue, bend right] (p'rexpost) to (p'expost);
	\draw[pfeil, thick, red, bend right] (p'expost) to (p'rexpost);
	\draw[pfeil, thick, red, bend right] (p'fractional) to (p'expost);
	\draw[pfeil, thick, blue, bend right] (p'expost) to (p'fractional);
	\draw[pfeil, thick, blue, bend right] (p'fractional) to (p'claimwise);
	\draw[pfeil, thick, red, bend right] (p'claimwise) to (p'fractional);
	
	\draw[thick, red] (5.25,-11.2) to (5.65,-10.9);
	\draw[thick, red] (5.25,-9.2) to (5.65,-8.9);
	\draw[thick, red] (5.25,-7.2) to (5.65,-6.9);
	\draw[thick, red] (5.25,-5.2) to (5.65,-4.7);
	
    \node[onlytext] (pexante) at (-5,-4) {$p$ ex-ante weakly stable (Def.~\ref{def:weakgeneralexantestable})};
	\node[onlytext] (prexpost) at (-5,-6) {$p$ robust ex-post weakly stable (Def.~\ref{def:weakgeneralrobustexpoststability})};
	\node[onlytext] (pexpost) at (-5,-8) {$p$ ex-post weakly stable (Def.~\ref{def:weakgeneralexpostStability})};
	\node[onlytext] (pfractional) at (-5,-10) {$p$ fractional  weakly stable (Def.~\ref{def:weakgeneralfractionalstable})};
	\node[onlytext] (pclaimwise) at (-5,-12) {$p$ claimwise stable (Def.~\ref{def:weakgeneralclaimwisestable})};
	
\node[onlytext] (Apclaimwise) at (-2.4,-11.8) {};

	\draw[pfeil, thick, blue, bend right] (pexante) to (prexpost);
	\draw[pfeil, thick, red, bend right] (prexpost) to (pexante);
	\draw[pfeil, thick, blue, bend right] (prexpost) to (pexpost);
	\draw[pfeil, thick, red, bend right] (pexpost) to (prexpost);
	\draw[pfeil, thick, red, bend right] (pfractional) to (pexpost);
	\draw[pfeil, thick, blue, bend right] (pexpost) to (pfractional);
	\draw[pfeil, thick, blue, bend right] (pfractional) to (pclaimwise);
	\draw[pfeil, thick, red, bend right] (pclaimwise) to (pfractional);

	\draw[thick, red] (-4.8,-11.2) to (-4.4,-10.8);
	\draw[thick, red] (-4.8,-9.2) to (-4.4,-8.8);
	\draw[thick, red] (-4.8,-7.2) to (-4.4,-6.8);
	\draw[thick, red] (-4.8,-5.2) to (-4.4,-4.8);

\draw[pfeil, thick, blue] (p) to (p'exante);
\draw[pfeil, thick, blue] (p'exante) to (p);
 \node[onlytext] (pp') at (3.7,-2) {Prop.~\ref{Prop:weakex-anteNJE-IR-NWiffex-anteStable}};

\draw[pfeil, thick, blue] (pexante) to (p'exante);
\draw[pfeil, thick, blue] (p'exante) to (pexante);
 \node[onlytext] (pp') at (0,-4.5) {Prop.~\ref{Prop:weakex-anteStableiffex-anteStable}};

 \draw[pfeil, thick, blue] (prexpost) to (p'rexpost);
 \draw[pfeil, thick, blue] (p'rexpost) to (prexpost);
   \node[onlytext] (pp') at (0,-6.5) {Prop.~\ref{Prop:weakrexpostStableiffrexpostStable}};

 \draw[pfeil, thick, blue] (pexpost) to (p'expost);
 \draw[pfeil, thick, blue] (p'expost) to (pexpost);
  
     \node[onlytext] (pp') at (0,-8.5)
    {Prop.~\ref{Prop:weakexpostStableiffexpostStable}};

 \draw[pfeil, thick, blue] (pfractional) to (p'fractional);
 \draw[pfeil, thick, blue] (p'fractional) to (pfractional);
  \node[onlytext] (pp') at (0,-10.5) {Prop.~\ref{Prop:weakfractionalstableimpliesfractionalstable}};

\node[onlytext] (pp') at (-0.3,-11.2) {Ex.~\ref{examle:weakclaimiseone-sided}};
\draw[thick, red] (-0.4,-11.6) to (-0.8,-12.0);
   \draw[pfeil, thick, red] (Apclaimwise) to (Ap'claimwise);
  \draw[pfeil, thick, blue] (p'claimwise) to (pclaimwise);
    \node[onlytext] (pp') at (0,-12.5) {Prop.~\ref{Prop:weakclaimwisestableiffclaimwisestable}};

\draw[pfeil, thick, blue] (p) to (pexante);
\draw[pfeil, thick, blue] (pexante) to (p);
 \node[onlytext] (pp-stable) at (-3.5,-2) {Prop.~\ref{Prop:weakex-anteStableWiffNJE-IR-NW}};

\end{tikzpicture}}
\end{center}	
\caption{Relations between stability concepts for generalized random matchings with weak preferences and weak priorities and equivalences of stability concepts for associated random matchings.}
\label{fig:weakassociated-random}
\end{framed}
\end{figure}

\newpage

\section{Conclusion}\label{section:conclusion}

We presented a taxonomy of stability concepts (ex-ante; robust ex-post, ex-post, fractional, and claimwise) for the most well-studied but restricted setting in which (1) preferences are strict, (2) priorities are strict, (3) there is an equal number of agents and objects, (4) all objects and agents are acceptable to each other. The formalization lead to a clear picture of the hierarchy of stability concepts. We then extended these concepts to the most general model that has none of the restrictions (1) -- (4).
We formalized the stability concepts  with the appropriate additional requirements of non-wastefulness and/or individual rationality when necessary to preserve the hierarchy we established in the base model.
We found that it was an extremely subtle task to identify when additionally requiring individual rationality or non-wastefulness is redundant or when it is critical to preserve the logical relations and characterizations that were identified in the base model.
We also took these factors into account when obtaining our characterization results for preserving stability concepts when transforming the most general model to the base model.
Throughout the paper, we complement our results with minimal examples where converse statements do not hold or when a certain characterization cannot be extended.
We are hopeful that the groundwork in this paper will provide the base for further market design and axiomatic work on probabilistic matching under priorities.

\appendix

\section{Appendix: Weak and strong stochastic dominance stability \citep{Manj13a} re-examined}\label{appendixstochstab}

In this section, we point out connections with weak and strong stochastic dominance (sd) stable matchings as studied by \citet{Manj13a} for the base model as introduced in Section~\ref{section:model} (with an equal number of agents and objects and strict preferences / priorities). Note that our model involves ordinal preferences of agents over objects and ordinal priorities of objects over agents. These preferences / priorities can be extended to preferences / priorities over random allocations via the first order stochastic dominance relation.

\newpage

{\begin{definition}[\textbf{First order stochastic dominance}]\normalfont
Given two random matchings $p$ and $q$ and an agent $i\in N$ with preference $o_1\succ_i o_2 \succ_i \ldots \succ_i o_n$ over $O=\{o_1,\ldots,o_n\}$, we say that agent $i$ \textbf{\emph{$\bm{\sd}$-prefers}} match $p(i)$ to match $q(i)$, denoted by $p(i) \succsim_i^{\sd} q(i)$, if and only if,
\[\begin{array}{rcl}
                       p(i,o_1) & \geq & q(i,o_1) \\
                       p(i,o_1) + p(i,o_2) & \geq & q(i,o_1) + q(i,o_2) \\
                       p(i,o_1) + p(i,o_2) + p(i,o_3) & \geq & q(i,o_1) + q(i,o_2) + q(i,o_3)\\
                       &\vdots&
                     \end{array}\]
If $p(i) \succsim_i^{\sd} q(i)$ and $p(i)\neq q(i)$, then $p(i) \succ_i^{\sd} q(i)$.\medskip

Given two random matchings $p$ and $q$ and an object $o\in O$ with priorities $i_1\succ_o i_2 \succ_o\ldots \succ_o i_n$ over $N=\{i_1,\ldots, i_n\}$, we say that object $o$ \textbf{\emph{$\bm{\sd}$-prioritizes}} match $p(o)$ to match $q(o)$, denoted by $p(o) \succsim_o^{\sd} q(o)$, if and only if
\[\begin{array}{rcl}
                       p(i_1,o) & \geq & q(i_1,o) \\
                       p(i_1,o) + p(i_2,o) & \geq & q(i_1,o) + q(i_2,o) \\
                       p(i_1,o) + p(i_2,o) + p(i_3,o) & \geq & q(i_1,o) + q(i_2,o) + q(i_2,o)\\
                       &\vdots&
                     \end{array}\]
	          If $p(o) \succsim_o^{\sd} q(o)$ and $p(o)\neq q(o)$, then $p(o) \succ_o^{\sd} q(o)$.
\end{definition}

The definitions of Manjunath's weak and strong stochastic dominance stability are based on the following two pairwise blocking notions.

\begin{definition}[\textbf{Weak and strong (pairwise) sd-blocking; \citeauthor{Manj13a}, \citeyear{Manj13a}}]\normalfont A random matching $p$ is \textbf{\emph{weakly sd-blocked}} by pair $(i,o)\in N\times O$ if there exists a corresponding deterministic matching $q\neq p$ such that $q(i,o)=1$ and $$\mbox{neither }p(i)\succ_i^{sd}q(i)\mbox{ nor }p(o)\succ_o^{sd}q(o).$$
A random matching $p$ is \textbf{\emph{strongly sd-blocked}} by pair $(i,o)\in N\times O$ if there exists a corresponding deterministic matching $q\neq p$ such that $q(i,o)=1$ and $$q(i)\succ_i^{sd}p(i)\mbox{ and }q(o)\succ_o^{sd}p(o).$$
\end{definition}

\begin{definition}[\textbf{Weak and strong sd-stability; \citeauthor{Manj13a}, \citeyear{Manj13a}}]\normalfont A random matching $p$ is \textbf{\emph{weakly sd-stable}} if there exists no pair $(i,o)\in N\times O$ that strongly sd-blocks $p$.

A random matching $p$ is \textbf{\emph{strongly sd-stable}} if there exists no pair $(i,o)\in N\times O$ that weakly sd-blocks $p$.
\end{definition}
	
\begin{proposition}
A random matching is strongly sd-stable if and only if it is ex-ante stable. 	
\end{proposition}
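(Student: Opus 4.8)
The plan is to prove the equivalence by comparing both notions to the Aharoni--Fleiner condition, i.e.\ to the requirement that for every pair $(i,o)\in N\times O$ we have $\sum_{o':o'\succsim_i o}p(i,o')=1$ or $\sum_{j:j\succsim_o i}p(j,o)=1$; by Definition~\ref{def:AFfractionalstability} and the proposition immediately following it, this is exactly ``$p$ has no ex-ante envy''. Since, by definition, $p$ is strongly sd-stable precisely when no pair weakly sd-blocks $p$, it suffices to show that a pair $(i,o)$ weakly sd-blocks $p$ if and only if $\sum_{o':o'\succsim_i o}p(i,o')<1$ and $\sum_{j:j\succsim_o i}p(j,o)<1$. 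The workhorse is an elementary observation: if $q$ is a deterministic matching with $q(i,o)=1$, then $q(i)$ is the unit vector on $o$ and $q(o)$ the unit vector on $i$, so, reading off the cumulative-sum definition of $\succsim^{sd}_i$ along $i$'s preference order, $p(i)\succsim^{sd}_i q(i)$ holds iff $\sum_{o':o'\succsim_i o}p(i,o')=1$, and hence $p(i)\succ^{sd}_i q(i)$ holds iff in addition $p(i)\neq q(i)$, i.e.\ iff $\sum_{o':o'\succsim_i o}p(i,o')=1$ and $p(i,o)<1$; symmetrically, $p(o)\succ^{sd}_o q(o)$ iff $\sum_{j:j\succsim_o i}p(j,o)=1$ and $p(i,o)<1$.

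For the ``if'' direction I would argue as follows: suppose $\sum_{o':o'\succsim_i o}p(i,o')<1$ and $\sum_{j:j\succsim_o i}p(j,o)<1$; the first inequality already forces $p(i,o)<1$. Complete the partial assignment $i\mapsto o$ to a deterministic matching $q$ by matching the remaining $n-1$ agents to the remaining $n-1$ objects arbitrarily; then $q(i,o)=1>p(i,o)$, so $q\neq p$, and by the observation neither $p(i)\succ^{sd}_i q(i)$ nor $p(o)\succ^{sd}_o q(o)$, so $(i,o)$ weakly sd-blocks $p$. For the ``only if'' direction, let $(i,o)$ weakly sd-block $p$ via a deterministic $q\neq p$ with $q(i,o)=1$ such that neither $p(i)\succ^{sd}_i q(i)$ nor $p(o)\succ^{sd}_o q(o)$; here ``$q\neq p$'' is a genuine deviation at the blocking pair, so $p(i,o)<1$. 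Then, by the observation, the failure of $p(i)\succ^{sd}_i q(i)$ (given $p(i,o)<1$) forces $\sum_{o':o'\succsim_i o}p(i,o')<1$, and the failure of $p(o)\succ^{sd}_o q(o)$ forces $\sum_{j:j\succsim_o i}p(j,o)<1$. This is the negation of the Aharoni--Fleiner condition at $(i,o)$, so $p$ has ex-ante envy. Combining the two directions gives the claim.

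I expect the only delicate point to be the degenerate coordinates with $p(i,o)=1$: if one read ``$q\neq p$'' as asking only that $q$ differ from $p$ somewhere outside row $i$ and column $o$, then a pair with $p(i,o)=1$ would vacuously satisfy ``neither $p(i)\succ^{sd}_i q(i)$ nor $p(o)\succ^{sd}_o q(o)$'' and would register as a spurious weak sd-block even for an ex-ante stable $p$. The argument above (and the intended meaning of weak sd-blocking) resolves this by treating the witnessing $q$ as a genuine deviation at the blocking pair, i.e.\ $q(i,o)=1>p(i,o)$, and I would make this reading explicit at the outset. The remaining steps --- the cumulative-sum characterisation of $\succsim^{sd}$ against a degenerate lottery, and completing a partial assignment to a full deterministic matching --- are routine.
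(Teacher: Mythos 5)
Your proof is correct and takes essentially the same route as the paper's: both directions reduce a weak sd-block at a pair $(i,o)$ to the two inequalities $\sum_{o':o'\succsim_i o}p(i,o')<1$ and $\sum_{j:j\succsim_o i}p(j,o)<1$ and back, and your detour through the Aharoni--Fleiner condition is just the paper's earlier equivalence with no ex-ante envy restated. The only substantive addition is your explicit treatment of the degenerate case $p(i,o)=1$: the paper's proof silently assumes it away when asserting that $\sum_{o':o'\succsim_i o}p(i,o')=1$ would imply $p(i)\succ_i^{sd}q(i)$ (which requires $p(i)\neq q(i)$), so making that reading of weak sd-blocking explicit, as you do, is a legitimate and welcome clarification rather than a departure.
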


\begin{proof}[\textbf{Proof}]Suppose random matching $p$ has ex-ante envy. Then, there exist $i,j\in N$ and $o,o'\in O$ such that $p(i,o')>0$, $p(j,o)>0$, $o\succ_i o'$, and $i\succ_o j$. Consider a corresponding deterministic matching $q\neq p$ such that $q(i,o)=1$. Thus, neither $p(i)\succ_i^{sd}q(i)$ nor $p(o)\succ_o^{sd}q(o)$. Hence, $p$ is weakly sd-blocked by pair $(i,o)$ and not strongly sd-stable.

Suppose random matching $p$ is not strongly sd-stable. Then, there exists a pair $(i,o)\in N\times O$ that weakly sd-blocks $p$, i.e., there exists a corresponding deterministic matching $q\neq p$ such that $q(i,o)=1$ and neither $p(i)\succ_i^{sd}q(i)$ nor $p(o)\succ_o^{sd}q(o)$. Note $\sum_{o':o'\succsim_i o}p(i,o')=1$ would imply $p(i)\succ_i^{sd}q(i)$ and  $\sum_{j:j\succsim_o i}p(j,o)=1$ would imply $p(o)\succ_o^{sd}q(o)$. Thus, $\sum_{o':o'\succsim_i o}p(i,o')<1$ and $\sum_{j:j\succsim_o i}p(j,o)<1$. Then, there exist $j\in N$ and $o'\in O$ such that $p(i,o')>0$, $p(j,o)>0$, $o\succ_i o'$, and $i\succ_o j$. Hence, $p$ is not ex-ante stable.	\end{proof}
	
\begin{proposition}
If a random matching $p$ is claimwise weakly stable, then it is weakly sd-stable.
\end{proposition}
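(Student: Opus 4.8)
The plan is to argue by contraposition: assume the random matching $p$ is \emph{not} weakly sd-stable and derive that $p$ is not claimwise weakly stable. So suppose there is a pair $(i,o)\in N\times O$ that weakly sd-blocks $p$; that is, there is a deterministic matching $q\neq p$ with $q(i,o)=1$ such that neither $p(i)\succ_i^{sd}q(i)$ nor $p(o)\succ_o^{sd}q(o)$. From $q(i,o)=1$ and the failure of $p(i)\succ_i^{sd}q(i)$, I would first extract that $i$ does not already receive $o$ or something better with probability one, i.e.\ $\sum_{o':o'\succsim_i o}p(i,o')<1$; this is essentially the observation used in the strongly-sd-stable proof above, since $\sum_{o':o'\succsim_i o}p(i,o')=1$ would force $p(i)\succsim_i^{sd}q(i)$ and hence (as $q\neq p$ and $q$ puts all mass at or below $o$) $p(i)\succ_i^{sd}q(i)$.

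Next I would use the failure of $p(o)\succ_o^{sd}q(o)$ on the object side. Since $q(i,o)=1$, the vector $q(o)$ is the indicator of agent $i$. If $\sum_{j:j\succsim_o i}p(j,o)=1$, then $p(o)$ stochastically dominates the indicator of $i$ from the object's viewpoint, and since $p\neq q$ this domination is strict, contradicting the blocking hypothesis. Hence $\sum_{j:j\succsim_o i}p(j,o)<1$, which means there is at least one agent $j$ with $j\prec_o i$ and $p(j,o)>0$. Now I need to connect this to the claimwise inequality. The claimwise weak stability condition (\ref{weakLE8general}) asks that for every acceptable pair $(i,o)$ and every $j$ with $i\succ_o j$, $\sum_{o':o'\succsim_i o;o'\neq o}p(i,o')\geq p(j,o)+p(\emptyset,o)$. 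The key arithmetic step is: from $\sum_{o':o'\succsim_i o}p(i,o')<1$ I get $\sum_{o':o'\succsim_i o;o'\neq o}p(i,o')+p(i,o)<1$, while feasibility on object $o$ gives $p(i,o)+\sum_{j:j\prec_o i}p(j,o)+p(\emptyset,o)+\sum_{j:j\succ_o i}p(j,o)=1$. Comparing these and using $\sum_{j:j\succ_o i}p(j,o)\geq 0$, I can hope to produce a single agent $j$ with $i\succ_o j$ violating the claimwise bound — but this requires a little care, because the claimwise condition involves a \emph{single} lower-priority agent $j$, not the sum over all of them.

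The main obstacle is exactly this gap between the "aggregate" information coming from the sd-blocking hypothesis and the "pointwise-in-$j$" form of claimwise stability. Weak sd-blocking only tells me $\sum_{j:j\succsim_o i}p(j,o)<1$, i.e.\ that \emph{collectively} the lower-priority agents plus the unassigned mass exceed $\sum_{o':o'\succsim_i o;o'\neq o}p(i,o')$ — this is precisely a violation of the \emph{fractional} weak stability inequality (\ref{weakgeneralLE6}), not immediately of claimwise stability. To close the argument I would need to also invoke that the pair $(i,o)$ is acceptable (which follows since otherwise individual rationality, part of claimwise weak stability being assumed, would force $p(i,o)=0$ and $p(j,o)$ could still be positive — so acceptability needs a separate check, using that $q(i,o)=1$ and the blocking hypothesis rule out $p(i)\succ_i^{sd}q(i)$ when $o$ is unacceptable to $i$, since then $p(i)$ would put zero mass on $o$). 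Granting acceptability and non-wastefulness (also part of the hypothesis), non-wastefulness forces $p(\emptyset,o)=0$ whenever $\sum_{o':o'\succsim_i o}p(i,o')<1$, so the violation reduces to $\sum_{j:j\prec_o i}p(j,o)>\sum_{o':o'\succsim_i o;o'\neq o}p(i,o')$. I would then pick the single highest-priority such $j$ — wait, that is not enough either; instead I observe that Proposition~\ref{prop:weakfractional-to-claimwise} already establishes fractional weak stability implies claimwise weak stability, so the cleanest route is: show weak sd-blocking of $(i,o)$ produces a violation of fractional weak stability (inequality (\ref{weakgeneralLE6})) at $(i,o)$, hence $p$ is not fractionally weakly stable, hence by Proposition~\ref{prop:weakfractional-to-claimwise} (in its generalized form) $p$ is not claimwise weakly stable. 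Thus the proof factors through fractional weak stability, and the only genuinely new content is the derivation of (\ref{weakgeneralLE6}) from the two failed sd-dominances together with the acceptability and non-wastefulness bookkeeping.
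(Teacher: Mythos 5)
There is a genuine gap, and it is at the very first step: you negate the wrong definition. In Manjunath's terminology as used in the paper, a random matching is \emph{weakly sd-stable} if no pair \emph{strongly} sd-blocks it (and \emph{strongly sd-stable} if no pair \emph{weakly} sd-blocks it). So assuming ``$p$ is not weakly sd-stable'' gives you a pair $(i,o)$ and a deterministic $q\neq p$ with $q(i,o)=1$ such that $q(i)\succ_i^{sd}p(i)$ \emph{and} $q(o)\succ_o^{sd}p(o)$. What you assume instead --- a pair that weakly sd-blocks $p$, i.e.\ neither $p(i)\succ_i^{sd}q(i)$ nor $p(o)\succ_o^{sd}q(o)$ --- is the negation of \emph{strong} sd-stability, a much weaker hypothesis. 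This is not a cosmetic slip: from a weak sd-block all you can extract is $\sum_{o':o'\succsim_i o}p(i,o')<1$ and $\sum_{j:j\succsim_o i}p(j,o)<1$, which is exactly ex-ante envy, and ex-ante envy is compatible with claimwise stability (the matching $q$ in the proof of Proposition~\ref{prop:rexpost-notto-exante} is robust ex-post, hence fractionally, hence claimwise stable, yet has ex-ante envy and therefore admits a weak sd-block). So no amount of bookkeeping can carry your hypothesis to a claimwise violation; concretely, your ``key arithmetic step'' only yields $\sum_{o':o'\succsim_i o;o'\neq o}p(i,o')<\sum_{j:j\prec_o i}p(j,o)+\sum_{j:j\succ_o i}p(j,o)$, and you have no control over $\sum_{j:j\succ_o i}p(j,o)$, so the fractional-stability violation you aim for (and the detour through Proposition~\ref{prop:weakfractional-to-claimwise}) cannot be reached --- as it must not be, since it would prove the false implication ``claimwise stable $\Rightarrow$ strongly sd-stable / ex-ante stable''.

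With the correct hypothesis the proof is short and needs no fractional detour: $q(i)\succ_i^{sd}p(i)$ with $q(i,o)=1$ forces $\sum_{o':o'\succ_i o}p(i,o')=0$, and $q(o)\succ_o^{sd}p(o)$ forces $\sum_{k:k\succ_o i}p(k,o)=0$ together with $p(i,o)<1$; since the appendix works in the base model (bistochastic matchings, strict preferences and priorities, everyone acceptable), object $o$'s remaining probability must go to some $j$ with $i\succ_o j$ and $p(j,o)>0$, so $p(j,o)>0=\sum_{o':o'\succ_i o}p(i,o')$ is a claim of $i$ against $j$, contradicting claimwise stability. For the same reason, the acceptability, non-wastefulness and $p(\emptyset,o)$ bookkeeping you invoke from the generalized model is vacuous here.
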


\begin{proof}[\textbf{Proof}]Suppose random matching $p$ is not weakly sd-stable. Then, there exists a pair $(i,o)\in N\times O$ that strongly sd-blocks $p$, i.e., there exists a corresponding deterministic matching $q\neq p$ such that $q(i,o)=1$ and $q(i)\succ_i^{sd}p(i)$ and $q(o)\succ_o^{sd}p(o)$. Note that $q(o)\succ_o^{sd}p(o)$ implies $\sum_{j:k\succ_o i}p(k,o)=0$  and $p(i,o)<1$. Hence, there exists an agent $j\in N$ such that $i\succ_o j$ and $p(j,o)>0$. Furthermore, $q(i)\succ_i^{sd}p(i)$ implies $\sum_{o':o'\succ_io}p(i,o')=0$. Thus, $p(j,o)>\sum_{o':o'\succ_io}p(i,o')$ and agent $i$ has a claim against agent $j$ and $p$ is not claimwise stable.
\end{proof}

\begin{proposition}\label{prop:sdstability-notto-claimwise}
Weak sd-stability does not imply claimwise stability.
\end{proposition}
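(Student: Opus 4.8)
The plan is to exhibit a single random matching $p$ that is weakly sd-stable but admits a claim. The natural candidate is the uniform random matching on a suitable three-agent, three-object instance, obtained by a small modification of the instance used in the proof of Proposition~\ref{prop:claimwise-notto-expost}. The key observation I would use is that for the \emph{uniform} matching $p$ (every entry equal to $\nicefrac{1}{3}$), a pair $(i,o)$ can strongly sd-block $p$ only if $o$ is agent~$i$'s most preferred object: if $q$ is deterministic with $q(i,o)=1$, then $q(i)=e_o$, and $e_o\succ_i^{\sd}p(i)$ forces $\sum_{o':o'\succ_i o}p(i,o')=0$, which for the uniform matching means no object is strictly better than $o$ for~$i$; dually, $q(o)=e_i$ and $e_i\succ_o^{\sd}p(o)$ forces $\sum_{j:j\succ_o i}p(j,o)=0$, i.e., agent~$i$ is the highest-priority agent of object~$o$. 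Hence the uniform matching is weakly sd-stable as soon as no agent's top object ranks that agent first.

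Concretely I would take $N=\{1,2,3\}$, $O=\{x,y,z\}$ with agent~$1$'s top object $x$, agent~$2$'s top object $y$, agent~$3$'s top object $z$, and priorities whose highest-priority agents are, say, $2$ for~$x$, $1$ for~$y$, and $2$ for~$z$ --- none of which coincides with the agent who ranks that object first. By the observation above the uniform matching is then weakly sd-stable, and only the three ``diagonal'' pairs $(1,x),(2,y),(3,z)$ need to be inspected, since every other pair already violates $\sum_{o':o'\succ_i o}p(i,o')=0$. To break claimwise stability I additionally arrange that some agent is \emph{not} the lowest-priority agent of his top object; for instance, choosing $\succ_x\colon 2\ 1\ 3$ keeps $2$ as $x$'s highest-priority agent (so weak sd-stability is preserved) while placing agent~$1$ above agent~$3$ at~$x$. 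Then $p(3,x)=\nicefrac{1}{3}>0=\sum_{o':o'\succ_1 x}p(1,o')$, so agent~$1$ has a claim against agent~$3$ for object~$x$ and $p$ is not claimwise stable.

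The write-up then reduces to two short verifications: (i) no strong sd-block --- for each of the three diagonal pairs the relevant object gives positive probability to a strictly higher-priority agent, so $q(o)\succ_o^{\sd}p(o)$ fails; and (ii) the explicit claim just described. Neither step involves any real computation.

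The step I would be most careful about is ensuring that the modification creating the claim does not simultaneously create a strong sd-block. This is exactly why it matters that changing $\succ_x$ leaves the \emph{top} of $\succ_x$ unchanged: the strong-sd-blocking test at $(1,x)$ only cares whether agent~$1$ is $x$'s highest-priority agent (it still is not), whereas a claim only requires agent~$1$ to outrank \emph{some} agent at~$x$. I would flag this trade-off explicitly so that it is clear the construction is not accidental, which also makes transparent that weak sd-stability is strictly weaker than claimwise stability, complementing the converse implication proven above.
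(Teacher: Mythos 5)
Your proposal is correct. The only divergence from the paper is in the choice of counterexample: the paper keeps the instance from the proof of Proposition~\ref{prop:claimwise-notto-expost} unchanged (where the uniform matching is claimwise stable, since each agent has lowest priority at his top object) and instead perturbs the \emph{matching} to one with entries $\nicefrac{1}{4}$ and $\nicefrac{1}{2}$, exhibiting agent~$2$'s claim against agent~$3$ for $x$; you instead keep the \emph{uniform} matching and perturb the \emph{priority} $\succ_x$ to $2\ 1\ 3$, so that agent~$1$ outranks agent~$3$ at $x$ while $2$ remains $x$'s top-priority agent. Your general observation -- that for the uniform matching a pair $(i,o)$ can strongly sd-block only if $o$ is $i$'s top object and $i$ is $o$'s top-priority agent, since all entries are strictly positive -- is exactly the right reduction, and your two verifications mirror the paper's two bullet points (object-side dominance fails at top objects because a higher-priority agent receives positive probability; agent-side dominance fails at non-top objects because the agent receives his top object with positive probability). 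What your route buys is a slightly cleaner sufficient criterion for weak sd-stability of the uniform matching and a transparent explanation of why the construction is not accidental; what the paper's route buys is reuse of a single instance across Propositions~\ref{prop:claimwise-notto-expost} and \ref{prop:sdstability-notto-claimwise}. For the write-up you should fully specify one concrete instance (the parts of the preferences and priorities below the top positions are irrelevant to both verifications, but they should still be fixed), which is a purely presentational point, not a gap.
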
	
\begin{proof}[\textbf{Proof}]
Let $N=\{1,2,3\}$ and $O=\{x,y,z\}$. Consider the following preferences and priorities: 	
\begin{center}
\begin{tabular}{lccc}
$\succ_1$:&$x$&$z$&$y$\\
$\succ_2$:&$y$&$x$&$z$\\
$\succ_3$:&$z$&$x$&$y$
\end{tabular}
\quad\quad
\begin{tabular}{lccc}
$\succ_x$:&$2$&$3$&$1$\\
$\succ_y$:&$1$&$3$&$2$\\
$\succ_z$:&$2$&$1$&$3$
\end{tabular}	
\end{center}
Consider the random matching
\begin{center}
\begin{tabular}{c}
$p=\begin{pmatrix}
\nicefrac{1}{4}&\nicefrac{1}{2}&\nicefrac{1}{4}\\
\nicefrac{1}{4}&\nicefrac{1}{4}&\nicefrac{1}{2}\\
\nicefrac{1}{2}&\nicefrac{1}{4}&\nicefrac{1}{4}
\end{pmatrix}.$
\end{tabular}										
\end{center}	
First, note that agent 2 wants more of object $x$, $2\succ_x 3$, and $p(3,x)=\frac{1}{2}>\frac{1}{4}=\sum_{o:o\succ_2x}p(2,o)$. Hence, agent 2 has a claim against agent 3 and $p$ is not claimwise stable.

Second, we show that random matching $p$ is weakly sd-stable by checking that for no pair $(i,o)\in N\times O$ with corresponding deterministic matching $q\neq p$ such that $q(i,o)=1$, $q(i)\succ_i^{sd}p(i)$ and $q(o)\succ_o^{sd}p(o).$
\begin{itemize}
\item For an agent $i\in N$ and his most preferred object $o\in O$, $q(o)\not\succ_o^{sd}p(o)$ because all other agents have higher priority for that object.
\item For an agent $i\in N$ and his second or third preferred object, $q(i)\not\succ_i^{sd}p(i)$ because agent $i$ receives his best object with positive probability.
\end{itemize}				
\end{proof}
\bibliographystyle{chicago}

\end{document}